\documentclass[11pt]{article}
\usepackage[margin=1in]{geometry}
\usepackage{pig}
\usepackage{graphicx}
\usepackage{enumerate}

\usepackage{authblk}
\title{How Robust are Reconstruction Thresholds for Community Detection?}

\author[1,2]{Ankur Moitra\thanks{Email: {\tt moitra@mit.edu}. This work is supported in part by NSF CAREER Award CCF-1453261, a grant from the MIT NEC Corporation and a Google Faculty Research Award.}}
\author[1]{William Perry\thanks{Email: {\tt wperry@mit.edu}.}}
\author[1]{Alexander S.\ Wein\thanks{Email: {\tt awein@mit.edu}. This work is supported in part by an NDSEG graduate fellowship.}}
\affil[1]{Massachusetts Institute of Technology, Department of Mathematics}
\affil[2]{Massachusetts Institute of Technology, Computer Science and Artificial Intelligence Lab}

\begin{document}

\begin{titlepage}
\maketitle\thispagestyle{empty} 

\begin{abstract}
The stochastic block model is one of the oldest and most ubiquitous models for studying clustering and community detection. In an exciting sequence of developments, motivated by deep but non-rigorous ideas from statistical physics, Decelle et al.\ \cite{decelle} conjectured a sharp threshold for when community detection is possible in the sparse regime. Mossel, Neeman and Sly \cite{mns} and Massouli\'e \cite{massoulie} proved the conjecture and gave matching algorithms and lower bounds. 

Here we revisit the stochastic block model from the perspective of semirandom models where we allow an adversary to make `helpful' changes that strengthen ties within each community and break ties between them. We show a surprising result that these `helpful' changes can shift the information-theoretic threshold, making the community detection problem strictly harder. We complement this by showing that an algorithm based on semidefinite programming (which was known to get close to the threshold) continues to work in the semirandom model (even for partial recovery). This suggests that algorithms based on semidefinite programming are robust in ways that {\em any} algorithm meeting the information-theoretic threshold 
cannot be.

These results point to an interesting new direction: Can we find robust, semirandom analogues to some of the classical, average-case thresholds in statistics? We also explore this question in the broadcast tree model, and we show that the viewpoint of semirandom models can help explain why some algorithms are preferred to others in practice, in spite of the gaps in their statistical performance on random models.
\end{abstract}

\end{titlepage}

\section{Introduction}

\subsection{Background}

The stochastic block model is one of the oldest and most ubiquitous models for studying clustering and community detection. It was first introduced by Holland et al.\ \cite{holland} more than thirty years ago and since then it has received considerable attention within statistics, computer science, statistical physics and information theory. 
The model defines a procedure to generate a random graph: first, each of the $n$ nodes is independently assigned to one of $r$ communities, where $p_i$ is the probability of being assigned to community $i$. Next, edges are sampled independently based on the community assignments: if nodes $u$ and $v$ belong to communities $i$ and $j$ respectively, the edge $(u,v)$ occurs with probability $Q_{ij}$ independent of all other edges, where $Q$ is an $r \times r$ symmetric matrix. The goal is to recover the community structure --- either exactly or approximately --- from the graph.

Since its introduction, the stochastic block model has served as a testbed for the diverse range of algorithms that have been developed for clustering and community detection, including combinatorial methods \cite{bui}, spectral methods \cite{mcsherry}, MCMC \cite{jerrum}, semidefinite programs \cite{boppana} and belief propagation \cite{MNSbelief}. Recently, the stochastic block model has been thrust back into the spotlight, the goal being to establish tight thresholds for when community detection is possible, and to find algorithms that achieve them. Towards this end, Decelle et al.\ \cite{decelle} made some fascinating conjectures (which have since been resolved) in the case of two equal-sized communities with constant average degree, that we describe next.

Throughout this paper, we will focus on the two-community case and use $a/n$ and $b/n$ to denote the within-community and between-community connection probabilities respectively. We will be interested in the sparse setting where $a, b = \Theta(1)$. Moreover we set $p_1 = p_2 = 1/2$, so that the two communities are roughly equal-sized, and we assume $a > b$ (although we discuss the case $a < b$ in Appendix~\ref{sec:dissortative}). We will use $G(n, a/n, b/n)$ to denote the corresponding random graph model. The setting of parameters above is particularly well-motivated in practice, where a wide variety of networks have been observed \cite{leskovec} to have average degree that is bounded by a small constant. It is important to point out that when the average degree is constant, it is impossible to recover the community structure exactly in the stochastic block model because
a constant fraction of nodes will be isolated. Instead, our goal is to recover a partition of the nodes into two communities that has non-trivial agreement (better than random guessing) with the true communities as $n \rightarrow \infty$, and we refer to this as {\em partial recovery}.

\begin{conjecture} \cite{decelle}
If $(a - b)^2 > 2(a + b)$ then partial recovery in $G(n, a/n, b/n)$ is possible, and if $(a - b)^2 < 2(a + b)$ then partial recovery is information-theoretically impossible. 
\end{conjecture}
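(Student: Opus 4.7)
The plan is to reduce the conjecture to a reconstruction problem on a Galton-Watson tree. For a typical vertex $v$, the depth-$R$ neighborhood of $v$ in $G(n,a/n,b/n)$ with $R = c\log n$ (small $c$) is, with high probability, a tree that couples to a branching process with Poisson$((a+b)/2)$ offspring in which each child's label agrees with its parent's with probability $a/(a+b)$. Setting $d = (a+b)/2$ and $\lambda = (a-b)/2$, the conjectural condition $(a-b)^2 > 2(a+b)$ is exactly the Kesten-Stigum condition $\lambda^2 > d$ for this broadcast process. For the binary symmetric channel, Evans-Kenyon-Peres-Schulman established that Kesten-Stigum coincides with the reconstruction threshold, so that above $\lambda^2 = d$ one can guess the root label from the leaves with non-trivial accuracy, while below it the root is asymptotically independent of the leaves.

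For the positive direction I would follow Massouli\'e and analyze the spectrum of the non-backtracking walk matrix $B$ of $G$. A trace method computation shows that the spectrum asymptotically consists of a bulk inside the disk of radius $\sqrt{d}$ together with two real outliers near $d$ (the Perron eigenvalue) and $\lambda$; when $\lambda^2 > d$, the second outlier is isolated and its eigenvector has entries correlated with the community labels, so thresholding its sign yields a partition with non-trivial agreement. This is motivated by a linearized belief-propagation iteration $x^{(t+1)} \propto B x^{(t)}$: using the local tree approximation, the component of $x^{(R)}$ along the community indicator grows like $\lambda^R$, while the orthogonal noise grows (by a second-moment computation along non-backtracking walks) only like $d^{R/2}$, so a non-trivial signal survives once $\lambda > \sqrt{d}$.

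For the negative direction I would show that for two uniformly random vertices $u,v$, $\mathbb{E}[\sigma_u\sigma_v \mid G] \to 0$ in probability; a standard convexity argument then rules out any partition having non-trivial agreement with $\sigma$. Coupling the neighborhoods of $u$ and $v$ with two independent broadcast trees, this conditional expectation is controlled by the posterior mean of the root label on each tree given the depth-$R$ leaves. Below Kesten-Stigum, broadcast non-reconstruction forces this posterior mean to tend to zero, and hence so does $\mathbb{E}[\sigma_u\sigma_v\mid G]$.

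The main obstacle is the negative direction: the local coupling only controls information within an $O(\log n)$-radius neighborhood, whereas an algorithm sees all of $G$ and could in principle aggregate weak long-range signals --- through short cycles, or through the many short paths between distant vertices --- to beat the tree posterior. Handling this requires a careful contiguity (equivalently, second moment) argument showing that below threshold the planted model $G(n,a/n,b/n)$ is mutually contiguous with the null Erd\H{o}s-R\'enyi model $G(n,(a+b)/(2n))$. Since the null model has no planted communities, any algorithm achieving partial recovery would distinguish the two distributions, giving a contradiction and closing the loop between local non-reconstruction and global information-theoretic impossibility.
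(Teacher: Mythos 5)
The statement you are addressing is a conjecture that the paper itself does not prove; it simply credits Mossel--Neeman--Sly \cite{mns,mns2} and Massouli\'e \cite{massoulie} with resolving it, and then adapts one piece of the MNS argument (``no long-range correlations'') to the harder semirandom setting in its Section~\ref{sec:no-lrc}. Your sketch of the positive direction and the local tree coupling is a reasonable account of those cited works. The issue is in how you close the negative direction.

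You propose to handle the long-range aggregation obstacle by a contiguity argument between $G(n,a/n,b/n)$ and $G(n,(a+b)/2n)$, and then claim that ``any algorithm achieving partial recovery would distinguish the two distributions.'' That reduction is not clean. Under the Erd\H{o}s--R\'enyi null there is no planted $\sigma$, so there is no way to observe that ``recovery failed''; to extract a distinguishing statistic you would have to run a secondary test on the algorithm's output (e.g.\ check that the implied cut has anomalous density), and you would then have to prove that no adaptive labeling of an ER graph can fool that test at this edge density --- a separate, nontrivial claim that contiguity alone does not hand you. Moreover, contiguity is in fact a \emph{different theorem} in \cite{mns} (their non-distinguishability result, proved by a second-moment computation); their impossibility-of-partial-recovery theorem is proved independently via the ``no long-range correlations'' lemma, which shows directly that $\prob{\sigma_A \mid G, \sigma_{BC}} \approx \prob{\sigma_A \mid G, \sigma_B}$ for $B$ the boundary of a tree-like neighborhood, thereby reducing $\EE[\sigma_u\sigma_v\mid G]\to 0$ to tree non-reconstruction with no detour through distinguishability. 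That is precisely the structure this paper imitates in Lemma~\ref{lemma:no-lrc} and Proposition~\ref{prop:rel-spin-hard}. So while your first two paragraphs of the negative direction are on target, the final paragraph substitutes the wrong tool: you should replace the contiguity step with a conditional-independence (Markov-random-field factorization) argument that separates the neighborhood $A$ from the rest of the graph $C$ across the boundary $B$.
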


\noindent This conjecture was based on deep but non-rigorous ideas originating from statistical physics, and was first derived heuristically as a stability criterion for belief propagation. This threshold also bears a close connection to known thresholds in the broadcast tree model, which is another setting in which to study partial recovery. We formally define the broadcast tree model in Section~\ref{sec:prelim}, but it is a stochastic process described by two parameters $a$ and $b$. Kesten and Stigum \cite{kesten-stigum} showed that partial recovery is possible if $(a - b)^2 > 2(a + b)$, and much later Evans et al.\ \cite{evans} showed that it is impossible if $(a - b)^2 \leq 2(a + b)$. The connection between the two models is that in the stochastic block model, the local neighborhood of a node resembles the broadcast tree model, and this was another compelling motivation for the conjecture.

In an exciting sequence of developments, Mossel, Neeman and Sly \cite{mns} proved a lower bound that even distinguishing $G(n, a/n, b/n)$ from an Erd\H{o}s--R\'enyi graph $G(n, (a+b)/2n)$ is information-theoretically impossible if $(a - b)^2 \leq 2(a + b)$, by a careful coupling to the broadcast tree model. Subsequently Mossel, Neeman and Sly \cite{mns2} and Massouli\'e \cite{massoulie} independently gave matching algorithms that achieve partial recovery up to this threshold, thus resolving the conjecture! Mossel, Neeman and Sly \cite{MNSbelief} later showed that for some constant $C > 0$, if $(a-b)^2 > C (a + b)$ then belief propagation works and moreover the agreement of the clustering it finds is the best possible.

In fact, many other sorts of threshold phenomena have been found in different parameter regimes. Abbe, Bandeira and Hall \cite{abh} studied exact recovery in the logarithmic degree setting $G(n, a\log n/n, b \log n/n)$ and showed that it is efficiently possible to recover the two communities exactly if $(\sqrt{a} - \sqrt{b})^2 > 2$ and information-theoretically impossible if $(\sqrt{a} - \sqrt{b})^2 < 2$. Abbe and Sandon \cite{as} gave a precise characterization of when exact recovery is possible in the general stochastic block model for more than two communities with arbitrary relative sizes and arbitrary connection probabilities.

\subsection{Semirandom Models}

This abundance of sharp thresholds begs a natural question: how robust are these reconstruction thresholds? It is clear that if one substantially changes the distributional model, the thresholds themselves are likely to change. However there is a subtler issue. The algorithms that achieve these thresholds may in principle be over-fitting to a particular distributional model. Random graphs are well-known \cite{alon-spencer} to have rigid properties, such as sharp laws for the distribution of subgraph counts and a predictable distribution of eigenvalues. Real-world graphs do not have such properties.

In a remarkable paper, Blum and Spencer \cite{bs} introduced the semirandom model as an intermediary between average-case and worst-case analysis, to address such issues. The details of the model vary depending on the particular optimization problem, and since we will focus on clustering we will be most interested in the variant used by Feige and Kilian \cite{fk}.

\vspace{1em}\fbox{\parbox{0.9\textwidth}{
{\bf Semirandom Model for Community Detection:} \cite{fk}
\begin{itemize}
\item Sample a `precursor' graph $\Gpre$ from $G(n, a/n, b/n)$.
\item A monotone `adversary' observes $\Gpre$ along with the hidden community structure, and can delete any number of edges crossing between the two communities, and add any number of edges with each community.
\item Output the resulting graph $G$.
\end{itemize}
}}
\vspace{1em}

\noindent The adversary above is called `monotone' because it is restricted to making changes that seem to be helpful. It can only strengthen ties within each community, and break ties between them.\footnote{This is the reason we require $a > b$. If we had $b > a$ we could define the adversary in the opposite way (which we analyze in Appendix~\ref{sec:dissortative}).} The key is that a monotone adversary can break the sorts of rigid structures that arise in random graphs, such as predictable degree distributions and subgraph counts. An algorithm that works in a semirandom model can no longer rely on these properties. Instead of a graph containing {\em only} random edges, all we are assured of is that it contains {\em some} random edges.

In this paper, we use semirandom models as our notion of `robustness.' Many forms of robustness exist in the literature -- for example, the independent work \cite{mmv-robust} gives algorithms that are robust to $o(n)$ non-monotone changes in addition to monotone changes. With most robustness models, any algorithm will break down after enough errors, and one can compare algorithms based on how many errors they can tolerate. In contrast, semirandom models distinguish between algorithms {\em qualitatively}: as we will see, entire classes of algorithms continue to work under any number of monotone changes, while others do not.

Feige and Kilian \cite{fk} showed that semidefinite programs for \emph{exact recovery} in the stochastic block model continue to work in the semirandom model --- in fact they succeed up to the threshold for the random model \cite{hwx}. Since then, there have been many further developments including algorithms that work in semirandom models for planted clique \cite{fkr}, unique games \cite{kmm}, various graph partitioning problems \cite{mmv}, correlation clustering \cite{msc, mmv15} and the general planted partition model\footnote{In the general planted partition model, we have $Q_{ii} = a/n$ for all $i$ and $Q_{ij} = b/n$ for all $i \neq j$, with $a > b$, but the number of communities and their relative sizes are arbitrary constants.} \cite{pw} (and in some cases for even more powerful adversaries). A common theme is that if you have a semidefinite program that works in some stochastic setting, then it often extends almost automatically to the semirandom setting. So are there semidefinite programs that achieve the same sharp partial recovery results as Mossel, Neeman and Sly \cite{mns2} and Massouli\'e \cite{massoulie}, and that extend to the semirandom model too? Or is there a genuine gap between what is achievable in the random vs.\ the semirandom setting?

\subsection{Our Results}
Recall that in the semirandom block model, a monotone adversary observes a sample from the stochastic block model and is then allowed to add edges within each community and delete edges that cross between communities. We will design a particularly simple adversary to prevent an algorithm from utilizing the paths of length two that go from a `$+$' node to a `$-$' node and back to a `$+$' node, where the middle node has degree two. Our adversary will delete any such path it finds (with some additional technical conditions to locally coordinate these modifications), and our main result is that, surprisingly, this simple modification strictly changes the partial recovery threshold. We will state our bounds in terms of the average degree $k = \frac{a + b}{2}$ and `noise' $\eps = \frac{b}{a + b} \in [0,\frac12)$, in which case the threshold $(a-b)^2 > 2(a+b)$ becomes $k(1-2 \eps)^2 > 1$. Note that this threshold requires $k > 1$. Then:

\begin{theorem}[main]\label{thm:intro-main}
For any $k > 1$, there exists $0 < \eps < \frac12$ so that $k (1- 2\eps)^2 > 1$ and hence partial recovery in the stochastic block model $G(n, a/n, b/n)$ is possible, and yet there is a monotone adversary so that partial recovery in the semirandom model is information-theoretically impossible.
\end{theorem}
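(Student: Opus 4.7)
My plan is to mirror the Mossel-Neeman-Sly information-theoretic lower bound \cite{mns}, replacing the usual Poisson broadcast tree with a modified version that encodes the adversary's action, and proving non-reconstruction of the root label on this modified tree for parameters strictly above the Kesten-Stigum curve. First I would fix the adversary: for each `$-$' vertex $u$ of degree exactly two whose two neighbors are both `$+$', delete the two crossing edges incident to $u$ and insert the chord joining $u$'s former neighbors (a monotone action). A tie-breaking rule restricting to an independent set of such candidate `$-$' vertices handles the rare conflicts that can arise in the sparse regime, and their vanishing density means they do not affect the asymptotic analysis.

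Second, I would couple the radius-$R$ neighborhood of a fixed vertex $v$ in $G_{\mathrm{adv}}$ to a modified broadcast tree $T'_R(\sigma_v)$, defined by applying the same deletion/replacement rule to a standard Poisson$(k)$, $\eps$-broadcast tree. Because both the stochastic block model and the adversary's rule depend only on constant-radius neighborhoods, the explore-and-couple argument from \cite{mns} goes through with $o(1)$ total-variation error when $R = R(n)$ grows slowly with $n$.

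The crux is the third step: for each $k > 1$, exhibit an $\eps$ with $k(1-2\eps)^2 > 1$ for which the root label of $T'_R$ cannot be reconstructed as $R \to \infty$. Because the adversary targets only those `$-$' vertices whose parent is `$+$', the modified tree is asymmetric: nodes must be typed by their own label together with their parent's label, and the symmetric two-state Kesten-Stigum analysis does not apply verbatim. I would attack non-reconstruction via the second-moment method on the tree likelihood ratio (as in \cite{mns}), reducing to the top eigenvalue of the associated multi-type Galton-Watson transfer operator; the effect of the adversary is to introduce label-dependent correlations between the added chords and the excised `$-$' vertices that reduce the effective per-generation information, even though naive first-order statistics suggest more signal. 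A continuity argument then produces an open interval of $\eps$ strictly above the Kesten-Stigum threshold on which the modified process is non-reconstructible, and the theorem follows by the standard MNS reduction from tree non-reconstruction to graph indistinguishability.

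The main obstacle will be this third step: proving non-reconstruction requires a delicate spectral or coupling estimate for an asymmetric branching process whose children distributions depend on conditioning imposed by the adversary, and one cannot read the answer off from a standard Kesten-Stigum eigenvalue calculation. The other two steps are largely bookkeeping following \cite{mns}.
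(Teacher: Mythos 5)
Your proposal diverges from the paper's construction in a way that is not merely cosmetic, and the central difference is likely fatal. After deleting the two crossing edges at a degree-$2$ ``$-$'' vertex $u$, you \emph{insert the chord} joining $u$'s former ``$+$'' neighbors. The paper's adversary (Distribution~\ref{dist:graph}) deletes the two edges and stops there, isolating $u$ entirely and hence disconnecting the subtree beyond $u$ from the local tree neighborhood. That disconnection is the source of the hardness: it lowers the effective branching factor, and the Evans et al.\ bound $\Delta_T^2 \le 2\sum_v \Theta_v^2$ then decays faster because there are simply fewer leaves. Your chord re-attaches the subtree via an edge that, conditionally on the adversary's rule, is a \emph{perfectly} correlated ($\theta = 1$) same-spin link --- strictly \emph{more} informative than the original noisy two-step path with $\theta = (1-2\eps)^2$. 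Even after accounting for the loss of $u$'s spin and the nuance that the observer must do inference over which observed edges are chords, replacing a lossy two-hop connection by a lossless one-hop connection generically raises, not lowers, $\sum_v \Theta_v^2$. Nothing in the proposal establishes that this adversary makes reconstruction harder; it most plausibly makes it easier, which is consistent with your own observation that ``naive first-order statistics suggest more signal.'' The asserted ``reduction of effective per-generation information'' is precisely the nontrivial claim that must be proved, and you leave it as the main obstacle with only a gesture toward a multi-type spectral analysis that you acknowledge you cannot carry out.

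The paper dodges this entire difficulty. By deleting only and adding a carefully chosen $3$-clique potential viewpoint, the authors can marginalize out the spin of the deleted degree-$2$ node and recover an honest Markov random field on a tree, with an effective edge noise $\eps'$ satisfying $\eps \le \eps' \le \tfrac12$. The lower bound then reduces to the classical Evans et al.\ estimate; no multi-type or asymmetric spectral machinery is needed, and crucially the gain comes from a strict reduction in branching factor alone (Proposition~\ref{prop:treehard}). Two further points you treat as bookkeeping are, in the paper, substantial. First, the adversary cuts only with probability $\delta < 1$ when $\eps > \tfrac13$: deterministic cutting (your choice) over-conditions and leaks information through the observed degree profile, which is exactly the ``overshoot'' the paper's tuning of $\delta$ (equation~\ref{eq:delta}) is designed to prevent. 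Second, the ``no long-range correlations'' step (Lemma~\ref{lemma:no-lrc}) is not a routine transplant of the MNS argument: because the posterior on spins given the observed graph now requires summing over all possible precursors $\Gpre$, the paper introduces the \good/\marked machinery specifically so that the set of precursors has a tractable combinatorial description (Lemma~\ref{lemma:unsurgery}). Your ``independent-set tie-breaking'' rule does not obviously yield any comparably clean characterization. Finally, your adversary acts only on ``$-$'' vertices; the paper's acts symmetrically in the spins, which is needed for the spin-flip symmetry exploited in the concentration arguments of Appendix~\ref{sec:binom-ratio-proof}.
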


\noindent A common tool in the algorithms of Mossel, Neeman and Sly \cite{mns2} and of Massouli\'e \cite{massoulie} is the use of non-backtracking walks and spectral bounds for their transition matrices. Our adversary explicitly deletes a significant number of these walks. This simple modification not only defeats these particular algorithms, but we can show that {\em no} algorithm can achieve partial recovery up to the threshold $k (1-2\eps)^2 > 1$. To the best of our knowledge, this is the first explicit separation between what is possible in the random model vs.\ the semirandom model, for {\em any} problem with a monotone adversary. 

We show a complementary result, that no monotone adversary can make the problem too much harder. Various semidefinite programs have been designed for partial recovery.
These algorithms work in the constant-degree regime, but are not known to work all the way down to the information-theoretic threshold. In particular, we follow Gu\'edon and Vershynin \cite{GV} and show that their analysis (with a simple modification) works as is for the semirandom model. This shows that semidefinite programs not only give rise to algorithms that work for the semirandom model in the exact recovery setting, but also for partial recovery too under some fairly general conditions. 

\begin{theorem}\label{thm:intro-robust-sdp}
Let $k = \frac{a + b}{2}$ and $\eps = \frac{b}{a + b} < \frac12$. There is a constant $C > 1$ so that if $a > 20$ and $k (1- 2\eps)^2 > C$ then partial recovery is possible in the semirandom block model. Moreover it can be solved in polynomial time. 
\end{theorem}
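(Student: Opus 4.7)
My plan is to follow the Gu\'edon--Vershynin \cite{GV} SDP analysis for partial recovery in the random block model and show that monotone modifications can only \emph{help} the SDP, via a clean comparison argument. Write the relaxation as
$$\hat{X} \in \arg\max_{X \in \mathcal{M}} \langle A, X \rangle,$$
where $A$ is the observed (possibly semirandom) adjacency matrix and $\mathcal{M}$ is the GV feasible set, namely PSD matrices with $X_{ii} = 1$, so in particular with $X_{ij} \in [-1,1]$. Let $Z^* = z z^\top$ be the planted community matrix with $z_i \in \{\pm 1\}$. The GV analysis shows that on the purely random graph, under $a > 20$ and $k(1-2\eps)^2 > C$, every feasible $X$ with $\langle A, X \rangle \geq \langle A, Z^* \rangle$ already satisfies $\|X - Z^*\|_1 = o(n^2)$ with high probability. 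Such norm closeness translates into a partition with non-trivial agreement with the planted communities via a standard rounding step (e.g., extracting the top eigenvector of $\hat X$ and thresholding).

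To move to the semirandom model, write the modified adjacency as $A' = A + \Delta$, where $\Delta_{ij} \geq 0$ whenever $i,j$ lie in the same community and $\Delta_{ij} \leq 0$ otherwise. The key monotonicity lemma is that for every $X \in \mathcal{M}$,
$$\langle \Delta, Z^* - X \rangle = \sum_{i,j \text{ same}} \Delta_{ij}(1 - X_{ij}) \;+\; \sum_{i,j \text{ diff}} \Delta_{ij}(-1 - X_{ij}) \;\geq\; 0,$$
because in each sum the sign of $\Delta_{ij}$ matches the sign of $Z^*_{ij} - X_{ij}$, using $X_{ij} \in [-1,1]$.

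Combining the pieces, let $\hat X'$ be the SDP optimum on $A'$. Optimality gives $\langle A', \hat X' \rangle \geq \langle A', Z^* \rangle$; subtracting the monotonicity inequality $\langle \Delta, \hat X' \rangle \leq \langle \Delta, Z^* \rangle$ yields $\langle A, \hat X' \rangle \geq \langle A, Z^* \rangle$. Thus $\hat X'$ is a feasible near-optimum for the SDP on the \emph{original} random $A$, so the Gu\'edon--Vershynin conclusion applies verbatim: $\|\hat X' - Z^*\|_1 = o(n^2)$ with high probability over the draw of $A$ alone. Rounding then yields partial recovery, and since the SDP is convex it can be solved in polynomial time.

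The main obstacle is checking that GV's machinery really delivers the implication ``any feasible $X$ with objective at least $\langle A, Z^*\rangle$ is close to $Z^*$,'' rather than the narrower statement about the optimum itself. Inspecting their argument, this follows from the same ingredients they already use: a Grothendieck-type cut-norm concentration bound on $A - \mathbb{E}[A]$, combined with the strict curvature of $\langle \mathbb{E}[A], \cdot \rangle$ around $Z^*$ coming from the rank-one signal in $\mathbb{E}[A]$. The only \emph{simple modification} of GV needed is reframing their analysis as this comparison inequality; notably, no new concentration bound on $A'$ or any control over $\|\Delta\|$ is required, because the semirandom case is reduced to a statement purely about the random $A$.
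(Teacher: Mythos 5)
The heart of your argument --- that for any feasible $X$ with entries in $[-1,1]$ and any monotone change $\Delta$, one has $\langle \Delta, Z^* - X \rangle \geq 0$, so that any near-optimum of the semirandom objective is automatically a near-optimum of the underlying random objective --- is correct and mirrors Proposition~\ref{prop:S} of the paper. The fatal gap is upstream: the SDP you actually write down, maximizing $\langle A, X\rangle$ over $X \succeq 0$ with unit diagonal and \emph{no centering term}, does not achieve partial recovery even in the purely random block model, so the claim ``every feasible $X$ with $\langle A, X\rangle \geq \langle A, Z^*\rangle$ satisfies $\|X - Z^*\|_1 = o(n^2)$'' is false. Concretely, the all-ones matrix $J$ is feasible, and $\langle A, J\rangle$ is $n$ plus twice the number of edges, roughly $(a+b)n/2$, which with high probability strictly exceeds $\langle A, Z^*\rangle \approx (a-b)n/2$; yet $\|J - Z^*\|_1 = \Theta(n^2)$. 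The Gu\'edon--Vershynin SDP \cite{GV} in fact maximizes the \emph{centered} objective $\langle A - \lambda J, X\rangle$ with a regularizer $\lambda$ near the average edge density, precisely so that $Z^*$ rather than $J$ maximizes the reference objective; the paper explicitly notes that condition~(1) of the framework fails if one drops this term.

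Your monotonicity inequality $\langle \Delta, Z^* - X\rangle \geq 0$ does survive the correction, since subtracting $\lambda J$ from both $A$ and $A'$ leaves $\Delta$ unchanged, so the plan can be repaired to coincide with the paper's argument. But there is also a second issue you do not address: in \cite{GV} the regularizer $\lambda$ is set from the observed average degree, and in the semirandom setting a monotone adversary can shift that empirical estimate arbitrarily by adding many within-community edges, making a data-driven $\lambda$ unsafe. The paper therefore re-verifies conditions (1) and (3) of the framework --- via an explicit SDP duality certificate for optimality of $Z^*$ and a census-concentration bound --- for the fixed, data-independent choice $\lambda = \log n / n$. Without some such replacement, the dependence of the algorithm on unknown parameters is not actually resolved.
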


\noindent Our robustness proof only applies to a particular form of SDP analysis. Given a different proof that the SDP succeeds in the random model for a larger range of parameters $k,\eps$ than above, there would be no guarantee that the SDP also succeeds in the semirandom model for that range of parameters. Hence we cannot formally conclude that it is impossible for the SDP to reach the information-theoretic threshold in the random model, though our results are suggestive in this direction. 

We remark that each possible monotone adversary yields a new distribution on planted community detection problems. Hence, we can think of the algorithm in the theorem above as one that performs almost as well as information-theoretically possible across an entire family of distributions, simultaneously. This is a major advantage to algorithms based on semidefinite programming, and points to an interesting new direction in statistics. Can we move beyond average-case analysis? Can we find robust, semirandom analogues to some of the classical, average-case thresholds in statistics? The above two theorems establish upper and lower bounds for this semirandom threshold, that show that it is genuinely different from the average-case threshold. The usual notion of a threshold makes sense when we have exact knowledge of the process generating the instances we wish to solve. But when we lack this knowledge, semirandom models offer an avenue for exploration that can lead to new algorithmic and statistical questions.

Along the way to proving our main theorem, we show a random vs.\ semirandom separation for the broadcast tree model too. We define this model in Section~\ref{sec:prelim}. In short, it is a stochastic process in which each node is given one of two labels and gives birth to $\Pois(a/2)$ nodes of the same label and $\Pois(b/2)$ nodes of the different label, with the goal being to guess the label of the root given the labels of the leaves. There are many ways we could define a monotone adversary, and we focus on a particularly weak one that is only allowed to cut edges between nodes with different labels. We call this the cutting adversary, and we prove:

\begin{theorem}\label{thm:intro-treehard}
For any $k > 1$, there exists $0 < \eps < \frac12$ so that $k (1- 2\eps)^2 > 1$ and hence partial recovery in the broadcast tree model is possible, and yet there is a monotone cutting adversary for which partial recovery in the semirandom broadcast tree model is information-theoretically impossible. 
\end{theorem}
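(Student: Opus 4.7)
The plan is to emulate the adversary construction from the block-model result (Theorem \ref{thm:intro-main}) in the recursive setting of the broadcast tree, where the analysis reduces to a branching-process calculation. First, fix $k > 1$ and pick $\epsilon \in (0,1/2)$ with $k(1-2\epsilon)^2 > 1$, so that the unmodified broadcast tree admits partial recovery by Kesten--Stigum \cite{kesten-stigum}.

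Next, I would define a cutting adversary via a local rule: for each cross-edge $v$--$w$, cut the edge (thereby deleting the subtree rooted at $w$) when the labelled subtree structure near $w$ matches a designated pattern mirroring the length-two up-down walks used in the block-model construction. Because the rule is local and applied uniformly at every depth, the resulting observed tree $T'$ inherits a self-similar branching description.

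The crux is to show that the information available to the observer from $T'$ is dominated, in a likelihood-ratio sense, by that of a broadcast tree whose Kesten--Stigum ratio is at most $1$. This I would do by bookkeeping the Poisson thinning that the adversary effects at each kind of cross-to-same transition, writing down the mean offspring matrix of the resulting multi-type branching process, and verifying that its leading and sub-leading eigenvalues satisfy $\lambda_2^2/\lambda_1 \le 1$. Combining this with the impossibility result of Evans, Kenyon, Peres and Schulman \cite{evans} then yields that no estimator using the leaf labels of $T'$ achieves non-trivial accuracy on the root as the depth grows.

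The main obstacle is the design of the adversary's rule. A \emph{uniform} independent thinning of cross-edges simply replaces $(a,b)$ by $(a,b(1-p))$, which shifts the Kesten--Stigum ratio \emph{above} $1$, not below; so the rule must correlate with the local subtree, turning the effective process into a genuine multi-type branching rather than a Poisson broadcast with modified rates, and targeting precisely the structures whose removal reduces the sub-leading spectral radius past the critical level. A secondary challenge is to verify that the side information the adversary leaks through the pattern of cuts does not allow the observer to restore a reconstruction signal; this requires carefully matching the conditional distributions of $T' \mid v = +$ and $T' \mid v = -$ up to the natural swap symmetry.
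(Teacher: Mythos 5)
Your proposal correctly identifies the two central obstacles — that a uniform, spin-independent thinning of cross-edges merely changes $(a,b)$ and fails to bring the Kesten--Stigum ratio below $1$, and that a pattern-sensitive local rule is needed — and it points to the right references \cite{kesten-stigum,evans}. However, the proposed analysis route has a genuine gap. You suggest recasting the adversary's output as a multi-type branching process with mean offspring matrix eigenvalues $\lambda_1,\lambda_2$ and concluding impossibility from $\lambda_2^2/\lambda_1 \le 1$. That inference is not valid in general: for multi-type tree-indexed Markov chains, the Kesten--Stigum bound is only one direction of the problem, and reconstruction can be possible even when $\lambda_2^2/\lambda_1 \le 1$ (the Potts-model and asymmetric-channel literature cited in Section~\ref{sec:models} is exactly about this failure). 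So even if you engineered an adversary whose offspring matrix has the desired spectrum, you would not yet have an impossibility proof.

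The paper sidesteps this precisely by the choice of adversary: it cuts only at \emph{degree-2} nodes whose two neighbors both have opposite spin (and only such nodes whose own neighbors are locally well-behaved, via the \good/\marked markings). The point is that a degree-2 node, once marginalized out, turns the two incident noisy edges into a single noisy binary channel between its parent and its child — hence, conditional on the observed topology and markings, the spin process remains a plain two-state Markov random field on a tree, where the Evans et al.\ bound $\Delta_T^2 \le 2 \sum_{\text{leaves } v} \Theta_v^2$ applies verbatim. The proof then observes that (i) the marginalized propagation at cut-survivors is at least as noisy as before ($\eps' \ge \eps$, which is not even needed), and (ii) the cutting strictly reduces the expected branching factor from $k$ to some $k' < k$ (made precise via the ``base level / cutting level'' bookkeeping, which decouples the branching statistic between periods). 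Combining these gives $W \le 2 k^6 (k')^{R-6}(1-2\eps)^{2(R-2)}$, and the continuity of $k'(k,\eps)(1-2\eps)^2$ in $\eps$ delivers an $\eps$ slightly below the random-model critical value for which $k(1-2\eps)^2 > 1$ yet $k'(1-2\eps)^2 < 1$. In short, the essential missing idea in your proposal is that the cutting rule must be designed so the observed tree collapses to a binary Markov chain after marginalization, not because it is analytically convenient, but because otherwise no impossibility result is available.

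Two smaller points. First, the secondary concern you raise about side information leaking through the pattern of cuts is real, and the paper handles it by revealing the markings and cut positions to the observer and by tuning the cut probability $\delta$ so the surviving-marked-node propagation has a well-defined noise level $\eps'$; the rigorous version of your ``match the conditional distributions up to swap symmetry'' is exactly the passage from Distribution~\ref{dist:tree-1} to Distribution~\ref{dist:tree-2}. Second, the cut rule applies to degree-2 nodes in the tree (nodes with a single child), not to generic cross-edges, because cutting arbitrary cross-edges creates dependencies that do not marginalize away.
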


Furthermore we analyze the recursive majority algorithm and show that it is robust to an even more powerful class of monotone adversaries, which are allowed to entirely control the subtree at a node whose label is different than its parent. We call this a strong adversary, and we prove:

\begin{theorem}\label{thm:intro-recmaj}
If $\frac{k}{\log k} (1 - 2\eps)^2 > 1 + o(1)$ and $\eps < \frac12$ then partial recovery in the broadcast tree model is possible, even with respect to a strong monotone adversary, where `$o(1)$' is taken as $k \to \infty$.
\end{theorem}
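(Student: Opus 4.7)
The plan is to track the error probability $p_\ell$ of recursive majority at a node of height $\ell$ under the worst-case strong adversary, and to show that under the hypothesis $k(1-2\eps)^2 > (1+o(1))\log k$ the induced recursion admits a stable fixed point tending to $0$. Fix a node $v$ at height $\ell+1$ with label $+$. By the Poisson offspring structure, $v$ has $N_s \sim \Pois(k(1-\eps))$ same-label children and $N_d \sim \Pois(k\eps)$ different-label children, independently. Since the strong adversary may only modify subtrees rooted below a crossing edge, the subtrees of the same-label children are genuine broadcast-tree samples (further adversarial modifications at deeper crossing edges are already folded inductively into $p_\ell$). The adversary's optimal strategy at a different-label child $w$ is simply to declare the subtree at $w$ to be the single leaf $w$ with its true label $-$, so that the algorithm always casts a vote against $v$ from $w$. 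Each same-label child instead casts a correct vote independently with probability $1-p_\ell$. By Poisson coloring, the counts of correct and incorrect votes at $v$ are independent Poisson random variables with means
\[
\mu_1 = k(1-\eps)(1-p_\ell), \qquad \mu_2 = k(1-\eps)p_\ell + k\eps.
\]

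Therefore $p_{\ell+1} \leq \Pr[\Pois(\mu_1) \leq \Pois(\mu_2)]$. Observe $\mu_1 + \mu_2 = k$ and $\mu_1 - \mu_2 = k[(1-2\eps) - 2(1-\eps)p_\ell]$. A sharp tail estimate for differences of independent Poissons --- the Skellam local central limit theorem, or an MGF expansion retaining the polynomial prefactor --- yields
\[
p_{\ell+1} \leq f(p_\ell) := (1+o(1))\,\Phi\!\bigl(-\sqrt{k}\,[(1-2\eps) - 2(1-\eps)p_\ell]\bigr),
\]
uniformly for $p_\ell$ in a compact subset of $[0,\tfrac12)$. Since $f$ is monotone increasing in $p_\ell$, the iterates $p_\ell = f^{\circ \ell}(0)$ are monotone nondecreasing and converge to the smallest fixed point $p^*$ of $f$. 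Linearizing at $p = 0$ gives $f(0) = (1+o(1))\,\Phi(-\sqrt{k}(1-2\eps))$ and $f'(0) = (1+o(1))\cdot 2\sqrt{k}(1-\eps)\,\phi(\sqrt{k}(1-2\eps))$; whenever $f'(0) < 1$, $f(p) - p$ begins positive at $p = 0$ and decreases, crossing zero at $p^* \approx f(0)/(1 - f'(0))$, which tends to $0$ as $k\to\infty$ and hence lies strictly below $1/2$ for all large $k$. Using $\phi(x) = e^{-x^2/2}/\sqrt{2\pi}$, the inequality $f'(0) < 1$ rearranges to
\[
k(1-2\eps)^2 > 2\log\!\bigl(2\sqrt{k}(1-\eps)/\sqrt{2\pi}\bigr) = \log k + O(1),
\]
which is precisely the hypothesis $\tfrac{k}{\log k}(1-2\eps)^2 > 1 + o(1)$.

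The main technical obstacle is establishing the Poisson tail estimate at the correct sharpness. The crude Chernoff bound $\Pr[\Pois(\mu_1) \leq \Pois(\mu_2)] \leq \exp(-(\sqrt{\mu_1} - \sqrt{\mu_2})^2)$ loses a polynomial factor of order $\sqrt{k}$ and would miss the $\log k$ threshold by a constant in the exponent; we genuinely need the prefactor matching the Gaussian tail $\Phi(-x) \sim e^{-x^2/2}/(x\sqrt{2\pi})$, obtainable via Skellam asymptotics or a careful MGF argument. Once the sharp inequality $p_{\ell+1} \leq f(p_\ell)$ is in hand, the remaining work is a routine monotone-iteration check: verify inductively that $p_\ell$ stays within the basin of attraction of $p^*$, and that $p^* \leq 1/2 - \Omega(1)$ for all sufficiently large $k$, yielding partial recovery of the root label as the tree depth tends to infinity.
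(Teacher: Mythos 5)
You take essentially the same route as the paper — a one-step recursion for the (error) probability of recursive majority under the worst-case strong adversary, followed by a fixed-point analysis — but you work directly with the Poisson-birth tree via Poisson coloring, which is cleaner than the paper's detour through $k$-regular trees and $M_k(p)$. The Skellam / moderate-deviations estimate is the right tool and you correctly flag that the crude Chernoff bound loses the needed polynomial prefactor. One small inaccuracy: the adversary cannot make a different-label child $w$ into a leaf (in this model interior nodes with no children are not observed); the replacement should be a path down to the boundary level with constant opposite spin, as the paper does. That does give the same worst-case vote, so this is only a phrasing fix.

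There is, however, a genuine gap in the fixed-point step. On the interval $p \in [0,\, (1-2\eps)/(2(1-\eps))]$ the argument of $\Phi$ is negative, so $f$ is \emph{convex} there; consequently $f'(0) < 1$ together with $f(0) > 0$ does \emph{not} imply that $f(p) - p$ crosses zero (a convex function lying above its tangent line at $0$ can have $f(p) > p$ throughout even when $f'(0)<1$). The claim that $f(p)-p$ ``begins positive and decreases, crossing zero at $p^* \approx f(0)/(1-f'(0))$'' therefore does not follow; convexity in fact makes $f(0)/(1-f'(0))$ a \emph{lower} bound on the fixed point, and existence must be checked at the minimizer $p_1$ where $f'(p_1)=1$, i.e.\ via the tangency condition. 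This is precisely what the paper does by maximizing $M_k(q)/q$ over $q$ (equivalently requiring $M_k'(q^*) = M_k(q^*)/q^*$). If you redo the computation at that tangency point you again obtain $k(1-2\eps)^2 > \log k + O(1)$ (with a slightly different additive constant than the one your $f'(0)<1$ criterion produces), so the theorem's $o(1)$ slack still absorbs the difference and the conclusion is correct — but the argument as written does not establish it. Replacing the linearization-at-$0$ criterion with the tangency/minimum check, as in the paper, closes the gap.
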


\noindent These results highlight another well-studied model where the introduction of a monotone adversary strictly changes what is possible. Nevertheless there is an algorithm that succeeds across the entire range of distributions that arise from the action of a monotone adversary, simultaneously. Interestingly, our robustness results can also be seen as a justification for why practitioners use recursive majority at all. It has been known for some time that recursive majority does not achieve the Kesten--Stigum bound \cite{mossel-recursive}
--- the threshold for reconstruction in the broadcast tree model --- although taking the majority vote of the leaves does. The advantage of recursive majority is that it is robust to very powerful adversaries while majority is not, and this only becomes clear when studying these algorithms through semirandom models!

\section{Models and Adversaries}\label{sec:models}

\subsection{Preliminaries}\label{sec:prelim}

Here we formally define the models we will be interested in, as well as the notion of partial recovery. Recall that $G(n,a/n,b/n)$ denotes the stochastic block model on two communities with $p_1 = p_2 = 1/2$ so that the communities are roughly equal sized. We will encode community membership as a \emph{label} $\sigma_v \in \{+1, -1\}$ on each node $v$. We will also refer to this as a \emph{spin}, following the convention in statistical physics. This numeric representation has the advantage that we can `add' spins in order to take the majority vote, and `multiply' them to compute the relative spin between a pair of nodes. We will be interested in the sparse regime $a, b = \Theta(1)$ where the graph has constant average degree, and we will assume $a > b$.

Next, we formally define partial recovery in the stochastic block model. Throughout this paper will will be interested in how our algorithms perform as $n$ (number of nodes) goes to $\infty$. We say that an event holds a.a.s.\ (asymptotically almost surely) if it holds with probability $1 - o(1)$ as $n \to \infty$. Similarly, we say that an event happens for a.a.e.\ (asymptotically almost every) $x$ if it holds with probability $1-o(1)$ over a random choice of $x$.

\begin{definition} We say that an assignment of $\{+1,-1\}$ spins to the nodes achieves {\bf $\eta$-partial recovery} if at least $\eta n$ of these spins match the true spins, or at least $\eta n$ match after a global flip of all the spins. Moreover, an algorithm that outputs a vector of spins $\hat \sigma \in \{+1,-1\}^n$ (indexed by nodes) is said to \emph{achieve partial recovery} and there exists $\eta > \frac{1}{2}$ such that $\hat \sigma$ achieves $\eta$-partial recovery a.a.s.\ in the limit $n \to \infty$.
\end{definition}

Next, we define the broadcast tree model (which we introduced informally earlier). The broadcast tree model is a stochastic process that starts with a single root node $\rho$ at level $0$ whose spin $\sigma_\rho \in \{+1,-1\}$ is chosen uniformly at random. Each node in turn gives birth to $\Pois(a/2)$ same-spin children and $\Pois(b/2)$ opposite-spin children, where $\Pois(c)$ is the Poisson distribution with expectation $c$. This process continues until level $R$ at which point it stops, and the nodes at level $R$ are called the \emph{leaves}. (The nodes on level $\le R-1$ that by chance do not give birth to any children are not considered leaves, even though they are leaves in the graph-theoretic sense.) An algorithm observes the spins at the leaf nodes and the topology of the tree, and the goal is to recover the root spin:

\begin{definition}
An algorithm that outputs a spin $\hat \sigma_\rho$ is said to \emph{achieve partial recovery} on the tree if there exists $\eta > \frac{1}{2}$ such that $\hat \sigma_\rho = \sigma_\rho$ with probability at least $\eta$, as $R \to \infty$. 
\end{definition}

The reparameterization in terms of $(k,\eps)$ becomes particularly convenient here: each node gives birth to $\Pois(k)$ children, so $k = \frac{a+b}{2}$ is the average branching factor. Moreover, each child has probability $\eps = \frac{b}{a+b}$ of having spin opposite to that of its parent.

It is known that taking the majority vote of the leaves is optimal in theory, in the sense that it achieves partial recovery for $k (1-2\eps)^2 > 1$ \cite{kesten-stigum} and that for $k (1-2\eps)^2 \leq 1$, partial recovery is information-theoretically impossible \cite{evans}. This is called the Kesten--Stigum bound, and it can also be interpreted as a condition on the second-largest eigenvalue of an appropriately defined transition matrix. There are many other natural variants of the broadcast tree model, that are more general instances of multi-type branching processes. However, even for simple extensions, the precise information-theoretic threshold is still unknown. In the Potts model, where nodes are labeled with one of $q$ labels, Sly \cite{sly-potts} showed that the Kesten--Stigum bound is not tight, as predicted by M\'ezard and Montanari \cite{mezard-montanari-reconstruction}. And for an asymmetric extension of the binary model above, Borgs et al.\ \cite{borgs-asymmetric} showed that the Kesten--Stigum bound is tight for some settings of the parameters. In our setting, this historical context presents a substantial complication because if we apply a monotone adversary to a broadcast tree model and it results in a complex propagation rule, there may not be good tools to prove that partial recovery is impossible.

\subsection{Discussion}

There is a close connection between the stochastic block model and the broadcast tree model, since the local neighborhood of any vertex in the graph is locally tree-like. Hence, if our goal is to prove a random vs.\ semirandom gap for community detection, a natural starting point is to establish such a gap for the broadcast tree model. It turns out that there is more than one natural way to define a monotone adversary for the broadcast tree model, but it will not be too difficult to establish lower bounds for either of them. The more difficult task is in finding an adversary that can plausibly be coupled to a corresponding adversary in the stochastic block model, and this will require us to put many sorts of constraints on the type of adversary that we should use to obtain a separation for the broadcast tree model. 

In the broadcast tree model, we will work with two notions of a monotone adversary. One is weak and will be used to show our separation results:

\vspace{1em}\fbox{\parbox{0.9\textwidth}{
{\bf Cutting Semirandom Model for Broadcast Tree:} 
\begin{itemize}
\item Sample a `precursor' broadcast tree $\Tpre$ from the broadcast tree model.
\item The monotone cutting `adversary' can delete any number of edges between nodes of different labels, thus removing subtrees.
\item Output the resulting tree $T$. (The removed subtrees are not revealed.)
\end{itemize}
}}
\vspace{1em}

\noindent Our other adversary is stronger, and we will establish upper bounds against this adversary (with the recursive majority algorithm) to give a strong recoverability guarantee:

\vspace{1em}\fbox{\parbox{0.9\textwidth}{
{\bf Strong Semirandom Model for Broadcast Tree:} 
\begin{itemize}
\item Sample a `precursor' broadcast tree $\Tpre$ from the broadcast tree model.
\item Whenever a child has the opposite label from its parent, the strong monotone `adversary' can replace the entire subtree, starting from the child, with a different subtree (topology and labels) of its choice.
\item Output the resulting tree $T$.
\end{itemize}
}}
\vspace{1em}

\noindent An upper bound against this latter model amounts to a recovery guarantee without any assumptions as to what happens after a `mutation' of labels --- for example, a genetic mutation might affect reproductive fitness and change the birth rule for the topology. Other variants of monotone adversaries could also be justified.

\paragraph{Majority Fails} It is helpful to first see how adversaries in these models might break existing algorithms. Recall that in the broadcast tree model, taking the majority vote of the leaves yields an algorithm that works up to the Kesten--Stigum bound, and this is optimal since reconstruction is impossible beyond this. In the language of $k$ and $\eps$, each node gives birth to $\Pois(k(1-\eps))$ nodes of the same label and $\Pois(k\eps)$ nodes of the opposite label. Hence in a depth $R$ tree we expect $k^R$ leaves, but the total bias of the spins can be recursively computed as $k^R(1-2\eps)^R$. The fact that majority works can be proven by applying the second moment method and comparing the bias to its variance.

However, an overwhelming number of the leaves are on some path that has a flip in the label at some point: we only expect $k^R(1-\eps)^R$ nodes with all-root-spin ancestry, a vanishing proportion as $R \to \infty$. The strong monotone adversary has control over all the rest, and can easily break the majority vote. Even the monotone cutting adversary can control the majority vote, by cutting leaves whose spin matches the root but whose parents have the opposite label. This happens for a constant fraction of the leaf nodes, and this change overwhelms the majority vote. So majority vote fails against the semirandom model, for \emph{all} nontrivial $k$ and $\eps$.

This is an instructive example, but we emphasize that breaking one algorithm does not yield a lower bound. For example, if the algorithm knew what the adversary were doing, it could potentially infer information about where the adversary has cut edges based on the degree profile, and could use this to guess the label of the root.

\paragraph{The Problem of Orientation} Many first attempts at a separation in the broadcast tree model (which work!)\ rely on knowing the label of the root. However, such adversaries present a major difficulty in coupling them to a corresponding adversary in the graph. Each graph gives rise to many overlapping broadcast trees (the local neighborhood of each vertex) and a graph adversary needs to simultaneously make all of these tree reconstruction problems harder. This means a graph adversary cannot focus on trying to hide the spin of a specific tree root; rather, it should act in a local, orientation-free way that inhibits the propagation of information in all directions.

A promising approach is to look for nodes $v$ whose neighbors in $\Gpre$ all have the opposite label, and cut all of these edges. Such nodes serve only to further inter-connect each community, and cutting their edges would seem to make community detection strictly harder. In the corresponding broadcast tree model, these nodes $v$ represent flips in the label that are entirely corrected back, and deleting them penalizes any sort of over-reliance on distributional flukes in how errors propagate. For example, majority reconstruction in the tree fully relies on predictable tail events whereby nodes with label different from the root may lead to subtrees voting in the correct direction nonetheless. 

\paragraph{The Problem of Dependence} Now, however, a different sort of problem arises: if we were to naively apply the adversary described above to a broadcast tree, this would introduce complicated distance-$2$ dependencies in the distribution of observed spins, as certain diameter-$2$ spin patterns are banned in the observed tree (as they would have been cut). In particular, the resulting distribution is no longer a Markov random field. This is not inherently a problem, in that we could still hope to prove stronger lower bounds for such models beyond the Kesten--Stigum bound. However, the difficulty is that even for quite simple models on a tree (e.g.\ the Potts model \cite{sly-potts}, asymmetric binary channels \cite{borgs-asymmetric}) the threshold is not known, and the lower bound techniques that establish the Kesten--Stigum bound seem to break down.

An alternative is to specifically look for degree-$2$ nodes $v$ whose neighbors in $\Gpre$ have the opposite label, and cut both incident edges. Although there are still issues about making this a Markov random field, we can alleviate them by adding a $3$-clique potential on each degree-$2$ node and its two neighbors. Then after we marginalize out the label of the degree-$2$ node, the $3$-clique potential becomes a $2$-clique potential, and we return to a Markov random field over a tree! In other words, if we ignore the spin on a degree-2 node and treat its two incident edges like a single edge, we return to a well-behaved spin propagation rule.

\subsection{Our Adversary}

We are now ready to describe the adversary that we will use to prove Theorems~\ref{thm:intro-main} and~\ref{thm:intro-treehard}. We only need two additional adjustments beyond the discussion above. Instead of making every possible degree-$2$ cutting move as described earlier, we will only cut with probability $\delta$. We will tune $\delta$ to ensure that the monotone changes we make do not overshoot and accidentally reveal more information about the underlying communities by cutting in too predictable a fashion. Finally, our adversary adds local restrictions to where and how it cuts, to ensure that the changes it makes do not overlap or interfere with each other (e.g.\ chains of degree-$2$ nodes). These details will not be especially relevant until Section~\ref{sec:no-lrc}, where they simplify the combinatorics of guessing the original precursor graph from the observed graph.

\begin{distribution}\label{dist:graph}

Let $(a,b,n)$ be given. Write $k = \frac{a+b}{2}$ and $\eps = \frac{b}{a+b}$. We sample a `precursor' graph $\Gpre \sim G(n,a/n,b/n)$, and apply the following adversary:

\vspace{1em}\fbox{\parbox{0.94\textwidth}{
{\bf Adversary:}
\begin{itemize}
\item If at least $3$ neighbors of a vertex $v$ have degree not equal to $2$, mark $v$ \good.
\item If a degree-$2$ vertex has both of its neighbors marked \good, mark it \marked.
\item For each \marked node $v$ whose two neighbors $w_1$, $w_2$ both have opposite label to $v$: with probability $\delta$, delete both edges $(v,w_1)$ and $(v,w_2)$ (otherwise keep both edges) where
\begin{equation}\label{eq:delta}
\delta \defeq \begin{cases}
1 & \text{if $\eps \leq \frac13$}, \\
\frac{(1-2\eps)^2}{\eps^2} & \text{if $\eps \geq \frac13$.}\end{cases}
\end{equation}
\end{itemize}
}}
\end{distribution}

We can now outline the proof of our main theorem. In order to show that partial recovery is impossible, it suffices to show that it is impossible to reconstruct the relative spin (same or different) of two random nodes $u$ and $v$, better than random guessing.
Before applying the adversary, an $O(\log n)$-radius neighborhood $U$ around $u$ resembles the broadcast tree model rooted at $u$. After the adversary is applied to the graph, $U$ resembles a broadcast tree model with a corresponding cutting adversary applied. This resemblance will be made precise by the coupling argument of Section~\ref{sec:coupling}; there will be some complications in this, and our tree will not be uniformly the same depth but will have a serrated boundary.

We show in Section~\ref{sec:treehard} that when the cutting adversary is applied to the tree, the tree reconstruction problem (guess the spin of $u$ from the spins on the boundary of $U$) becomes strictly harder: the average branching factor becomes lower due to sufficient cutting, while the new spin propagation rule resembles classical noisy propagation with at least as much noise (so long as we marginalize out the spins of \marked nodes). We then apply the proof technique of Evans et al.\ \cite{evans} to complete the tree lower bound.

The final step in the proof is to show that reconstructing the relative spin of $u$ and $v$ is at least as hard as reconstructing the spin of $u$ given the spins on the boundary of $U$ (which separates $u$ and $v$ with high probability). This is one of the most technically involved steps in the proof. In the lower bound for the random model \cite{mns}, this step --- called ``no long-range correlations'' --- was already an involved calculation on a closed-form expression for $\prob{G \given \sigma}$. In our setting, in order to get a closed-form expression for the conditional probability, we will sum over all the possible precursor graphs $\Gpre$ that could have yielded the observed graph $G$. We characterize these precursors in Lemma~\ref{lemma:unsurgery}, and the main reason for the \good and \marked nodes in our adversary construction is to simplify this process.

A natural open question is whether one can find the optimal monotone adversary. For instance, we have not even used the power to add edges within communities. Note however, that our current adversary is delicately constructed in order to make each step of the proof tractable. It is not too hard to propose alternative adversaries that seem stronger, but it is likely that one of the major steps in the proof (tree lower bound or no long-range correlations) will become prohibitively complicated.
Recent predictions on the performance of SDP methods \cite{jm-phase} could be suggestive of the true semirandom threshold.

\section{Coupling of Graph and Tree Models}
\label{sec:coupling}

It is well-known that sparse, random graphs are locally tree-like with very few short cycles. This is the basis of Mossel--Neeman--Sly's approach in coupling between a local neighborhood of the stochastic block model and the broadcast tree model \cite{mns}. Hence, our first order of business will be to couple a typical neighborhood from our graph distribution (Distribution~\ref{dist:graph}) to the following tree model, against which we can hope to show lower bounds.

\begin{distribution} Given the parameters $(a,b,R)$, generate a tree $T$ with spins $\sigma$ as follows:
\label{dist:tree}
\begin{itemize}
\item Start with a root vertex $\rho$, with spin $\sigma_\rho$ chosen uniformly at random from $\pm 1$.
\item Until a uniform depth $R+3$ (where the root is considered depth 0), let each vertex $v$ give birth to $\Pois(a/2)$ children of the same spin and $\Pois(b/2)$ children of opposite spin.
\item Apply the graph adversary (from Distribution~\ref{dist:graph}) to this tree; this involves assigning markings \good and \marked, and cutting some \marked nodes. Keep only the connected component of the root.
\item Remove all nodes of depth greater than $R$ (the bottom $3$ levels).
\item Remove any \marked node at depth $R$ along with its siblings, exposing the parent as a leaf.
\end{itemize}
\end{distribution}
The reason we trim $3$ levels at the bottom is to ensure that the markings and cuttings in $T$ match those in $G$, since these depend on the radius-$3$ neighborhood of each node and edge, respectively. Removing \marked nodes at level $R$ will ensure that the more complicated spin interactions of \marked nodes and their neighbors do not span across the boundary of a tree neighborhood in the graph --- we want to cleanly separate out a tree recovery problem. We use a slightly non-conventional definition of `leaves':
\begin{definition}
\label{def:leaves}
When we refer to the {\bf leaves} of a tree sampled from Distribution~\ref{dist:tree} we mean the nodes at depth $R$ plus any nodes at depth $R-1$ that are revealed during the last step. Nodes at depth $\le R-1$ that happen to give birth to no children are not considered leaves; if the root is \marked and gets cut by the adversary so that the tree is a single node, this single node is not considered a leaf.
\end{definition}

We can couple the above tree model to neighborhoods of the graph:
\begin{proposition}\label{prop:coupling}
Let $(a,b,n)$ be given, and let $\rho_G$ be any vertex of the graph. Let 
\begin{equation}\label{eq:R} R = \left\lfloor\frac{\log n}{10 \log(2(a+b))} \right\rfloor - 3. \end{equation}
There exists a joint distribution $(G,\sigma_G,T,\sigma_T)$ such that the marginals on $(G,\sigma_G)$ and $(T,\sigma_T)$ match the graph and tree models, respectively, while with probability $1 - o(1)$:
\begin{itemize}
\item there exists an isomorphism (preserving edges, spins, and the markings \good and \marked) between the tree $T$ and a vertex-subset $U$ of $G$,
\item the tree root corresponds to the vertex $\rho_G$ in the graph, and the leaves correspond to a vertex set $B$ that separates the interior $U \setminus B$ from the rest of the graph $G \setminus U$.
\item we have $|U| = O(n^{1/8})$.
\end{itemize}
\end{proposition}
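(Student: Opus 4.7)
My plan is to use the standard breadth-first-search coupling technique due to Mossel--Neeman--Sly, adapted to handle the markings and cuttings of our adversary. First I would sample the spins on all of $G$ and on the tree from the same unbiased prior, and then reveal the precursor graph $\Gpre$ by BFS starting from $\rho_G$. At BFS step exposing the children of an already-revealed vertex $v$ of spin $s$, the number of new same-spin (resp.\ opposite-spin) neighbors among the unrevealed vertices is distributed as $\mathrm{Bin}(n/2 - O(|U|), a/n)$ (resp.\ $\mathrm{Bin}(n/2 - O(|U|), b/n)$). I would couple these simultaneously to $\Pois(a/2)$ and $\Pois(b/2)$ offspring of $v$ in the tree $\Tpre$, using the standard total variation bound $d_{TV}(\mathrm{Bin}(m, p), \Pois(mp)) \le mp^2$, so each BFS step has coupling failure probability $O(1/n)$.

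Next I would run this BFS to depth $R+3$ rather than $R$, because the adversary's rules (the classification of a vertex as \good or \marked, and whether a \marked node is cut) are determined by the radius-$3$ neighborhood of a vertex and thus need one extra buffer of levels beyond $R$ to be decided consistently. To control the total size I note that conditional on typical behavior the exploration is dominated by a branching process of mean $\le a+b$, so a Chernoff/Markov argument gives $|U| \le (2(a+b))^{R+3} \le n^{1/10} = o(n^{1/8})$ a.a.s., which simultaneously makes the coupling error over all $\le |U|$ BFS steps $o(1)$ and makes the expected number of cycles within the revealed set $O(|U|^2/n) = o(1)$, so with high probability the revealed subgraph is a tree. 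On this event the BFS tree equipped with its spins is distributionally identical to the unadulterated broadcast tree truncated at depth $R+3$, and the radius-$3$ information needed for marking every node at depth $\le R$ and for deciding every cut incident to a node at depth $\le R$ is already present on both sides, so the markings and coin-flipped cuts of the adversary can be coupled to be identical between $G$ and $T$ on the region up to depth $R$.

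The main obstacle, and the reason for the last two bullets of Distribution~\ref{dist:tree}, is the behavior at the boundary at depth $R$. A \marked node at depth $R$ has its two neighbors at depth $R-1$ and $R+1$; if the adversary cuts it, this cut in $G$ removes an edge that crosses the boundary of $U$ into the (unrevealed) rest of the graph, which would prevent $U$ from being a neat tree-isomorphic set with its leaves as a separator. By deleting every \marked node at depth $R$ together with its siblings and exposing the parent at depth $R-1$ as a new leaf, we guarantee that no node of $U$ near the boundary is \marked, so no adversary cut straddles the frontier; the remaining edges incident to leaves of $U$ go only into the unrevealed exterior of $G$. Hence the leaf set $B$ genuinely separates $U \setminus B$ from $G \setminus U$, giving the second bullet of the proposition.

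Finally I would take a union bound over the $o(1)$ events that: (i) some BFS step fails to couple the binomial to its Poisson, (ii) some BFS step uncovers an edge back into the already-revealed set (creating a cycle or collision), (iii) the exploration exceeds the size bound $n^{1/8}$. On the complement of these events all three bulleted conclusions hold, yielding the proposition. The choice of $R$ in~\eqref{eq:R} is exactly what is needed so that $(2(a+b))^{R+3}$ is a small polynomial in $n$ and all of these failure probabilities are summable.
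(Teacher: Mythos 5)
Your proposal is correct and follows essentially the same route as the paper: couple the radius-$(R+3)$ BFS exploration of $\Gpre$ to the precursor broadcast tree, observe that markings and cuts are determined locally (radius $3$) so they transfer up to depth $R$, and then explain that the last two trimming steps of Distribution~\ref{dist:tree} exist precisely so that no adversary cut straddles the frontier and the leaves genuinely separate the interior from the rest of $G$. The only cosmetic difference is that the paper simply cites Mossel--Neeman--Sly \cite{mns} for the BFS/Poissonization coupling and the $O(n^{1/8})$ size bound, whereas you re-derive that coupling from scratch (binomial-to-Poisson TV bound, cycle count $O(|U|^2/n)$, union bound); both are fine, and your rederivation is consistent with what \cite{mns} proves.
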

\begin{proof}
As proven in \cite{mns}, there exists a coupling between the precursor graph $(\Gpre,\sigma_G)$ and the precursor tree $(\Tpre,\sigma_T)$, such that $\Tpre$ matches the radius $R+3$ neighborhood $U$ of $\rho_G$ in $\Gpre$, which has size $O(n^{1/8})$; this fails with probability $o(1)$.

Next the adversary assigns vertex markings (\good and \marked) to both models. The marking of each vertex is deterministic and only depends on the topology of the radius-$3$ neighborhood of the vertex. Thus the markings in $\Gpre$ and $\Tpre$ match up to radius $R$. Some \marked nodes are \emph{cuttable}, i.e.\ both their neighbors have opposite spin; the cuttable nodes in $\Gpre$ and $\Tpre$ also match up to radius $R$. The adversary now cuts the edges incident to a random subset of cuttable vertices; we can trivially couple the random choices made on $\Gpre$ with those on $\Tpre$ up to radius $R$. We keep only the connected component of the root in $T$; likewise let us keep only the corresponding vertices in $U$, i.e.\ only those still connected to $\rho_G$ by a path in $U$.

After removing nodes of depth greater than $R$, we have removed the subset of $T$ for which the markings and the action of the adversary differ from those in $G$. Thus at this stage, the tree exactly matches the radius $R$ neighborhood of $\rho_G$ in $G$, along the isomorphism given by the coupling from \cite{mns}. Any boundary vertex of this neighborhood must have distance exactly $R$ from $\rho_G$, thus its corresponding vertex in the tree has depth $R$ and is a leaf. Passing to the final step of removing \marked leaves in $T$ and their siblings, we remove their corresponding nodes from $U$; this does not change the boundary-to-leaves correspondence.
\end{proof}

\section{Random vs.\ Semirandom Separation in the Tree Model}\label{sec:treehard}

In this section we show that our tree distribution (Distribution~\ref{dist:tree}) evidences a random vs.\ semirandom separation in the broadcast tree model. Recall that the goal is to recover the spin of the root from the spins on the leaves, given full knowledge of the tree topology and the node markings (\good and \marked). Recall the non-conventional definition of `leaves' (Definition~\ref{def:leaves}).

Let $\Delta = \Delta(a,b,R)$ be the advantage over random guessing, defined such that $\frac{1+\Delta}{2}$ is the probability that the optimal estimator (maximum likelihood) succeeds at the above task. We will show that our tree model is asymptotically infeasible $(\Delta \to 0$ as $R \to \infty$) for a strictly larger range of parameters $(a,b)$ than that of the corresponding random model.

\begin{proposition}\label{prop:treehard} For every real number $k > 1$, there exists $0 < \eps < \frac12$ such that $k (1-2\eps)^2 > 1$, yet for a tree sampled from Distribution~\ref{dist:tree} with parameters $a = 2k(1-\eps)$, $b = 2k\eps$ and depth $R$, we have $\Delta(a,b,R) \to 0$ as $R \to \infty$.
\end{proposition}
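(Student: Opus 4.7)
My strategy is to marginalize out the (never-observed) spins of \marked nodes in Distribution~\ref{dist:tree}, reformulating the tree model as a broadcast on a multi-type random tree of non-\marked nodes with two kinds of edges (``direct'' and ``long''), and then to adapt the impossibility argument of Evans et al.~\cite{evans} to this reduced process. Each \marked node has degree exactly two in the tree (one parent, one child), so integrating out its spin and conditioning on the no-cut event of the adversary collapses the path ``non-\marked $\to$ \marked $\to$ non-\marked'' into a single binary symmetric channel. A direct calculation gives effective flip probability
$$\eps_{\mathrm{long}} = \frac{2\eps(1-\eps)}{1-\delta\eps^2}, \qquad \text{equivalently} \qquad 1 - 2\eps_{\mathrm{long}} = \frac{(1-2\eps)^2 - \delta\eps^2}{1-\delta\eps^2},$$
which matches the uncut two-step contraction $(1-2\eps)^2$ when $\delta = 0$ and is strictly smaller in magnitude whenever $\delta\eps > 0$. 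Moreover, any given \marked intermediate survives cutting with probability only $1-\delta\eps^2 < 1$, further reducing the effective branching through \marked nodes. Since the markings depend only on radius-$3$ topology and the cut decision depends only on spins within the \marked node's closed neighborhood, conditional independence of descendant subtrees given a non-\marked node's spin is preserved, so the reduced process remains a Markov random field on a random multi-type tree.

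Next, I would carry the Evans et al.\ contraction over to this reduced model: direct edges contribute a contraction factor of $(1-2\eps)^2$ and long edges contribute $(1-2\eps_{\mathrm{long}})^2$, and the root-to-leaf decay of the posterior-bias variance (or the appropriate Hellinger-type divergence between leaf-spin laws under the two root-spin hypotheses) is governed by an effective Kesten--Stigum-like spectral quantity combining these two contractions against the expected offspring counts of each edge type. Because $(1-2\eps_{\mathrm{long}})^2 < (1-2\eps)^4$ strictly and a positive fraction $\delta\eps^2$ of \marked intermediates is shaved off, this effective quantity is strictly less than its $\delta = 0$ counterpart whenever $\eps > 0$ and the probability $p_m$ that a random child is \marked is positive (which holds for any $k > 0$, since the radius-$3$ markings occur with positive probability). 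For any $k > 1$, one can then take $\eps$ slightly below the Kesten--Stigum threshold so that $k(1-2\eps)^2 > 1$ (so reconstruction is possible without the adversary), while by continuity the cut model's effective quantity remains $\le 1$, yielding $\Delta(a,b,R) \to 0$.

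The main obstacle is carrying out this multi-type contraction argument rigorously and pinning down the effective spectral quantity. One must handle the joint law of offspring counts $(X_{\mathrm{dir}}, X_{\mathrm{long}})$ induced by the radius-$3$ definitions of \good and \marked (which perturbs the clean Poisson structure), the mild topology-spin coupling introduced by conditioning on the no-cut event, and the two-term (direct vs.\ long) structure of the recursion arising from the fact that direct edges traverse one original level while long edges traverse two. The trimming of \marked depth-$R$ nodes and the three-level buffer built into Distribution~\ref{dist:tree} are deliberately arranged so that these boundary effects stay local and asymptotically negligible, but transferring the single-type $\chi^2$- or Hellinger-type contraction of \cite{evans} into this two-edge-type setting with random offspring remains the technically demanding step.
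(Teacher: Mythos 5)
Your calculation of the effective contraction across a \marked intermediate is correct and essentially matches the paper's: your $1-2\eps_{\mathrm{long}} = \frac{(1-2\eps)^2 - \delta\eps^2}{1-\delta\eps^2}$ is exactly the paper's $(1-2\eps')^2$, and the paper likewise views a non-cut \marked node as contributing a noisier two-edge channel of parameter $\eps' > \eps$. You also correctly identify the Evans et al.\ moment bound on $\Delta_T$ as the tool and propose the same final continuity argument at $\epscrit$. One cosmetic remark: the paper explicitly declines to use the noise increase $\eps' > \eps$, bounding $\Theta_v \le (1-2\eps)^{R-2}$ crudely for every leaf, and derives the separation \emph{only} from the drop in the expected number of leaves; your plan carries the noise increase through the multi-type recursion, which is sound but unnecessary extra work.

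The genuine gap is in the step you yourself flag as ``technically demanding'': you have no concrete mechanism to control the expected leaf count (or the effective branching factor of your multi-type tree) in the presence of the dependence between markings at nearby nodes, since whether a node is \good or \marked depends on its radius-$3$ topology and these neighborhoods overlap. Without independence across levels, the multi-type Kesten--Stigum spectral quantity you invoke is not obviously well-defined as a clean fixed recursion. The paper's resolution is a specific trick you do not mention: it designates periodic \emph{base} levels $1, 7, 13, \ldots$ and \emph{cutting} levels $4, 10, \ldots$, and only allows the adversary to cut at cutting levels (which is a relaxation, so it only increases the leaf count). The six-level windows between consecutive base levels then have i.i.d.\ growth, letting one define a single scalar $k'$ via $(k')^6 = \EE[\#\text{descendants on next base level}]$ and conclude $k' < k$ strictly. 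This periodic-window device is what makes the branching-factor bound rigorous, and it is the key missing idea in your proposal. Also note that the paper applies Evans's fixed-topology bound and then uses Jensen to pass to the random topology; it does not need to re-derive a multi-type spectral contraction, so the ``transferring the contraction into the two-edge-type setting'' difficulty you anticipate is actually sidestepped entirely by conditioning on the topology first.
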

Recall that the condition $k(1-2\eps)^2 > 1$ is the classical Kesten--Stigum bound, which is sufficient to beat random guessing in the random model \cite{kesten-stigum}. Several decades later, this bound was found to be tight for the random model \cite{evans}. There remain many open questions regarding the hardness of tree models, and some care was required in crafting an adversary for this problem that keeps the proof of this lower bound tractable.

Broadly, the Kesten--Stigum bound asserts that recoverability depends on the average branching factor (a contribution from the tree topology) and the amount of noise (a contribution from the spin propagation rule). The first step of our proof is to distinguish these in our distribution: we can first generate a tree topology from the appropriate marginal distribution, and then sample spins from the conditional distribution given this tree. We will show how to view this distribution on spins within the lower bound framework of \cite{evans}. Moreover, the new spin propagation rule is at least as noisy as the original, while our cutting adversary has strictly decreased the average branching factor.

\begin{proof}[Proof of Proposition~\ref{prop:treehard}]
We first re-state our tree model in terms of topology generation followed by spin propagation. Instead of letting each node give birth to $\Pois(a/2)$ same-spin children and $\Pois(b/2)$ opposite-spin children, we can equivalently let each node $v$ give birth to $\Pois(k)$ children and then independently choose the spin of each child to be the same as $v$ with probability $1-\eps$ and opposite to $v$ otherwise. (The equivalence of these steps is often known as `Poissonization'.) Here the correspondence between $(a,b)$ and $(k,\eps)$ is as usual: $k = \frac{a+b}{2}$ and $\eps = \frac{b}{a+b}$. This allows us to first sample the entire tree topology without spins. Then we can add markings (\good and \marked), since these depend only on the topology. Next, we sample the spins as above, by generating an appropriate independent $\pm 1$ value $f_e$ on each edge, indicating whether or not a sign flip occurs across that edge. Finally, we cut edges according to the adversary's rule.

\begin{distribution} Given the parameters $(a,b,R)$, generate a tree $T$ with spins $\sigma$ as follows:
\label{dist:tree-1}
\begin{itemize}
\item Start with a root vertex $\rho$. Until a uniform depth $R+3$, let each vertex $v$ give birth to $\Pois(k)$ nodes.
\item Mark nodes as \good and \marked according to the rules of the graph adversary.
\item For each edge $e$, generate an independent flip value $f_e$ which is $+1$ with probability $1-\eps$ and $-1$ otherwise.
\item Choose the root spin $\sigma_\rho$ uniformly from $\pm 1$. Propagate spins down the tree, letting $\sigma_v = \sigma_u f_e$ where $e$ is the edge connecting $v$ to its parent $u$.
\item Cut edges according to the adversary's rule, keeping only the connected component of the root.
\item Trim the bottom of the tree (according to the last two steps of Distribution~\ref{dist:tree}).
\end{itemize}
\end{distribution}

It is clear that this tree distribution (Distribution~\ref{dist:tree-1}) is identical to the original tree distribution (Distribution~\ref{dist:tree}). Our next step will be to re-state this model in yet another equivalent way. The goal now is to sample the final tree topology (including which edges get cut) before sampling the spins. Consider a \marked node $v$, its parent $u$, and its single child $w$. Instead of writing the spin propagation as independent flips $f_{(u,v)}$ and $f_{(v,w)}$, we will write it as random variables $c_v$ and $f_v$. Here $c_v$ is equal to 1 if the adversary decides to cut $v$ (and 0 otherwise), and $f_v$ is equal to 1 if $\sigma_u = \sigma_w$ (and $-1$ otherwise). This means if $f_{(u,v)} = f_{(v,w)} = 1$ then $c_v$ is 1 with probability $\delta$ (and 0 otherwise); and if we do not have $f_{(u,v)} = f_{(v,w)} = 1$ then $c_v = 0$. Hence $c_v = 1$ with probability $\eps^2 \delta$. If $c_v = 1$ then $f_v$ is irrelevant because the adversary will cut $v$ from the tree. Conditioned on $c_v = 0$, $f_v$ takes the value $+1$ with probability $\frac{(1-\eps)^2+\eps^2(1-\delta)}{1-\eps^2 \delta}$ and $-1$ otherwise. This means that for a \marked (but not cut) node $v$, the joint distribution of $(\sigma_u,\sigma_w)$ obeys a propagation rule that is equivalent to putting noise $\eps'$ (instead of $\eps$) on edges $(u,v)$ and $(v,w)$, where (using the definition of $\delta$)
$$ \eps' = \begin{cases}
\frac12 - \frac12 \sqrt{\frac{1-3\eps}{1+\eps}} & \text{if $\eps \leq \frac13$,} \\
\frac12 & \text{if $\eps \geq \frac13$.}
\end{cases}$$
One can verify that $\eps < \eps' \le \frac{1}{2}$ for all $\eps \in (0,\frac12)$. For most \marked nodes in the tree, we are simply going to replace $\eps$ by $\eps'$ on the incident edges, which gives the correct joint distribution of $(\sigma_u,\sigma_w)$ but not the correct joint distribution of $(\sigma_u,\sigma_v,\sigma_w)$. This is acceptable because the distribution of leaf spins (given the root spin) is still correct; for this reason we have made sure that none of the leaves are \marked nodes. The only time when we actually care about the spin of a \marked node is in the case where the root $\rho$ is \marked. In this case, the root might be cut by the adversary, yielding a 1-node tree (with no revealed leaves). Otherwise, if the root is \marked but not cut, our spin propagation rule needs to sample the spins of the root's two children (a \marked node that is not cut must have degree $2$) from the appropriate joint distribution over $\{\pm 1\}^2$; let $\mathcal{D}^+$ denote this distribution, conditioned on $\sigma_\rho = +1$. It will not be important to compute $\mathcal{D}^+$ explicitly (although it is straightforward to do so). We are now ready to state the next tree model. This model is equivalent to the previous ones in that the joint distribution of the root spin, topology, markings, and leaf spins is the same. The spins on the \marked nodes (other than the root) do not have the same distribution as before, but this is irrelevant to the question of recovering the root from the leaves.

\begin{distribution} Given the parameters $(a,b,R)$, generate a tree $T$ with spins $\sigma$ as follows:
\label{dist:tree-2}
\begin{itemize}
\item Start with a root vertex $\rho$. Until a uniform depth $R+3$, let each vertex $v$ give birth to $\Pois(k)$ nodes.
\item Mark nodes as \good and \marked according to the rules of the graph adversary.
\item Decide which nodes the adversary should cut: cut each \marked node independently with probability $\eps^2 \delta$.
\item Let $\eps_e = \eps'$ for edges $e$ that are incident to a \marked node, and let $\eps_e = \eps$ for all other edges.
\item For each edge $e$, generate an independent flip value $f_e$ which is $+1$ with probability $1-\eps_e$ and $-1$ otherwise.
\item Choose the root spin $\sigma_\rho$ uniformly from $\pm 1$. If the root is \marked but not isolated: let $u_1,u_2$ be its children, draw $(d_1,d_2) \sim \mathcal{D}^+$, and let $\sigma_{u_1} = d_1 \sigma_\rho$, $\sigma_{u_2} = d_2 \sigma_\rho$. For all other nodes, propagate spins down the tree as usual, letting $\sigma_v = \sigma_u f_e$ where $e$ is the edge connecting $v$ to its parent $u$.
\item Trim the bottom of the tree (according to the last two steps of Distribution~\ref{dist:tree}).
\end{itemize}
\end{distribution}

Our next step is to further modify this tree model in ways that only make it easier, in the sense that the advantage $\Delta(a,b,R)$ can only increase. First, we will address the issue of the complicated propagation rule $\mathcal{D}^+$ in the case that the root is \marked. Suppose the root is \marked (but not cut) and consider deterministically setting $\sigma_{u_1} = \sigma_{u_2} = \sigma_\rho$ where $u_1,u_2$ are the root's two children. From there, spin propagation continues as usual. We claim that this new model can only be easier than the original one. To see this, note that the new model can `simulate' the original one: upon observing leaf spins drawn from the new model, drawn $(d_1, d_2) \sim \mathcal{D}^+$ and then, for each $i \in \{1,2\}$ and for each leaf $w$ descended from $u_i$, replace $\sigma_w$ by $d_i \sigma_w$. For convenience we will also replace the first level of the tree by deterministic zero-noise propagation in the case where the root is not \marked. We can similarly argue that the model only gets easier, since one can simulate the old model using the new model by sampling the noise on the first level.

Note that we now have a tree model such that once the topology is chosen, the spin-propagation rule is very simple: at each edge $e$, a sign flip occurs independently with some probability $\eps_e \in \{\eps,\eps',0\}$. Hardness results for such a tree model were studied by Evans et al., who established the following bound on the advantage $\Delta_T$ for a fixed tree topology $T$ (\cite{evans} Theorem 1.3'):
$$ \Delta_T^2 \le 2 \sum_{\text{leaves }v} \Theta_v^2 \quad\text{ where }\quad \Theta_v \defeq \prod_{e \in \mathrm{path}(v)} \theta_e, $$
and where $\theta_e \defeq 1-2\eps_e$, and $\mathrm{path}(v)$ denotes the unique path from the root to leaf $v$. In our case, the tree $T$ (including both the tree topology and markings) is random, so the advantage is $\Delta = \EE_T[\Delta_T]$, by the law of total probability. By Jensen's inequality,
$$\Delta^2 = \left(\EE_T[\Delta_T]\right)^2 \le \EE_T[\Delta_T^2] \le \EE_T\left[ 2 \sum_{\text{leaves }v} \Theta_v^2 \right] \defeq W.$$
Our goal is to show $\Delta \to 0$ (as $R \to \infty$), so it is sufficient to show $W \to 0$. Any leaf in our tree is at height $\ge R-1$. Furthermore we have $\eps_e = 0$ for edges incident to the root and $\eps \le \eps_e \leq \frac12$ elsewhere. Therefore, for any leaf $v$, we can bound $\Theta_v \le (1-2 \eps)^{R-2}$. Note that we have not actually used the fact that $\eps'$ strictly exceeds $\eps$; we will in fact be able to prove our result by leveraging only the decrease in branching factor and not the noise increase. Now we have
$$W = \EE_T\left[ 2 \sum_{\text{leaves }v} \Theta_v^2 \right] \le 2 \, \EE_T[\text{\# leaves}] (1-2 \eps)^{2(R-2)}$$
and so we need only to bound the expected number of leaves. For $i = 0, \ldots, R$, let \emph{level $i$} denote the set of vertices at distance $i$ from the root (so the root is on level 0). Call levels $1,7, \ldots, 6j+1, \ldots$ the \emph{base} levels and call levels $4,10, \ldots, 6j-2, \ldots$ the \emph{cutting} levels. Imagine growing the tree using $\Pois(k)$ births as usual, and marking nodes as \good and \marked as usual. Now allow the adversary to cut vertices as usual, except refuse to cut any \marked node that is not on a cutting level. Note that this can only result in more leaves, and so this new process will give an upper bound on the expected number of leaves in our actual model.

To analyze the new process, suppose we have a node $v$ on a base level. The number of descendants $n_v$ of $v$ on the next base level is a random variable whose distribution does not depend on the ancestors of $v$; this independence is the reason for defining base levels and cutting levels. Define $k'$ by $(k')^6 \defeq \EE[n_v]$; then $(k')^6 < k^6$, since there is some nonzero probability that one of $v$'s descendants will be cut at the subsequent cutting level. The expected number of leaves in the entire tree is now at most $k^6 (k')^{R-6}$, where the factor $k^6$ accounts for the first level and the last $\le 5$ levels that are not followed by a base level. Now $W \le 2 k^6 (k')^{R-6} (1-2 \eps)^{2(R-2)}$, which goes to 0 as $R \to \infty$ provided that
$$k'(1-2 \eps)^2 < 1. $$
Hence this inequality suffices for impossibility of recovery.

However, $k'$ depends on the topology of the final tree model, which depends on $k$, $\eps$, and $\delta$, so this inequality is slightly more complicated than it looks. We write $k'(k,\eps)$ to make this dependence explicit; recall that $\delta$ is a (continuous) function of $\eps$. We argued above that $k'(k,\eps) < k$ for all $0 \le \eps \le \frac12$. In particular, at the critical value $\epscrit = \frac12 ( 1 - \frac{1}{\sqrt{k}})$ for the \emph{random} model, we have
$$k'(k,\epscrit) (1-2\epscrit)^2 < k (1-2\epscrit)^2 = 1.$$
But $k'(k,\eps) (1-2\eps)^2$ is a continuous function of $\eps$, so if we take $\eps$ to be slightly less than $\epscrit$, we must still have $k'(k,\eps) (1-2\eps)^2 < 1$, while $k(1-2\eps)^2 > 1$, as desired.
\end{proof}

This completes the proof of Theorem~\ref{thm:intro-treehard} (semirandom vs.\ random separation on trees). In Appendix~\ref{sec:lower-bound-explicit}, we explicitly compute a lower bound on the separation $k^6 - k'(k,\eps)^6$.

\section{No Long-Range Correlations}\label{sec:no-lrc}

The lower bound of the previous section will form the core of a lower bound for the graph: we have already established that, for a large range of parameters, it is impossible to learn the spin of a node $u$ purely from the spins at the boundary $B$ of its tree-like local neighborhood. We will now see why this makes recovery in the graph impossible: there is almost nothing else to learn about $\sigma_u$ from beyond its local neighborhood once we know the spins on $B$.

\begin{lemma}[No long-range correlations]\label{lemma:no-lrc}
Let a graph $G$ (including markings \good and \marked) and spins $\sigma$ be drawn from Distribution~\ref{dist:graph}.
Let $A = A(G)$, $B = B(G)$, $C = C(G)$ be a vertex-partition of $G$ such that
\begin{itemize}
\item $B$ separates $A$ from $C$ in $G$ (no edges between $A$ and $C$),
\item $|A \cup B| = O(n^{1/8})$,
\item $B$ contains no \marked nodes.
\end{itemize}
Then
$ \prob{\sigma_A \given G, \sigma_{BC}} = (1 + o(1)) \prob{\sigma_A \given G, \sigma_B}, $
for asymptotically almost every (a.a.e.) $G$ and $\sigma$.
\end{lemma}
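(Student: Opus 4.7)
My plan is to reduce the lemma to an approximate factorization of the joint density $\prob{G,\sigma}$. By Bayes' rule, the conclusion $\prob{\sigma_A \given G, \sigma_{BC}} = (1+o(1))\, \prob{\sigma_A \given G, \sigma_B}$ is equivalent to approximate conditional independence of $\sigma_A$ and $\sigma_C$ given $(G,\sigma_B)$, which in turn follows from showing that $\prob{G,\sigma} = (1+o(1))\, f(G,\sigma_{AB})\, g(G,\sigma_{BC})$ for some functions $f,g$ and for a.a.e.\ $(G,\sigma)$. So the goal is to exhibit such a factorization.

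To that end, I will expand the joint density as a sum over precursors:
$$\prob{G,\sigma} = \sum_{\Gpre \in \mathcal{P}(G)} \prob{\Gpre,\sigma}\, \prob{G \given \Gpre,\sigma},$$
where $\mathcal{P}(G)$, the set of precursors consistent with the observed graph and its markings, is characterized by Lemma~\ref{lemma:unsurgery}. Each precursor arises by re-attaching, at some subset of the isolated \marked vertices of $G$, their two cut edges to a pair of \good endpoints; $\prob{G \given \Gpre,\sigma}$ is then a product of $\delta$ and $1-\delta$ factors over \marked nodes, while $\prob{\Gpre,\sigma}$ is the ordinary stochastic block model density.

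The key structural step is to show that the reconnections split across $B$: choices at \marked nodes in $A$ involve only endpoints in $A \cup B$, and choices at \marked nodes in $C$ involve only endpoints in $B \cup C$. The hypothesis that $B$ contains no \marked node ensures no reconnection occurs on the separator itself, while the local ``dense-neighborhood'' definition of \good, combined with the structural description from Lemma~\ref{lemma:unsurgery}, forces reconnection endpoints to be essentially local. Any potential ``cross-$B$'' precursor requires a specific short-range SBM configuration whose total posterior contribution is $o(1)$ by the SBM edge-probability $O(1/n)$ together with $|A \cup B| = O(n^{1/8})$. Once reconnections are localized, both $\prob{\Gpre,\sigma}$ and $\prob{G \given \Gpre,\sigma}$ split into $A$-side and $C$-side factors, and the sum over precursors factorizes as well, since the two sides' reconnection choices are independent. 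The remaining ingredient is the MNS non-edge computation: the SBM contribution $\prod_{(i,j)\notin E(\Gpre)}(1 - Q_{\sigma_i\sigma_j}/n)$ over cross pairs $i\in A$, $j \in C$ is controlled by expanding $\log(1-Q/n)\approx -Q/n$ and using concentration of $\sum_{j\in C}\sigma_j \approx 0$ together with $|A| = O(n^{1/8}) = o(\sqrt{n})$ to show that these cross terms contribute a $(1+o(1))$ factor independent of $\sigma_A$.

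The main obstacle is the structural step: precursors that reconnect a \marked node $v \in A$ to a \good node $w \in C$ are not automatically ruled out by the fact that $B$ separates $A$ from $C$ in $G$, since the relevant edge of $\Gpre$ was \emph{deleted} by the adversary to produce $G$. Controlling their total contribution requires carefully combining the no-\marked-in-$B$ hypothesis, the bound $|A \cup B| = O(n^{1/8})$, and the characterization from Lemma~\ref{lemma:unsurgery} of which pairs of \good endpoints an isolated \marked vertex in $G$ can plausibly have originated from in $\Gpre$; this is precisely where the careful \good/\marked bookkeeping built into the adversary pays off.
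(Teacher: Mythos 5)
Your overall plan — expand the posterior as a sum over precursors using Lemma~\ref{lemma:unsurgery}, then try to factorize $\prob{G,\sigma}$ across the separator $B$ — starts out the same way as the paper, but the ``key structural step'' you propose is wrong, and this cannot be repaired along the lines you suggest.

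You claim that reconnections localize, i.e.\ that the two \good endpoints to which a \marked isolated node $v \in A$ is reattached in $\Gpre$ lie in $A \cup B$. This is simply false. Lemma~\ref{lemma:unsurgery} places no locality constraint on the \good endpoints beyond spin and goodness; a \marked node in $A$ may be reattached to \emph{any} pair of opposite-spin \good nodes anywhere in the graph. Since $|A \cup B| = O(n^{1/8})$ while $C$ contains $\Theta(n)$ \good nodes, the overwhelming majority of precursors $\Gpre \in L(\sigma,G)$ reattach $v$ to endpoints in $C$. Your fallback idea — that such cross-$B$ precursors have $o(1)$ total posterior weight because of the $O(1/n)$ SBM edge probability — also fails, for a structural reason: every $\Gpre \in L(\sigma,G)$ is obtained from $G$ by adding exactly $2m$ opposite-spin edges, so \emph{every} precursor carries exactly the same SBM edge-probability penalty $\left(\frac{b/n}{1-b/n}\right)^{2m}$ relative to $G$, regardless of where the reattached endpoints live. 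Precursors that reach into $C$ are therefore not suppressed at all; they are the dominant contribution to the sum.

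The paper's proof exploits exactly the observation you are missing: because every $\Gpre \in L(\sigma,G)$ yields the same value of the pairwise potential $\Phi(\sigma,\Gpre) = \Phi(\sigma,G)\left(\frac{b/n}{1-b/n}\right)^{2m}$, the sum over precursors collapses to $|L(\sigma,G)| \cdot \Phi(\sigma,G)\,(1-\delta)^{w}$ times a $\sigma$-independent constant. Nothing about the \emph{structure} of individual precursors needs to be localized or factored. The burden then shifts to showing that the purely combinatorial count $|L(\sigma,G)| = \binom{g/2+\beta}{2}^{m/2-\alpha}\binom{g/2-\beta}{2}^{m/2+\alpha}$ is insensitive to $\sigma_A$ up to a $1+o(1)$ factor (Lemma~\ref{lemma:binom-ratio}); this is a concentration and Taylor-expansion argument about the global census quantities $\alpha,\beta,m,g$, not a locality argument. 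After that, the paper factors the remaining $\Phi(\sigma,G)\,(1-\delta)^w$ term across $B$ (using that $G$ itself has no $A$--$C$ edges and $B$ has no \marked nodes) and handles the $A$--$C$ non-edge cross-terms with the MNS census concentration, which you do correctly anticipate. So the MNS non-edge ingredient in your sketch is right, but you need to replace the false localization step with the observation that $\Phi(\sigma,\cdot)$ is constant over $L(\sigma,G)$, and then separately establish the insensitivity of $|L(\sigma,G)|$ to $\sigma_A$.
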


\noindent Here $\sigma_A$ denotes the spins on $A$ and $\sigma_{BC}$ denotes the spins on $B \cup C$.

To clarify, when we refer to \good or \marked nodes in $G$ we are referring to the original markings that they were assigned in $\Gpre$. For instance, if a \marked node is cut by the adversary, it is still considered \marked in $G$ even though it no longer has degree $2$. Recall that the markings (\good and \marked) in $G$ are revealed to the reconstruction algorithm.

In Lemma~\ref{lemma:no-lrc}, we do crucially use that $B$ does not contain \marked nodes: if a node in $B$ were \marked with spin $+1$, and we then revealed that its neighbor in $C$ has spin $-1$, this would strengthen our belief that the neighbor in $A$ has spin $+1$, as otherwise the \marked node would have some substantial probability of having been cut, which is observed not to be the case. So Lemma~\ref{lemma:no-lrc} would be false if we allowed \marked nodes in $B$.

\subsection{Structure of Possible Precursors $\Gpre$}

The proof of Lemma~\ref{lemma:no-lrc} will require a thorough understanding of the distribution of spins given $G$, which we can only obtain by understanding the possible precursors of $G$ under the adversary. The reason for the \good and \marked nodes in our adversary construction is to make these precursors well-behaved. We start with some simple observations. Let $\Gpre$ be a graph that yields $G$ (with nonzero probability) by application of the adversary.
\begin{observation} \label{obs:good}
A \good node has degree at least $3$ in $G$.
\end{observation}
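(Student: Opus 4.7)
\medskip
\noindent\emph{Proof proposal for Observation~\ref{obs:good}.}

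The plan is to unwind the adversary's definition and check that every edge witnessing the \good marking is protected from deletion. Only two mechanisms remove edges at $v$ in passing from $\Gpre$ to $G$: (i) $v$ itself is \marked and gets cut, or (ii) some neighbor $w$ of $v$ is \marked and gets cut. Mechanism (i) is ruled out immediately, since the \marked label requires degree exactly $2$ in $\Gpre$, whereas $v$ being \good forces $\deg_{\Gpre}(v)\ge 3$ (indeed, $v$ has at least $3$ neighbors of degree $\neq 2$, hence in particular at least $3$ neighbors). So only mechanism (ii) matters, and it can only remove the edge $(v,w)$ when $w$ has degree $2$ in $\Gpre$.

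Now fix three neighbors $u_1,u_2,u_3$ of $v$ certifying the \good marking, i.e.\ with $\deg_{\Gpre}(u_i)\neq 2$ for each $i$. Since being \marked forces degree exactly $2$ in $\Gpre$, none of $u_1,u_2,u_3$ can be \marked, so none of the edges $(v,u_i)$ can be deleted via mechanism (ii). Combined with the exclusion of mechanism (i), these three edges survive into $G$, giving $\deg_G(v)\ge 3$.

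The only point worth flagging is an order-of-operations check: the markings \good and \marked are read off from $\Gpre$ and are frozen before any cutting takes place, so the three ``witness'' neighbors $u_i$ identified above cannot retroactively become \marked after the adversary's random $\delta$-coin flips; this is the one place where the adversary's layered structure (mark, then cut) is being used, but it is a purely formal obstacle rather than a substantive one.
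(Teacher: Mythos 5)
Your argument is correct and matches the paper's proof, which is a one-liner with the same content: the three witness neighbors $u_i$ have degree $\neq 2$ in $\Gpre$, hence cannot be \marked, hence the adversary (which only deletes edges incident to \marked vertices) never removes the edges $(v,u_i)$. Your expanded case analysis into mechanisms (i) and (ii) and the order-of-operations remark are just a more explicit rendering of the same idea.
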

\begin{proof}
A \good node has at least three non-degree-$2$ neighbors in $\Gpre$ and the adversary will not remove these.
\end{proof}
\begin{observation} \label{obs:create}
The adversary does not create any new degree-$2$ nodes. In other words, if a node has degree $2$ in $G$ then it has degree $2$ in $\Gpre$ (with the same two neighbors).
\end{observation}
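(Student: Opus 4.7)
The plan is to argue by contradiction that if $v$ has degree $2$ in $G$, then $v$'s incident edges were untouched by the adversary, and therefore $v$ had the same degree $2$ with the same two neighbors in $\Gpre$. The key is that the only edge modifications performed by the adversary are deletions of pairs of edges incident to cut \marked nodes, and I will show that for $v$ to lose any such edge, $v$ would itself have to be \good, which already forces degree at least $3$ in $G$ by Observation~\ref{obs:good}.

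More concretely, I would first note from Distribution~\ref{dist:graph} that the adversary never adds edges; the only edges it deletes are the two edges incident to each \marked node that it decides to cut. Hence if $v$ has strictly fewer neighbors in $G$ than in $\Gpre$, at least one neighbor $w$ of $v$ in $\Gpre$ was a cut \marked node, and the edge $(v,w)$ was removed.

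Next I would invoke the definition of \marked: a node is \marked only if both of its neighbors in $\Gpre$ are marked \good. In particular, if any neighbor of $v$ in $\Gpre$ is \marked, then $v$ itself is \good. But Observation~\ref{obs:good} (and its proof) shows that a \good node has at least three neighbors of degree $\neq 2$ in $\Gpre$, none of which can be \marked since \marked nodes have degree $2$. The adversary cannot touch any of these three edges, so $v$ would retain degree at least $3$ in $G$, contradicting our assumption that $v$ has degree $2$ in $G$. Hence no neighbor of $v$ in $\Gpre$ was a cut \marked node, and so $v$'s entire neighborhood is preserved from $\Gpre$ to $G$.

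I do not anticipate any real obstacle: the entire argument is a direct unpacking of the adversary's rules for marking nodes \good and \marked, and requires no probabilistic input. The only small point to be careful about is that \good and \marked refer to the markings assigned in $\Gpre$ (as emphasized in the comments following Lemma~\ref{lemma:no-lrc}), so that the status of a cut \marked node is still \marked in $G$ and the status of $v$ as \good is determined from $\Gpre$'s topology rather than $G$'s.
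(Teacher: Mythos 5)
Your proof is correct and takes essentially the same approach as the paper: both argue that a degree-$2$ node in $G$ cannot have lost an edge, because losing an edge would force it to be \good (as the neighbor of a cut \marked node), which by Observation~\ref{obs:good} would mean degree at least $3$ in $G$. The one step you leave implicit is ruling out that $v$ itself is the cut \marked node, but this is immediate since a cut node becomes isolated and thus cannot have degree $2$ in $G$.
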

\begin{proof}
In order to create a degree-$2$ node $v$, the adversary must cut at least one edge incident to $v$. This can only happen if $v$ is either \marked or \good. It cannot be \marked because it does not have degree $2$ in $\Gpre$. But it also cannot be \good or else it has degree at least 3 in $G$ (by Observation~\ref{obs:good}).
\end{proof}
\noindent The key property of the \good/\marked construction is that the adversary cannot change whether or not a node has the following `goodness' property.
\begin{definition}
Say that a node has the {\bf goodness} property with respect to a graph if at least three of its neighbors do not have degree $2$.
\end{definition}
\begin{lemma}
\label{lemma:goodness}
The \good nodes are precisely the nodes that have the goodness property with respect to $G$.
\end{lemma}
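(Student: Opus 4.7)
The plan is to prove both directions of the iff by short case analyses on the adversary's action, relying only on the fact that the adversary cuts edges solely at \marked nodes, and that any \marked node has $\Gpre$-degree exactly $2$ with both neighbors \good.

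For the forward direction, suppose $v$ is \good, so in $\Gpre$ it has $\ge 3$ neighbors $w_1,\ldots,w_r$ of degree $\ne 2$. None of these $w_i$ can be \marked, since \marked forces $\Gpre$-degree $2$; and $v$ itself has $\Gpre$-degree $\ge 3$ (having $\ge 3$ neighbors), so $v$ is not \marked either. Hence no edge $(v,w_i)$ has a \marked endpoint, and each such edge survives in $G$. To show each $w_i$ still has degree $\ne 2$ in $G$: if $w_i$ is \good, then Observation~\ref{obs:good} gives $G$-degree $\ge 3$; otherwise $w_i$ is neither \good nor \marked, and the argument of Observation~\ref{obs:create} applies to show no edge incident to $w_i$ can be cut (any cut would require a \marked neighbor, which by definition would force $w_i$ itself to be \good, a contradiction). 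Thus the $G$-degree of $w_i$ equals its $\Gpre$-degree, which is $\ne 2$, and $v$ satisfies the goodness property in $G$.

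For the converse, assume $v$ has the goodness property in $G$, with witnesses $w_1,\ldots,w_r$. Since the adversary never adds edges, each $w_i$ is also a neighbor of $v$ in $\Gpre$; it suffices to show each has $\Gpre$-degree $\ne 2$. Suppose for contradiction that some $w_i$ has $\Gpre$-degree $2$, with neighbors $v$ and some $u$. Then its $G$-degree is at most $2$ and not equal to $2$; on the other hand, the edge $(v,w_i)$ survives in $G$, so the $G$-degree is at least $1$. Hence the $G$-degree is $1$, meaning $(u,w_i)$ was cut. This cut requires $u$ or $w_i$ to be \marked. If $w_i$ were \marked then both its edges would be cut together, contradicting the survival of $(v,w_i)$. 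So $u$ is \marked, which by definition forces $w_i$, as a neighbor of a \marked node, to be \good. But \good requires $\ge 3$ neighbors of $\Gpre$-degree $\ne 2$, while $w_i$ has only $2$ neighbors in $\Gpre$ altogether, a contradiction. So every $w_i$ has $\Gpre$-degree $\ne 2$, and $v$ is \good.

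The only nontrivial step is the case analysis in the reverse direction, tracking how a $\Gpre$-degree-$2$ node might end up with a different degree in $G$; but the definitions of \good and \marked are engineered to rule out every alternative, and there is no probabilistic content—this is a deterministic combinatorial fact about any $\Gpre$ and any valid output of the adversary.
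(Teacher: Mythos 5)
Your proof is correct and follows essentially the same approach as the paper: both directions proceed by tracking how the adversary's cuts interact with degree-$2$ nodes, using the facts that cuts only occur at \marked nodes and that \marked nodes have \good neighbors. The paper phrases the converse contrapositively (not \good $\Rightarrow$ not goodness in $G$) and leans more directly on Observations~\ref{obs:good} and~\ref{obs:create}, whereas you re-derive a piece of Observation~\ref{obs:create}'s argument inline; this is slightly more verbose but substantively identical.
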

\noindent Note that this is not tautological because the \good nodes are defined as the nodes that have the goodness property with respect to $\Gpre$, not $G$.
\begin{proof}
We need to show that the goodness property is invariant under the adversary's action. First we assume $v$ has goodness in $\Gpre$ and show that it also has goodness in $G$. Since $v$ has the goodness property in $\Gpre$, it has three neighbors $u_1,u_2,u_3$ in $\Gpre$ that do not have degree $2$. Each $u_i$ remains connected to $v$ in $G$ since the adversary only cuts edges that are incident to a degree-$2$ node. Furthermore, each $u_i$ does not have degree $2$ in $G$ because degree-$2$ nodes cannot be created (Observation~\ref{obs:create}).

Now we show the converse: assume $v$ does not have goodness in $\Gpre$ and show that it still does not have goodness in $G$. The only way that $v$ could obtain goodness is if at least one of its degree-$2$ neighbors $u$ becomes non-degree-$2$ (while remaining connected to $v$). But this is impossible because whenever the adversary cuts an edge incident to a degree-2 vertex $u$, it causes $u$ to become isolated in $G$.
\end{proof}

\noindent Next we state an easy fact about the structure of \marked nodes in $G$.
\begin{lemma}
\label{lemma:marked}
The \marked nodes in $G$ are precisely the degree-$2$ nodes with two \good neighbors, plus some isolated nodes.
\end{lemma}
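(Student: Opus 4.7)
The plan is to verify this by direct case analysis, exploiting the fact that (by Lemma~\ref{lemma:goodness}) \good-ness is an invariant of the adversary's action and (by Observation~\ref{obs:create}) degree-$2$-ness cannot be created by the adversary. The statement has two inclusions, and I would prove each separately.

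For the forward direction, I would start with a \marked node $v$ in $G$. By definition, in $\Gpre$ the node $v$ has degree exactly $2$, and both of its $\Gpre$-neighbors $w_1,w_2$ are \good. The adversary either leaves $v$ alone (so $v$ retains its two neighbors in $G$) or cuts both incident edges (isolating $v$ in $G$). In the first case, $v$ still has degree $2$ in $G$, and $w_1,w_2$ remain \good; this is what we want. In the second case, $v$ ends up isolated, which is allowed by the statement.

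For the converse, I would start with a degree-$2$ node $v$ in $G$ whose two $G$-neighbors are marked \good. By Observation~\ref{obs:create}, $v$ already had degree $2$ in $\Gpre$ with the same two neighbors $w_1,w_2$, so it suffices to check that $w_1,w_2$ are also \good as $\Gpre$-markings, which is exactly what the definition of \marked requires. Since the \good markings in $G$ are assigned based on $\Gpre$ (not $G$), this requires invoking Lemma~\ref{lemma:goodness}: a node marked \good in $G$ is precisely a node with the goodness property with respect to $G$, which in turn coincides with being \good in $\Gpre$. Therefore $v$ is degree-$2$ in $\Gpre$ with two \good $\Gpre$-neighbors and is thus \marked by definition.

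I do not anticipate any real obstacles here: the lemma is essentially a bookkeeping statement, and the earlier observations have already isolated the two facts that make the argument go through (degree-$2$ nodes are not created, and \good-ness is preserved). The only mild subtlety worth flagging in the write-up is that the markings \good and \marked are defined relative to $\Gpre$ but observed on $G$, so one must be careful to invoke Lemma~\ref{lemma:goodness} when translating between the two descriptions. The statement's allowance for ``some isolated nodes'' is exactly the slack needed to absorb the \marked nodes that the adversary chose to cut.
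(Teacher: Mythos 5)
Your proof is correct and takes essentially the same route as the paper: use the mutually exclusive casework on the adversary's action in the forward direction, and Observation~\ref{obs:create} in the converse. One small remark: your detour through Lemma~\ref{lemma:goodness} in the converse is unnecessary. The markings \good and \marked are labels assigned once in $\Gpre$ and carried over to $G$ unchanged (the paper is explicit about this right after the statement of Lemma~\ref{lemma:no-lrc}), so "$w_1,w_2$ are \good in $G$" is \emph{literally the same statement} as "$w_1,w_2$ are \good in $\Gpre$" --- no translation via the goodness property is needed. Lemma~\ref{lemma:goodness} answers a different question (that the marking is also what you would get by re-deriving it from $G$), which is not what this lemma requires. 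The paper's proof accordingly skips this step and just notes that the $\Gpre$-neighbors coincide with the $G$-neighbors and are \good.
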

\begin{proof}
Every \marked node in $\Gpre$ has degree-$2$ with two \good neighbors. If the node is cut, it becomes isolated in $G$; otherwise it remains degree-$2$ with two \good neighbors. Conversely, let $v$ be degree-$2$ in $G$ with two \good neighbors; we will show $v$ is \marked. Since degree-$2$ nodes cannot be created (Observation~\ref{obs:create}), $v$ has degree-$2$ in $\Gpre$ with the same two \good neighbors, and is therefore \marked.
\end{proof}

\noindent Now we are ready to characterize the possible precursors $\Gpre$ of a given $G$.
\begin{lemma}\label{lemma:unsurgery}
Suppose we have a graph $G$ (including node markings) and spins $\sigma$, drawn from Distribution~\ref{dist:graph}. The probability that a graph $\Gpre$ yields $G$ under the action of the adversary is zero unless $\Gpre$ can be obtained from $G$ by connecting each isolated \marked node $v$ of $G$ to exactly two \good nodes of opposite spin to $v$. In this case, if $G$ has $m = m(G)$ isolated \marked nodes and $w = w(\sigma,G)$ \marked nodes with two opposite-spin neighbors, then
$$ \prob{G \given \Gpre, \sigma} = (1-\delta)^w \delta^m $$
where we define $0^0 = 1$ in the case $\delta = 1, w = 0$.
\end{lemma}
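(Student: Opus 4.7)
The plan is to combine the structural observations already proved (Observations~\ref{obs:good},\ref{obs:create} and Lemmas~\ref{lemma:goodness},\ref{lemma:marked}) with a direct Bernoulli computation. The adversary performs a very restricted set of operations: it never adds edges, and it only cuts pairs of edges incident to cuttable \marked nodes (i.e.\ \marked nodes whose two neighbors have opposite spin), isolating them. So a precursor $\Gpre$ of $G$ is obtained from $G$ by reversing some subset of these cuts, i.e.\ by reconnecting some collection of isolated nodes of $G$ back to pairs of their original neighbors.

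First I would characterize the possible precursors. Any $\Gpre$ with $\prob{G \given \Gpre, \sigma} > 0$ must contain all edges of $G$. The only nodes of $\Gpre$ that could have had edges deleted are the cuttable \marked nodes; by Lemma~\ref{lemma:marked} these either end up as degree-$2$ nodes of $G$ with two \good neighbors (if not cut) or as isolated \marked nodes of $G$ (if cut). Since \good-ness is preserved by the adversary (Lemma~\ref{lemma:goodness}), the \good nodes of $G$ and of $\Gpre$ are the same set. Hence $\Gpre$ is obtained from $G$ by, for each isolated \marked node $v$, choosing two nodes of $\Gpre$ to reconnect to $v$; these two nodes must be \good (by the definition of \marked) and must both have spin opposite to $v$ (otherwise $v$ would not be cuttable and could not have been cut). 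Conversely, any $\Gpre$ obtained from $G$ in this way is indeed a valid precursor of positive probability.

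Given such a $\Gpre$, I would enumerate the cuttable \marked nodes in $\Gpre$. By the reconnection rule and Lemma~\ref{lemma:marked}, these split cleanly into two groups: the $w$ non-isolated \marked nodes of $G$ with two opposite-spin neighbors (which survived the adversary) and the $m$ isolated \marked nodes of $G$ (which were cut). Since the adversary flips an independent $\delta$-biased coin for each cuttable \marked node in $\Gpre$, the probability that the adversary produces exactly $G$ factors as
\[
\prob{G \given \Gpre, \sigma} = (1-\delta)^{w}\,\delta^{m},
\]
with the convention $0^0 = 1$ covering the edge case $\delta = 1$, $w = 0$.

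The main subtlety I anticipate is verifying that the two groups above exhaust the cuttable \marked nodes of $\Gpre$ — there should be no hidden cuttable node in $\Gpre$ that was \emph{not} cut yet also fails to appear in $G$ as a degree-$2$ \marked node. This is exactly what Lemma~\ref{lemma:marked} and Observation~\ref{obs:create} prevent: a non-cut \marked node retains both original \good neighbors and so appears verbatim in $G$, and no new degree-$2$ or \marked structure can be created by the adversary. Once this bookkeeping is in place, the factorization of the probability into independent Bernoulli$(\delta)$ decisions is immediate.
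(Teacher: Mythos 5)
Your forward direction and your final Bernoulli calculation match the paper's argument, but the converse/probability step has a genuine gap. Once you fix a candidate $\Gpre$ by the reconnection rule, the ``cuttable \marked\ nodes in $\Gpre$'' means the nodes the adversary would mark \marked\ when running its marking rule on $\Gpre$'s topology, and you have not shown that this set coincides with the set of nodes labeled \marked\ on $G$. If it did not coincide, running the adversary on $\Gpre$ could never reproduce $G$'s markings, and $\prob{G \given \Gpre, \sigma}$ would be $0$; so the agreement of markings is exactly what must be verified before your clean $w$/$m$ split is justified. Your proposed fix cites Lemma~\ref{lemma:marked} and Observation~\ref{obs:create}, but those are statements proved under the hypothesis that $\Gpre$ \emph{is} the actual precursor, so invoking them here is circular.

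The paper closes this hole with a short, non-circular observation: reconnecting an isolated \marked\ node $v$ to two \good\ nodes does not change the goodness property of any vertex. Indeed, $v$ has degree $2$ in $\Gpre$ and hence contributes nothing to any neighbor's goodness count, and the two \good\ nodes have degree $\geq 3$ both before and after so their degree-$2$-status is unaffected. Combined with Lemma~\ref{lemma:goodness}, this shows the \good\ markings the adversary computes on the reconstructed $\Gpre$ agree with those shown on $G$, and then Lemma~\ref{lemma:marked} gives agreement of the \marked\ markings as well. With that observation inserted, your partition into the $w$ surviving and $m$ cut nodes is correct and the factorization $(1-\delta)^w\delta^m$ follows, essentially as in the paper.
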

\noindent Recall that $\delta$ is the probability with which the adversary cuts each $\marked$ node that has two opposite-spin neighbors.
\begin{proof}
First suppose $\Gpre$ can yield $G$ via the adversary. We will show that $\Gpre$ takes the desired form. The nodes cut by the adversary are precisely the isolated \marked nodes of $G$. Every such node was originally connected to two opposite-spin \good nodes in $\Gpre$. Therefore $\Gpre$ can be obtained from $G$ by connecting each isolated \marked node to exactly two opposite-spin \good nodes. Conversely, let $\Gpre$ be obtained from $G$ by connecting each isolated \marked node to exactly two opposite-spin \good nodes. The \good nodes in $G$ are (by Lemma~\ref{lemma:goodness}) precisely the nodes that have the goodness property in $G$. These are also precisely the nodes that have the goodness property in $\Gpre$, since the process of connecting each isolated \marked node to two \good nodes does not change goodness. Consider running the adversary on $\Gpre$. The nodes that it marks \good will be precisely the \good nodes in $G$. Also, the nodes that it marks \marked will be precisely the \marked nodes in $G$; this is clear from Lemma~\ref{lemma:marked}. This means the adversary will output $G$ iff it chooses to cut the $m$ \marked nodes that are isolated in $G$ and chooses not to cut the $w$ \marked nodes that have two opposite-spin neighbors in $G$. This happens with probability $(1-\delta)^w \delta^m$.
\end{proof}

\subsection{Proof of No Long-Range Correlations}

\begin{proof}[Proof of Lemma~\ref{lemma:no-lrc}]

We proceed as follows. In the ordinary stochastic block model, given an observed graph, the probability of any set of spins $\sigma$ factorizes as a product of pairwise interactions. In our model, by summing over all possible precursors $\Gpre$ that could have lead to $G$ via the adversary, we find the same pairwise interactions together with a further global, combinatorial interaction. We show that the neighborhood $A$ is too small to make a significant impact on this global interaction, while the pairwise interactions between $A$ and $C$ are weak, so that the only factors relevant to $A$ are the pairwise interactions within $A \cup B$, which are independent of the spins in $C$.

Let $\Gpre$ denote the graph before the action of the adversary. Then $\prob{\Gpre \given \sigma}$ factors into the following potentials on unordered pairs: $\prob{\Gpre \given \sigma} = \Phi(\sigma,\Gpre) \defeq \prod_{u,v} \phi(\sigma_u,\sigma_v,\Gpre)$ with $u,v$ ranging over unordered pairs of distinct vertices, where
$$ \phi(\sigma_u,\sigma_v,H) = \begin{cases}
a/n   & \text{if $u     \sim v$ and $\sigma_u  =   \sigma_v$,} \\
b/n   & \text{if $u     \sim v$ and $\sigma_u \neq \sigma_v$,} \\
1-a/n & \text{if $u \not\sim v$ and $\sigma_u  =   \sigma_v$,} \\
1-b/n & \text{if $u \not\sim v$ and $\sigma_u \neq \sigma_v$}
\end{cases} $$
where $\sim$ denotes adjacency in the graph $H$.

To leverage this description, let us sum over all possible precursors $\Gpre$ of $G$ under the adversary. Let $L(\sigma,G)$ denote the set of possible precursors $\Gpre$ of $G$, as described by Lemma~\ref{lemma:unsurgery}: $\Gpre$ is obtained from $G$ by connecting each of the $m$ isolated \marked nodes to exactly two \good nodes of the opposite spin.
\begin{align*}
\prob{\sigma \given G}
&= \sum_{\Gpre \in L(\sigma,G)} \prob{\Gpre \given \sigma} \prob{\sigma} \prob{G \given \sigma,\Gpre} / \prob{G} \\
&\propto \sum_{\Gpre \in L(\sigma,G)} \Phi(\sigma,\Gpre) \prob{G \given \sigma,\Gpre} \\
&\propto \sum_{\Gpre \in L(\sigma,G)} \Phi(\sigma,\Gpre) (1-\delta)^w \delta^m
\end{align*}
The proportionality constant hidden by `$\propto$' depends on $G$ but not on $\sigma$. As $\Gpre$ is obtained from $G$ by replacing $2m$ opposite-spin non-edges by opposite-spin edges, we have for every $\Gpre \in L(\sigma,G)$,
$$\Phi(\sigma,\Gpre) = \Phi(\sigma,G)\left(\frac{b/n}{1-b/n}\right)^{2m}.$$
Thus none of the terms depend on the precise choice of $\Gpre$:
\begin{align*}
\prob{\sigma \given G} &\propto \left|L(\sigma,G)\right| \Phi(\sigma,G) \left( \frac{b/n}{1-b/n} \right)^{2m} (1-\delta)^w \delta^m \\
&\propto \left|L(\sigma,G)\right| \Phi(\sigma,G) (1-\delta)^w.
\end{align*}
where we have dropped the constants that only depend on $G$ (and not $\sigma$).

Now we compute $|L(\sigma,G)|$. Suppose there are $m/2 + \alpha$ isolated \marked nodes of positive spin and $m/2 - \alpha$ of negative spin. Suppose there are $g/2 + \beta$ \good nodes of positive spin, and $g/2 - \beta$ of negative spin. Then the number of possible $\Gpre$ is
$$ |L(\sigma,G)| = \binom{g/2 + \beta}{2}^{m/2 - \alpha} \binom{g/2 - \beta}{2}^{m/2 + \alpha}. $$

\noindent We can establish that this global $L$ factor only barely depends on the spins of $A$:
\begin{lemma}\label{lemma:binom-ratio}
For a.a.e.\ $G, \sigma_B, \sigma_C$, it holds for all $\sigma_A$, $\sigma_A'$ that
$$ \frac{|L(\sigma_A,\sigma_B,\sigma_C,G)|}{|L(\sigma_A',\sigma_B,\sigma_C,G)|} = 1 + o(1). $$
\end{lemma}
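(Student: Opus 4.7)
The plan is to treat $\log|L(\sigma,G)|$ as a smooth function of the global imbalances $\alpha$ and $\beta$, and to exploit that flipping spins only inside the small set $A$ perturbs these imbalances by at most $|A|=O(n^{1/8})$. Two ingredients are needed: concentration estimates on the global counts $g,m,\alpha,\beta$, and a Lipschitz estimate on $\log|L|$ as a function of $(\alpha,\beta)$.

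For the concentration, I would show that for a.a.e.\ $(G,\sigma)$, $g,m=\Theta(n)$ and $|\alpha|,|\beta|=O(\sqrt{n\log n})$. Since goodness depends only on the radius-$3$ topology of $\Gpre$, a positive constant fraction of vertices is \good in expectation whenever $k=(a+b)/2>1$, and an edge-exposure Azuma bound gives $g=\Theta(n)$ a.a.s.; a parallel local computation handles \marked nodes, and the adversary's independent Bernoulli-$\delta$ cuts then yield $m=\Theta(n)$ isolated \marked nodes by Chernoff. For $\alpha$ and $\beta$, global spin-flip symmetry makes each individual spin marginally uniform, and a bounded-differences/Efron--Stein argument controls the imbalance over the (spin-dependent) \good and isolated-\marked sets.

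Given these estimates I would Taylor-expand
\[ \log|L(\sigma,G)| = (m/2-\alpha)\log\binom{g/2+\beta}{2} + (m/2+\alpha)\log\binom{g/2-\beta}{2} \]
in $(\alpha,\beta)$ with $g,m$ held fixed. A direct differentiation gives $\partial_\alpha\log|L|=O(|\beta|/g)$ and $\partial_\beta\log|L|=O(|\alpha|/g + m|\beta|/g^2)$, both of order $O(\sqrt{\log n/n})$ under the estimates above. Replacing $\sigma_A$ by any $\sigma_A'$ shifts $\alpha$ and $\beta$ each by at most $O(n^{1/8})$, so the total change in $\log|L|$ is $O(n^{1/8})\cdot O(\sqrt{\log n/n}) = o(1)$, and exponentiating yields the ratio $1+o(1)$ uniformly in $\sigma_A,\sigma_A'$.

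The main obstacle will be the concentration step, particularly establishing $m=\Theta(n)$: one must rule out pathological regimes in which the adversary rarely cuts, via a local branching-process computation showing that a positive fraction of degree-$2$ vertices in $G(n,a/n,b/n)$ has two \good neighbors of opposite spin. Once the linear-order lower bounds on $g$ and $m$ are secured, the Taylor expansion and the perturbation bound on $(\alpha,\beta)$ are routine.
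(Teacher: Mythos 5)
Your proposal follows essentially the same route as the paper: you establish concentration of $g,m=\Theta(n)$ and of the spin imbalances $\alpha,\beta$ by a bounded-differences/Azuma argument, then observe that changing $\sigma_A$ perturbs $(\alpha,\beta)$ by only $O(n^{1/8})$ while $\log|L|$ has gradient $O(\sqrt{\log n/n})$ in that range, giving a $1+o(1)$ ratio; the paper reaches the same conclusion by expanding $\log|L|$ directly and observing that all leading terms cancel in the ratio, which is just a different packaging of the same Taylor-type perturbation estimate. The one place to be careful, which the paper handles explicitly, is that the bounded-differences step for $\alpha,\beta$ does not hold with an $O(1)$ constant in the raw graph — adding or deleting the edges at a high-degree vertex can change the cuttable status of every node in its radius-$4$ neighborhood — so you must first truncate to maximum degree $O(\log n)$ (losing only polylogarithmic factors, which is harmless for the final $o(1)$ conclusion).
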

\noindent The proof proceeds via concentration of measure and Taylor expansion, and is deferred to Appendix~\ref{sec:binom-ratio-proof}.

Paraphrasing this lemma, a.a.s.\ over $(\sigma_B,G)$, there exists a `good' subset of a.a.e.\ $\sigma_C$ such that $|L(\sigma,G)|$ is independent of $\sigma_A$ up to a $1 + o(1)$ factor. Let us also require of `good' $\sigma_C$ that the census (sum of spins) is $O(\sqrt{n} \log n)$; as $C$ consists of all but $O(n^{1/8})$ nodes of $G$, it is equivalent to ask for the same concentration over all spins of $G$, and this occurs a.a.s.\ by Hoeffding. Let $\Omega$ be this set of `good' $\sigma_C$ values. Since $\sigma_C \in \Omega$ a.a.s.\ we have for any set $D = D(G)$ (which will be taken as either $A$ or $A \cup B$ below),
\begin{equation}
\label{eq:Omega-fact}
\prob{\sigma_D,G} = (1+o(1)) \prob{\sigma_D,\sigma_C \in \Omega,G} \text{ a.a.s.\ over $\sigma,G$.}
\end{equation}
We include a rigorous proof of this statement in Appendix~\ref{sec:Omega-fact-proof}.

It will be useful to factor the product $\Phi$ of classical pairwise interactions into subsets of these interactions as
$$ \Phi(\sigma,G) = Q_{AB, AB}(\sigma_{AB},G) Q_{BC,C}(\sigma_{BC},G) Q_{A,C}(\sigma_{AC},G). $$
Here, for instance, $Q_{BC,C}$ denotes the product of $\phi(\sigma_u,\sigma_v,G)$ over unordered pairs $u,v$ consisting of one vertex from $B \cup C$ and one from $C$.
The corresponding ``no long-range correlations'' proof in \cite{mns} established that for `good' values $\sigma_C \in \Omega$, $Q_{A,C}(\sigma_{AC},G) = (1 + o(1)) K(G)$ for a quantity $K(G)$ depending only on $|A|$ and $|C|$. Their proof of this fact holds verbatim in our setting: it only requires that $|A| = o(n)$ and that the number of $+1$ spins in the graph is distributed as $\Binom(n,\frac12)$.

Similarly, we can factor the $(1-\delta)^w$ term as
$$(1-\delta)^{w(\sigma,G)} = (1-\delta)^{w_A(\sigma_{AB},G)}(1-\delta)^{w_C(\sigma_{BC},G)}$$
where, for instance, $w_A$ counts the number of \marked nodes in $A$ with two opposite-spin neighbors. This factorization holds as there are no $A$--$C$ edges and no \marked nodes in $B$. We can now absorb these terms into the $Q$ terms: define
$$Q'_{AB, AB}(\sigma_{AB},G) = (1-\delta)^{w_A(\sigma_{AB},G)}Q_{AB, AB}(\sigma_{AB},G), $$
and similarly $Q'_{BC,C}$, and $\Phi' = Q'_{AB,AB}Q'_{BC,C}Q_{A,C}$, so that for $\sigma_C \in \Omega$,
\begin{align*}
\prob{\sigma \given G} &\propto \Phi'(\sigma,G) \left|L(\sigma,G)\right| \\
&\propto (1+o(1)) \Phi'(\sigma,G) K(G) Q'_{AB,AB}(\sigma_{AB},G) Q'_{BC,C}(\sigma_{BC},G).
\end{align*}

At this point we roughly adapt the proof of conditional independence from factorization in a Markov random field, following the corresponding proof in \cite{mns}. We compute that for a.a.e.\ $(\sigma,G)$,
\begin{align*}
\prob{ \sigma_A \given \sigma_B, G }
&= \frac{\prob{\sigma_{AB}, G}}{\prob{\sigma_B, G}} \\
&= (1+o(1)) \frac{\prob{\sigma_{AB}, \sigma_C \in \Omega \given G}}{\prob{\sigma_B, \sigma_C \in \Omega \given G}} \qquad\text{using (\ref{eq:Omega-fact}),}\\
&= (1 + o(1)) \frac{\sum_{\sigma_C' \in \Omega} \Phi'(\sigma_{AB},\sigma'_C,G) |L(\sigma_{AB},\sigma'_C,G)|}{\sum_{\sigma_A', \sigma'_C \in \Omega} \Phi'(\sigma'_{AC},\sigma_B,G) |L(\sigma'_{AC},\sigma_B,G)|} \\
&= (1 + o(1)) \frac{ Q'_{AB, AB}(\sigma_{AB},G) \sum_{\sigma'_C \in \Omega} Q'_{BC,C}(\sigma_B,\sigma'_C,G) |L(\sigma_{AB},\sigma'_C,G)|}{ \sum_{\sigma_A'} Q'_{AB,AB}(\sigma'_A,\sigma_B,G) \sum_{\sigma'_C \in \Omega} Q'_{BC,C}(\sigma_B,\sigma'_C,G) |L(\sigma'_{AC},\sigma_B,G)| }, \\
&= (1 + o(1)) \frac{Q'_{AB,AB}(\sigma_{AB},G)}{\sum_{\sigma_A'} Q'_{AB,AB}(\sigma'_A,\sigma_B,G)} \qquad \text{using Lemma~\ref{lemma:binom-ratio}.} \\
\intertext{Multiplying the top and bottom by $K(G) Q'_{BC,C}(\sigma_{BC},G) |L(\sigma,G)|$:}
&= (1 + o(1)) \frac{ K(G) Q'_{AB,AB}(\sigma_{AB},G) Q'_{BC,C}(\sigma_{BC},G) |L(\sigma,G)|}{ \sum_{\sigma_A'} K(G)Q'_{AB,AB}(\sigma'_A,\sigma_B,G) Q'_{BC,C}(\sigma_{BC},G) |L(\sigma,G)|} \\
&= (1 + o(1)) \frac{ \Phi'(\sigma,G) |L(\sigma,G)| }{ \sum_{\sigma_A'} \Phi'(\sigma'_A,\sigma_{BC},G) |L(\sigma'_A,\sigma_{BC},G)|} \qquad \text{using Lemma~\ref{lemma:binom-ratio} again,} \\
&= (1 + o(1)) \prob{ \sigma_A \given \sigma_{BC}, G},
\end{align*}
as desired.
\end{proof}

\section{Random vs.\ Semirandom Separation in the Block Model}
\label{sec:hardness}

We can now assemble all of the pieces to prove our main result. We first prove that it is impossible to estimate the relative spin of any fixed pair of nodes, in a strictly larger parameter range than for the random model. The impossibility of partial recovery will then easily follow.

\begin{proposition}
\label{prop:rel-spin-hard}
For all $k > 1$, there exists $0 < \eps < \frac{1}{2}$ such that $k (1-2\eps)^2 > 1$, yet given a graph $G$ (including markings \good/\marked) from Distribution~\ref{dist:graph} with parameters $a = 2k(1-\eps)$, $b = 2k\eps$, we have that for any fixed vertices $u,v$,
$$ \prob{\sigma_u = +1 \given \sigma_v = +1, G} \to \frac{1}{2} $$
for a.a.e. $G$ as $n \to \infty$.
\end{proposition}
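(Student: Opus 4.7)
The plan is to combine the coupling of Proposition~\ref{prop:coupling}, the no-long-range-correlations Lemma~\ref{lemma:no-lrc}, and the tree lower bound of Proposition~\ref{prop:treehard}. First I would apply the coupling with root $\rho_G = u$, obtaining a neighborhood $U$ of $u$ in $G$ isomorphic to a tree $T$ from Distribution~\ref{dist:tree} of depth $R = \Theta(\log n)$, whose leaves correspond to a separating set $B$ containing no \marked nodes. Set $A = U \setminus B$ (which contains $u$) and $C = V(G) \setminus U$; since $|U| = O(n^{1/8})$ and $v$ is fixed, $v \in C$ a.a.s., and the hypotheses of Lemma~\ref{lemma:no-lrc} are satisfied.

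Next I would exploit global spin-flip symmetry: the distribution on $(G,\sigma)$ from Distribution~\ref{dist:graph} is invariant under simultaneously flipping every spin (the adversary's rule depends only on whether neighbors agree or disagree), so $\prob{\sigma_v = +1 \given G} = \frac{1}{2}$ for every $G$. Bayes' rule then yields the change-of-measure bound
$$\prob{\sigma_B \given G, \sigma_v = +1} = 2\prob{\sigma_v = +1 \given G, \sigma_B}\prob{\sigma_B \given G} \le 2\prob{\sigma_B \given G}.$$
Applying Lemma~\ref{lemma:no-lrc} and integrating out $\sigma_{C \setminus v}$ under $\cdot \given G, \sigma_v = +1$ gives
$$\prob{\sigma_u = +1 \given G, \sigma_v = +1} = (1+o(1))\, \EE_{\sigma_B \given G, \sigma_v = +1}\bigl[\prob{\sigma_u = +1 \given G, \sigma_B}\bigr],$$
so it suffices to show $\prob{\sigma_u = +1 \given G, \sigma_B}$ is close to $\frac{1}{2}$ in $L^1$ with respect to $\sigma_B \given G$; the factor-$2$ change of measure will then absorb the conditioning on $\sigma_v = +1$.

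The final step is the tree lower bound. The factorization $\prob{\sigma_A \given G, \sigma_B} \propto (1+o(1))\, Q'_{AB,AB}(\sigma_{AB},G)$ established inside the proof of Lemma~\ref{lemma:no-lrc} shows that this posterior depends essentially only on the induced subgraph on $U$, which under the coupling is the tree $T$. Therefore $\prob{\sigma_u \given G, \sigma_B}$ matches the tree posterior $\prob{\sigma_\rho \given T, \sigma_{\mathrm{leaves}}}$ up to $1+o(1)$, and Proposition~\ref{prop:treehard} ($\Delta \to 0$) gives $\EE\bigl|\prob{\sigma_\rho = +1 \given T, \sigma_{\mathrm{leaves}}} - \frac{1}{2}\bigr| \to 0$. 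Averaging over $G$ and invoking Markov's inequality converts this into $\EE_{\sigma_B \given G}\bigl|\prob{\sigma_u = +1 \given G, \sigma_B} - \frac{1}{2}\bigr| = o(1)$ for a.a.e.\ $G$, finishing the proof. The main obstacle I anticipate is exactly this last transfer: the graph posterior $\prob{\sigma_u \given G, \sigma_B}$ a priori depends on all of $G$, and it is the $U$-localization provided by the factorization inside Lemma~\ref{lemma:no-lrc} that makes invoking the tree lower bound possible. Care is needed to verify that the $1+o(1)$ error from that factorization, the factor-$2$ reweighting, and the $L^1$ tree bound all compose cleanly to yield $o(1)$.
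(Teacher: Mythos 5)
Your proposal is correct and assembles the same three ingredients (tree coupling, no long-range correlations, tree non-reconstruction), but it glues them together differently from the paper. The paper conditions on $\sigma_v = +1$ and applies the law of total variance: $\Var(\sigma_u \mid \sigma_v, G) \geq \EE_{\sigma_{BC} \mid \sigma_v, G}[\Var(\sigma_u \mid \sigma_{BC}, G)]$, then uses NLRC to replace $\Var(\sigma_u \mid \sigma_{BC}, G)$ by $\Var(\sigma_u \mid \sigma_B, G)$, and transfers that variance to $\Var(\sigma_\rho \mid \sigma_{\mathrm{leaves}}, T)$ which is $1-o(1)$ by non-reconstruction. You instead work directly with the conditional probability, using the exact spin-flip symmetry of Distribution~\ref{dist:graph} to get $\prob{\sigma_v = +1 \mid G} = \frac12$ and hence a factor-$2$ Radon--Nikodym bound between $\sigma_B \mid G$ and $\sigma_B \mid G, \sigma_v = +1$; this absorbs the conditioning on $\sigma_v$, and you finish with an $L^1$ bound coming from $\Delta \to 0$. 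What your route buys is transparency: the change of measure that the law of total variance hides (the expectation in the paper is over $\sigma_{BC} \mid \sigma_v, G$, while NLRC holds a.a.e.\ over $\sigma_{BC} \mid G$, so some measure change is implicitly needed there too) becomes an explicit, clean factor-$2$ reweighting. You are also a bit more explicit than the paper about \emph{why} the graph posterior $\prob{\sigma_u \mid G, \sigma_B}$ localizes to the tree — you invoke the $Q'_{AB,AB}$ factorization from inside the proof of Lemma~\ref{lemma:no-lrc}, whereas the paper attributes the step to the coupling alone. What the paper's route buys is brevity: since $\sigma_u$ is $\{\pm1\}$-valued, passing through variance collapses the bookkeeping, and the final deduction from $\Var = 1-o(1)$ to $\prob{\sigma_u = +1 \mid \sigma_v, G} = \frac12 + o(1)$ is immediate. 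Both arguments compose their $o(1)$ errors the same way, and your cautionary note at the end correctly flags the only place where extra care is needed.
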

\noindent By `fixed' vertices $u,v$ we mean that the vertices are fixed before $\sigma$ and $G$ are chosen; so by symmetry it doesn't matter which $u,v$ pair we fix.
\begin{proof}
Given $k$, choose $\eps$ as in Proposition~\ref{prop:treehard} (tree separation). With probability $1 - o(1)$, the tree coupling of Proposition~\ref{prop:coupling} centered at vertex $u$ succeeds. In this case, the neighborhood $U$ of $u$ coupled to the tree is of size $O(n^{1/8})$, and so with probability $1-o(1)$, $v$ lies outside this neighborhood. Let $B$ be the boundary vertices of $U$, i.e.\ those vertices in $U$ corresponding to the leaves of the tree.
Let $A$ be the interior $U \setminus B$, and let $C$ be the complement of $U$ in $G$.

By the law of total variance,
\begin{equation}\label{eq:totvar} \Var( \sigma_u \given \sigma_v, G ) \geq \EE_{\sigma_{BC} \given \sigma_v, G}[\Var( \sigma_u \given \sigma_v, \sigma_{BC}, G )]. \end{equation}

\noindent With further probability $1 - o(1)$, Lemma~\ref{lemma:no-lrc} (no long-range correlations) succeeds, and we have
$$ \prob{\sigma_A \given \sigma_{BC}, G} = (1 + o(1)) \prob{ \sigma_A \given \sigma_B, G }, $$
so that
$$ \Var( \sigma_u \given \sigma_{BC}, G ) = (1 + o(1)) \Var( \sigma_u \given \sigma_B, G ). $$
But it follows from the coupling in Proposition~\ref{prop:coupling} (which includes spins and markings) that
$$ \Var( \sigma_u \given \sigma_B, G ) = (1 + o(1)) \Var( \sigma_\rho \given \sigma_{\text{leaves}}, T ). $$
Since $R \to \infty$ as $n \to \infty$, non-reconstruction in the tree model (Proposition~\ref{prop:treehard}) implies that this latter variance converges to $1$: the variance of a $\{\pm1\}$-valued random variable with expectation $o(1)$ is $1 - o(1)$.

So we know that, with probability $1 - o(1)$, a $1 - o(1)$ proportion of $\sigma_{BC}$ contribute a value $1 - o(1)$ to the expectation in (\ref{eq:totvar}). The remaining $o(1)$ proportion of $\sigma_{BC}$ contribute a value bounded in magnitude by $1$, as the variance of a $\{\pm 1\}$-valued variable must be. It follows that, a.a.s., 
$$ \Var( \sigma_u \given \sigma_v,G ) \geq \EE_{\sigma_{BC}}[\Var( \sigma_u \given \sigma_{BC},G )] = 1 - o(1), $$
and the only way that this is possible in the $\{\pm 1\}$-valued distribution $\sigma_u \given \sigma_v,G$ is if
$$ \prob{\sigma_u = +1 \given \sigma_v = +1, G} = \frac12 + o(1), $$
as desired.
\end{proof}

It will follow immediately from Proposition~\ref{prop:rel-spin-hard} that it is hard to find a partition that is correlated (better than random guessing) with the true one. For if it was possible to find such a partition then one could also guess the relative spins of $u$ and $v$. We now complete the proof of our main theorem, restating it slightly more precisely.

\begin{theorem}[restatement of Theorem~\ref{thm:intro-main}]
For any $k > 1$, there exists $0 < \eps < \frac{1}{2}$ so that $k (1- 2\eps)^2 > 1$ and hence partial recovery in the stochastic block model $G(n, a/n, b/n)$ is possible, and yet against the monotone adversary given in Section~\ref{sec:models}, for any $\eta > \frac12$, no estimator of the spins achieves $\eta$-partial recovery with probability greater than $o(1)$ as $n \to \infty$.
\end{theorem}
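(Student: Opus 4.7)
The plan is to reduce the desired partial-recovery lower bound to the pairwise relative-spin lower bound of Proposition~\ref{prop:rel-spin-hard} via a simple averaging argument. Suppose for contradiction there exist constants $\eta > \tfrac12$ and $\eps > 0$ along with an estimator $\hat\sigma = \hat\sigma(G) \in \{\pm 1\}^n$ (w.l.o.g.\ deterministic, by conditioning on internal randomness) achieving $\eta$-partial recovery with probability at least $\eps$ along some infinite subsequence of $n$.

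The key observation is that the global sign ambiguity in the definition of partial recovery cancels out in the pairwise product $\hat\sigma_u\hat\sigma_v$. On any instance where $\eta$-partial recovery succeeds, there is a sign $\tau \in \{\pm1\}$ and a set $S \subseteq [n]$ of size $pn$ with $p \geq \eta$ on which $\hat\sigma_i = \tau\sigma_i$; for any pair $\{u,v\}$ with both endpoints in $S$ or both outside $S$, the $\tau$'s cancel and $\hat\sigma_u\hat\sigma_v = \sigma_u\sigma_v$. Thus the fraction of ``correct'' pairs is at least $p^2 + (1-p)^2 - O(1/n) \geq \eta^2 + (1-\eta)^2 - O(1/n)$, using that $q \mapsto q^2 + (1-q)^2$ is increasing on $[\tfrac12,1]$. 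On any instance whatsoever the same fraction is at least $\tfrac12$ since $q^2 + (1-q)^2 \geq \tfrac12$ for all $q \in [0,1]$. Averaging over $(G,\sigma)$ and then over a uniformly chosen pair $(u,v)$ gives
\[
\EE_{(u,v)}\bigl[\prob{\hat\sigma_u\hat\sigma_v = \sigma_u\sigma_v}\bigr] \;\geq\; \eps\,\bigl(\eta^2 + (1-\eta)^2\bigr) + (1-\eps)\cdot \tfrac12 - o(1) \;=\; \tfrac12 + 2\eps\bigl(\eta - \tfrac12\bigr)^2 - o(1),
\]
so in particular there exists a pair $(u_n, v_n)$, possibly depending on $n$, with $\prob{\hat\sigma_{u_n}\hat\sigma_{v_n} = \sigma_{u_n}\sigma_{v_n}} \geq \tfrac12 + \Omega(1)$.

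This contradicts Proposition~\ref{prop:rel-spin-hard}. Since $\hat\sigma(G)$ is a function of $G$, the conditional probability $\prob{\hat\sigma_u\hat\sigma_v = \sigma_u\sigma_v \given G}$ equals one of $\prob{\sigma_u = \sigma_v \given G}$ or $\prob{\sigma_u \neq \sigma_v \given G}$, and by Proposition~\ref{prop:rel-spin-hard} combined with the global symmetry $\sigma \mapsto -\sigma$ of the model, both of these are $\tfrac12 + o(1)$ for a.a.e.\ $G$. Taking expectations over $G$ yields $\prob{\hat\sigma_u\hat\sigma_v = \sigma_u\sigma_v} = \tfrac12 + o(1)$ for \emph{every} fixed pair $(u,v)$, which contradicts the constant advantage obtained above. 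The main obstacle I anticipate is purely bookkeeping --- tracking the $O(1/n)$ correction from the $u=v$ case in the uniform pair average, and confirming that ``some $(u_n,v_n)$ depending on $n$'' may be used in place of ``any fixed pair'' (which is immediate, since Proposition~\ref{prop:rel-spin-hard} applies to each choice of pair fixed before sampling $(G,\sigma)$). All of the genuinely nontrivial work has already been done upstream in constructing the adversary, coupling to the tree, proving tree non-reconstruction, and establishing no long-range correlations.
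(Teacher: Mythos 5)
Your proposal is correct and takes essentially the same route as the paper: reduce $\eta$-partial recovery to correctly guessing the relative spin of a random pair of vertices (the fraction of correct pairs being at least $\eta^2 + (1-\eta)^2 - o(1)$ on success, and at least $\tfrac12$ always), then contradict Proposition~\ref{prop:rel-spin-hard}. Your remark that the symmetry $\sigma \mapsto -\sigma$ lets you apply the proposition to every fixed pair and hence to the average is a correct (and slightly more explicit) treatment of the ``which pair'' bookkeeping that the paper handles implicitly.
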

\begin{proof}
Let $\hat\sigma$ be some assignment of spins to vertices that achieves $\eta$-partial recovery: $\hat\sigma$ agrees with the true spins on at least $\eta n$ vertices, possibly after a global spin flip. Consider the relative spins $\hat\sigma_u \hat\sigma_v$, where $u$ and $v$ are distinct vertices; these match the true relative spins on at least
$$ \eta n (\eta n - 1) + (1-\eta) n ((1-\eta) n - 1) = (1 + o(1)) (\eta^2 + (1-\eta)^2) n^2 $$
ordered pairs of distinct vertices. If we choose two distinct vertices at random, the chance of correctly estimating their relative spin from $\hat\sigma$ is at least
\begin{equation}\label{eq:eta}
(1 + o(1)) (\eta^2 + (1-\eta)^2).
\end{equation}

Suppose for a contradiction that some estimator achieves $\eta$-partial recovery, for some $\eta > \frac{1}{2}$, with probability $p$ not converging to $0$ as $n \to \infty$. When $\eta$-partial recovery succeeds, the process above recovers the relative spin of two random vertices with probability at least $(1+o(1))(\eta^2+(1-\eta)^2)$, and note $\eta^2+(1-\eta)^2 > \frac{1}{2}$. When partial recovery does not succeed, the process still recovers the relative spin of two random vertices with probability at least $(1+o(1))\frac{1}{2}$, as can be seen by plugging in $\eta = \frac{1}{2}$ to (\ref{eq:eta}). It follows that we can recover the relative spin of two random vertices with probability at least $(1+o(1))[p(\eta^2+(1-\eta)^2) + (1-p)\frac{1}{2}]$ which remains bounded above $\frac{1}{2}$ as $n \to \infty$, contradicting Proposition~\ref{prop:rel-spin-hard}. (Proposition~\ref{prop:rel-spin-hard} assumes the markings on $G$ are known, but clearly the problem is only harder when they are unknown, since an estimator can choose to ignore them.)
\end{proof}

\section{Robustness of SDPs for Partial Recovery}
\label{sec:upper-bound}

We now turn to giving algorithms for partial recovery in the semirandom setting. In the existing literature on \emph{exact} recovery, it has been observed that algorithms obtained through semidefinite programming extend almost automatically to semirandom models \cite{fk,al,pw}, and moreover many of these results match the information-theoretic thresholds for exact recovery. In contrast, semidefinite programs for partial recovery, such as those of Gu\'edon and Vershynin \cite{GV}, come within a constant of the threshold but have been unable to close this gap.

In fact Gu\'edon and Vershynin \cite{GV} developed a general framework for using SDPs to solve problems on sparse graphs, including partial recovery for the stochastic block model. We define a notion of semirandom model for any problem in this framework, generalizing the semirandom model for community detection. We show that any analysis that follows their framework carries over automatically to this semirandom model. In particular, after minor modifications, the SDP analyzed in \cite{GV} achieves partial recovery in the semirandom block model, up to within a constant factor of the classical threshold.

Semirandom vs.\ random gaps offer an explanation for why it seems hard to find semidefinite programs for partial recovery that reach the information-theoretic threshold: the analysis often extends equally well to the semirandom model, where we know the threshold is strictly harder!

\subsection{The Gu\'edon--Vershynin Framework and Partial Recovery}
In the framework of \cite{GV}, there are several key terms. The first is a matrix $B$ which is usually computed in a simple manner from the adjacency matrix of the observed graph. In our case, we compute $B = A - \lambda J$, where $J$ is the all-ones matrix, $\lambda = \frac{a+b}{2n}$ is a regularization constant, and $A$ is the observed adjacency matrix (albeit with the non-standard convention that there are ones along the diagonal instead of zeros). The goal is to show that $B$ --- which is a random matrix --- is near some fixed reference matrix $R$. In our case this will be $\frac{a-b}{2n} \sigma \sigma^\top$, which is nearly equal to $\EE[B]$ except on the diagonal. Then, one solves an SDP to maximize $\langle B,Z \rangle$ over $Z \in \Omega$, where $\Omega$ is the space of symmetric PSD $n \times n$ matrices $Z$ satisfying $\diag(Z) \le I$ (i.e.\ the diagonal entries of $Z$ are at most $1$).\footnote{The framework also allows $\Omega$ to have additional constraints, but we will not need this.} The goal of the analysis is to show that the SDP outputs some $\hat Z$ that is close to a ``ground truth'' $Z_R$, which in our case is the $\{\pm 1\}$-valued matrix $\sigma \sigma^\top$ of true relative spins. The following result outlines the steps of the analysis.
\begin{proposition}
\label{prop:three}
Suppose we have $(B,R,Z_R)$ such that the following three conditions hold for some value $\alpha$, some function $F(\beta)$, and some matrix norm $\|\cdot \|$:
\begin{enumerate}[(1)]
\item $Z_R$ is a maximizer of the reference objective $\langle R, - \rangle$ over $\Omega$ \\ (the reference SDP recovers the truth),
\item $\| B-R \|_{\infty \to 1} \leq \alpha$ \\ (the observed objective is close to the reference one in cut norm),
\item if $Z \in \Omega$ and $\langle R, Z_R - Z \rangle \leq \beta$, then $\| Z_R - Z \|^2 \leq F(\beta)$ \\
(good solutions to the reference SDP are close to the ground truth).
\end{enumerate}
Then $\| Z_R - \hat Z \|^2 \leq F(2 K_G \alpha)$ where $\hat Z$ is any maximizer of the empirical objective $\langle B, - \rangle$ over $\Omega$, and $K_G$ is the Grothendieck constant.
\end{proposition}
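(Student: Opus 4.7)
This proposition packages a standard three-step SDP analysis, and the plan is to prove it by combining Grothendieck's inequality with the optimality of $\hat Z$. Let $\beta := \langle R, Z_R - \hat Z \rangle$. Note $\beta \ge 0$ by condition (1), since $Z_R$ maximizes $\langle R, \cdot \rangle$ over $\Omega$ and $\hat Z \in \Omega$. The goal will be to show $\beta \le 2 K_G \alpha$, after which condition (3) immediately gives $\|Z_R - \hat Z\|^2 \le F(2 K_G \alpha)$.

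The key tool I would invoke is the following consequence of Grothendieck's inequality: for any symmetric matrix $M$ and any $Z \in \Omega$,
$$|\langle M, Z \rangle| \;\le\; K_G \, \|M\|_{\infty \to 1}.$$
This applies because any $Z \in \Omega$ admits a Gram factorization $Z_{ij} = \langle u_i, u_j \rangle$ with $\|u_i\| \le \sqrt{Z_{ii}} \le 1$ (using $Z \succeq 0$ and $\diag(Z) \le I$), so $\langle M, Z \rangle$ is a feasible value of the vector-valued quadratic program whose value Grothendieck's inequality compares against $\max_{x, y \in \{\pm 1\}^n} x^\top M y = \|M\|_{\infty \to 1}$. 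Since both $Z_R$ and $\hat Z$ lie in $\Omega$, the triangle inequality then yields $|\langle M, Z_R - \hat Z \rangle| \le 2 K_G \|M\|_{\infty \to 1}$ for any symmetric $M$.

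To conclude, I would use that $\hat Z$ maximizes $\langle B, \cdot \rangle$ over $\Omega$ and $Z_R \in \Omega$, which gives $\langle B, Z_R - \hat Z \rangle \le 0$. Decomposing,
$$\beta \;=\; \langle R - B, Z_R - \hat Z \rangle + \langle B, Z_R - \hat Z \rangle \;\le\; \langle R - B, Z_R - \hat Z \rangle.$$
Applying the Grothendieck bound with $M = B - R$ and condition (2) gives $\beta \le 2 K_G \|B - R\|_{\infty \to 1} \le 2 K_G \alpha$. Condition (3) then delivers $\|Z_R - \hat Z\|^2 \le F(2 K_G \alpha)$.

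There is no real obstacle in the proof of the proposition itself: it is essentially a black box that assembles hypotheses (1)--(3), and all the substantive work will lie in verifying those conditions for the concrete choice $B = A - \lambda J$, $R = \tfrac{a-b}{2n}\sigma\sigma^\top$, $Z_R = \sigma\sigma^\top$ arising from the semirandom block model. The only step that warrants care here is checking that Grothendieck's inequality applies in the stated form on $\Omega$ and that nothing beyond the factor $2$ from the triangle inequality is lost.
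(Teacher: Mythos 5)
Your proof is correct and is exactly the argument the paper is pointing to when it cites Lemma 3.3 of Gu\'edon--Vershynin together with Grothendieck's inequality. The decomposition $\langle R, Z_R - \hat Z \rangle = \langle R - B, Z_R - \hat Z \rangle + \langle B, Z_R - \hat Z \rangle$, the observation that the second term is nonpositive by optimality of $\hat Z$, and the Grothendieck bound $|\langle M, Z\rangle| \le K_G \|M\|_{\infty\to 1}$ for $Z \in \Omega$ (via the Gram factorization with $\|u_i\| \le 1$) are precisely the ingredients of that lemma; you are simply unpacking a step the paper leaves as a citation.
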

\noindent Here, the cut norm (or $\infty$-to-$1$ norm) of a matrix is defined as
$$\|M\|_{\infty \to 1} \defeq \max_{\|x\|_\infty = 1} \|Mx\|_1 = \max_{x \in \{\pm 1\}^n} \|Mx\|_1.$$
The proof of Proposition~\ref{prop:three} follows from Lemma~3.3 in \cite{GV}, and uses Grothendieck's inequality.

The partial recovery results of \cite{GV} proceed by verifying the three conditions of Proposition~\ref{prop:three} for a particular choice of parameters:
\begin{proposition}
\label{prop:check-three}
Assume $a > 20$. Let $B = A - \lambda J$ with $\lambda = \frac{a+b}{2n}$, $R = \frac{a-b}{2n}\sigma \sigma^\top$, and $Z_R = \sigma \sigma^\top$. Conditions (1--3) of Proposition~\ref{prop:three} hold a.a.s.\ with $\alpha = O(n \sqrt{a+b}), F(\beta) = \beta \cdot O(n/(a-b))$, and $\|\cdot\|$ as the Frobenius norm $\|\cdot\|_2$.
\end{proposition}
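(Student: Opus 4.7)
The plan is to verify the three conditions of Proposition~\ref{prop:three} in turn. Conditions (1) and (3) are purely structural consequences of the constraints defining $\Omega$, whereas condition (2) -- a cut-norm concentration for a sparse random matrix -- is the probabilistic heart of the argument and the main obstacle. The elementary fact that drives both (1) and (3) is that every $Z \in \Omega$ satisfies $|Z_{ij}| \le \sqrt{Z_{ii} Z_{jj}} \le 1$: the first inequality by positive semidefiniteness, the second by the constraint $\diag(Z) \le I$.

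Condition (1) then follows immediately: since $\langle R, Z \rangle = \frac{a-b}{2n}\,\sigma^\top Z \sigma$ and $a > b$, maximizing the reference objective over $\Omega$ reduces to maximizing $\sigma^\top Z \sigma = \sum_{i,j} \sigma_i \sigma_j Z_{ij} \le n^2$, with equality attained at $Z_R = \sigma\sigma^\top$. For condition (3), the same entrywise bound gives $\|Z\|_F^2 = \sum_{i,j} Z_{ij}^2 \le n^2 = \|\sigma\sigma^\top\|_F^2$, so
\begin{equation*}
\|Z_R - Z\|_F^2 = \|Z\|_F^2 - 2\sigma^\top Z \sigma + n^2 \le 2\bigl(n^2 - \sigma^\top Z \sigma\bigr) = \frac{4n}{a-b}\,\langle R, Z_R - Z \rangle \le \frac{4n\beta}{a-b},
\end{equation*}
which is exactly $F(\beta) = \beta \cdot O(n/(a-b))$ as claimed.

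Condition (2) is where the real work lies, and I would handle it by the decomposition $B - R = (A - \EE A) + (\EE A - \lambda J - R)$. A direct computation with the block-model edge probabilities shows that $\EE A - \lambda J - R$ vanishes off-diagonal (the two contributions $\frac{a \pm b}{2n}$ cancel in both the same-spin and opposite-spin cases), with diagonal entries $1 - a/n$ arising from the convention $A_{ii} \equiv 1$; hence this deterministic piece contributes only $O(n)$ to the cut norm, which is absorbed into the target $O(n\sqrt{a+b})$ once $a > 20$. The random piece $A - \EE A$ has independent centered off-diagonal Bernoulli entries of variance at most $(a+b)/n$ and magnitude at most $1$. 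For each fixed $x,y \in \{\pm 1\}^n$, the form $x^\top (A - \EE A) y$ is a sum of $\binom{n}{2}$ independent bounded centered variables of total variance $O(n(a+b))$, so Bernstein's inequality yields $\Pr\bigl[|x^\top(A - \EE A) y| > C n \sqrt{a+b}\bigr] \le \exp(-\Omega(C^2 n))$, with the hypothesis $a > 20$ precisely ensuring that the quadratic (variance) term dominates the linear (bounded-increment) term of Bernstein at the deviation scale $Cn\sqrt{a+b}$. A union bound over the $4^n$ sign patterns then closes the argument for any sufficiently large absolute constant $C$. The delicate point -- the reason one cannot simply route through the spectral norm of $A - \EE A$ -- is that in the constant-degree regime the operator norm is inflated by high-degree vertices, so genuine use of the $\{\pm 1\}$ restriction via the Bernstein-plus-union-bound route above is essential.
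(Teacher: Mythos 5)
Your proof is correct, and it is more self-contained than the paper's, which disposes of this proposition in one sentence by citing Lemma~5.1 and (implicitly) Lemma~5.2 of Gu\'edon--Vershynin \cite{GV}. Your verifications of (1) and (3) from the entrywise bound $|Z_{ij}|\le\sqrt{Z_{ii}Z_{jj}}\le 1$, and of (2) by splitting off the deterministic diagonal correction and applying Bernstein plus a $4^n$ union bound over sign vectors, reproduce the substance of what \cite{GV} does (Lemma~5.1 for condition~(1), and the concentration argument inside Lemma~5.2 for conditions~(2)--(3)), so this is not a genuinely different route, just an explicit one. Your side remark about why one cannot detour through $\|A-\EE A\|_{\mathrm{op}}$ in the constant-degree regime is accurate and is precisely the reason the cut norm is used in this framework. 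One small caveat on attribution: the assumption $a>20$ is what \cite{GV} state as $\max\{a(1-a/n),b(1-b/n)\}\ge 20$ and is primarily used to keep the Bernstein exponent in the variance-dominated regime with manageable constants, not to absorb the $O(n)$ deterministic term (which is absorbed for free once $a+b\ge 1$); you have this essentially right but the phrase ``precisely ensuring'' is a touch strong, as the union bound could still be closed in the linear-tail regime at the cost of a worse constant.
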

\noindent Concretely, this means that we solve the following SDP:
\begin{sdp}\label{sdp:gv}
Maximize $\langle A - \lambda J, Z \rangle$ subject to $Z \succeq 0$ and $\diag(Z) \leq I$, where $\lambda = \frac{a+b}{2n}$.
\end{sdp}
\noindent Note that the regularization constant $\lambda$ is necessary because if we simply take $B = A$ then condition 1 of Proposition~\ref{prop:three} fails. In \cite{GV} they estimate $\lambda$ from the empirical average degree, but their arguments also apply to the case where we deterministically take $\lambda = \frac{a+b}{2n}$. (This requires knowledge of the parameters $a,b$ but we will address this issue later.)

In \cite{GV}, condition 1 is shown in Lemma~5.1, and conditions 2 and 3 are implicit in the proof of Lemma~5.2. These require a technical condition: $\max\{a(1-a/n),b(1-b/n)\} \ge 20$, which for large $n$ simply amounts to $a > 20$. By Propositions \ref{prop:three} and \ref{prop:check-three}, we now attain the result $\|Z_R - \hat Z\|_2^2 \le O(n^2 \sqrt{a+b}/(a-b))$ a.a.s. It is also shown in \cite{GV} how to translate this to a precise partial recovery result:
\begin{proposition}\label{prop:rounding}
Let $Z_R = \sigma \sigma^\top$. For any $\frac12 < \eta \le 1$, $\eta$-partial recovery succeeds a.a.s.\ in the stochastic block model, provided that $\|Z_R - \hat Z\|_2^2 \le (1-\eta)n^2$, by taking the signs in the top eigenvector of $\hat Z$.
\end{proposition}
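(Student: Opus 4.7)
The plan is a standard spectral-rounding argument. The reference matrix $Z_R = \sigma\sigma^\top$ is rank one, with unique nonzero eigenvalue $n$ and top unit eigenvector $s \defeq \sigma/\sqrt n$, and all its other eigenvalues vanish. I would therefore argue that any $\hat Z$ close to $Z_R$ in Frobenius norm must have its top unit eigenvector $v$ close to $\pm s$, and then that signing $v$ matches $\sigma$ on most coordinates.

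First I would establish a spectral gap for $\hat Z$. The SDP constraints $\hat Z \succeq 0$ and $\operatorname{diag}(\hat Z) \le I$ imply $0 \le \lambda_i(\hat Z) \le \operatorname{tr}(\hat Z) \le n$ for every $i$. Writing $E \defeq \hat Z - Z_R$, the hypothesis gives $\|E\|_{\mathrm{op}} \le \|E\|_F \le \sqrt{1-\eta}\, n$, so Weyl's inequality yields $\lambda_2(\hat Z) \le \lambda_2(Z_R) + \|E\|_{\mathrm{op}} \le \sqrt{1-\eta}\, n$. At the same time, $s^\top \hat Z s = s^\top Z_R s + s^\top E s \ge n - \|E\|_{\mathrm{op}} \ge (1-\sqrt{1-\eta})\,n$, which forces $\lambda_1(\hat Z) \ge (1-\sqrt{1-\eta})\,n$. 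So for $\eta > \tfrac12$, the matrix $\hat Z$ has a genuine spectral gap of size at least $(1-2\sqrt{1-\eta})\,n$.

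Next I would invoke the Davis--Kahan $\sin\Theta$ theorem with reference $Z_R$ and perturbation $E$. The top eigenspace of $Z_R$ is one-dimensional and spanned by $s$, and the eigengap $\lambda_1(Z_R) - \lambda_2(\hat Z) \ge (1-\sqrt{1-\eta})\,n$ is positive. Thus $\sin^2\angle(v,s) \le \|E\|_{\mathrm{op}}^2/((1-\sqrt{1-\eta})\,n)^2 \le (1-\eta)/(1-\sqrt{1-\eta})^2$. Choosing the sign of $v$ so that $\langle v, s\rangle \ge 0$ then gives $\|v-s\|_2^2 = 2(1-\cos\angle(v,s)) \le 2\sin^2\angle(v,s) = O(1-\eta)$.

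Finally I would round. Set $\hat\sigma_i \defeq \operatorname{sign}(v_i)$. Whenever $\hat\sigma_i \ne \sigma_i$, the entries $v_i$ and $s_i$ have opposite signs, so $(v_i-s_i)^2 \ge s_i^2 = 1/n$. Hence the number of sign disagreements is at most $n\,\|v-s\|_2^2 = O((1-\eta)n)$, and, after absorbing the implicit absolute constant into the hypothesis as in \cite{GV}, this yields agreement on at least $\eta n$ vertices. The main obstacle is bookkeeping these constants tightly: the Davis--Kahan gap $1 - \sqrt{1-\eta}$ collapses as $\eta \to \tfrac12$, so any argument of this form only goes through when $\eta$ is bounded away from $\tfrac12$ --- and this is exactly the source of the constant-factor slack separating the SDP guarantee of Theorem~\ref{thm:intro-robust-sdp} from the information-theoretic threshold.
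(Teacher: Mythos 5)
The paper does not give its own proof of this proposition; it defers to Gu\'edon--Vershynin \cite{GV}, whose Lemma~5.2 and surrounding discussion supply the rounding step. So I will assess your Davis--Kahan argument on its own terms.

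There is a genuine quantitative gap. Write $t \defeq \sqrt{1-\eta}$. Your perturbation bound gives $\|E\|_{\mathrm{op}} \le tn$, Weyl gives $\lambda_2(\hat Z) \le tn$, and the Davis--Kahan gap is $\lambda_1(Z_R) - \lambda_2(\hat Z) \ge (1-t)n$, so $\sin^2\angle(v,s) \le t^2/(1-t)^2$. This bound exceeds $1$ and is therefore vacuous precisely when $t \ge \tfrac12$, i.e.\ for every $\eta \le \tfrac34$. Since the proposition is claimed for all $\eta > \tfrac12$, the argument establishes nothing at all on the range $(\tfrac12, \tfrac34]$. (Relatedly, your ``spectral gap of size at least $(1-2\sqrt{1-\eta})n$'' is negative on that range; luckily that quantity never actually enters your Davis--Kahan application, which uses the correct gap $\lambda_1(Z_R)-\lambda_2(\hat Z)$, but it is a symptom of the same collapse.) Even where Davis--Kahan is informative, the constant does not come out: $\|v-s\|^2 \le 2\sin^2\angle$ and each sign disagreement contributes at least $1/n$ to $\|v-s\|^2$, so you get at most $2n t^2/(1-t)^2$ disagreements, which is $\le (1-\eta)n = t^2 n$ only if $(1-t)^2 \ge 2$, i.e.\ never. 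So for no $\eta < 1$ does this proof yield the bound as stated.

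You flag this yourself and propose to absorb an absolute constant into the hypothesis, and that \emph{is} sound at the level of this paper's downstream use: Corollary~\ref{cor:random-recovery} and Theorem~\ref{thm:intro-robust-sdp} have an unspecified constant $C$, and one can verify that under the strengthened hypothesis $\|Z_R - \hat Z\|_F^2 \le c(1-\eta)n^2$ for a sufficiently small absolute $c$, your chain of inequalities does produce at most $(1-\eta)n$ disagreements. But that is a different proposition from the one you are asked to prove; as written, the hypothesis carries the exact constant $1$. To prove the statement literally one needs either to state the weakened version explicitly, or to replace the Davis--Kahan step with an argument tailored to the structure $Z_R=\sigma\sigma^\top$ (rank one with $\pm1$ entries), which is what \cite{GV} does.
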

\begin{corollary}\label{cor:random-recovery}
There exists a constant $C$ such that $\eta$-partial recovery succeeds a.a.s.\ in the stochastic block model as $n \to \infty$ whenever $a > 20$ and $(a-b)^2 \ge C (1-\eta)^{-2} (a+b)$.
\end{corollary}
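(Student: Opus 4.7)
The plan is simply to chain together the three preceding propositions: Proposition~\ref{prop:three} (the Gu\'edon--Vershynin meta-theorem), Proposition~\ref{prop:check-three} (verification of its three hypotheses for the block model SDP), and Proposition~\ref{prop:rounding} (rounding to a partial recovery guarantee). No new ingredient is needed; the corollary is a parameter-tracking statement.

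First I would combine Proposition~\ref{prop:three} with Proposition~\ref{prop:check-three}. Applied with $B = A - \lambda J$, $R = \frac{a-b}{2n}\sigma\sigma^\top$, $Z_R = \sigma\sigma^\top$, $\alpha = O(n\sqrt{a+b})$, and $F(\beta) = \beta \cdot O(n/(a-b))$, this yields that a.a.s.\ any maximizer $\hat Z$ of SDP~\ref{sdp:gv} satisfies
\[
\|Z_R - \hat Z\|_2^2 \;\le\; F(2K_G \alpha) \;=\; 2K_G \alpha \cdot O\!\left(\frac{n}{a-b}\right) \;=\; O\!\left(\frac{n^2 \sqrt{a+b}}{a-b}\right),
\]
where the hidden constant is absolute (it absorbs the Grothendieck constant $K_G$), provided $a > 20$ so that Proposition~\ref{prop:check-three} applies.

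Next I would invoke Proposition~\ref{prop:rounding}, which guarantees $\eta$-partial recovery (by taking the signs in the top eigenvector of $\hat Z$) as soon as $\|Z_R - \hat Z\|_2^2 \le (1-\eta)n^2$. Plugging in the bound from the previous step, it suffices to have
\[
C' \cdot \frac{n^2 \sqrt{a+b}}{a-b} \;\le\; (1-\eta) n^2
\]
for a sufficiently large absolute constant $C'$. Rearranging, this is equivalent to $(a-b) \ge \frac{C'}{1-\eta}\sqrt{a+b}$, i.e.\ $(a-b)^2 \ge C (1-\eta)^{-2} (a+b)$ with $C = (C')^2$. This is exactly the hypothesis in the corollary, so setting $C$ equal to this constant completes the argument.

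There is essentially no obstacle here: the only thing to be careful about is that the $O(\cdot)$'s in Propositions~\ref{prop:three} and~\ref{prop:check-three} hide absolute constants (independent of $n$, $a$, $b$, $\eta$), so that solving the resulting inequality for $(a-b)^2$ gives a single universal constant $C$. One minor remark worth adding in the write-up is that SDP~\ref{sdp:gv} requires knowledge of $\lambda = \frac{a+b}{2n}$; as noted in the paragraph above Proposition~\ref{prop:rounding}, the same argument applies with the empirical estimate of $\lambda$ used in \cite{GV}, so knowledge of $a,b$ is not actually needed.
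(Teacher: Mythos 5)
Your proposal is correct and follows exactly the same chain of reasoning the paper intends: Proposition~\ref{prop:three} together with Proposition~\ref{prop:check-three} yields the a.a.s.\ bound $\|Z_R - \hat Z\|_2^2 = O(n^2\sqrt{a+b}/(a-b))$, and substituting this into the hypothesis of Proposition~\ref{prop:rounding} and solving for $(a-b)^2$ gives precisely the stated condition with $C$ an absolute constant. The remark about replacing the deterministic $\lambda$ by the empirical estimate from \cite{GV} matches the paper's own comment.
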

\noindent The constant $C$ is quite large: \cite{GV} sets $C = 10^4$, although no attempt is made to optimize this constant. Nevertheless this result is only off by a constant from the threshold, which is $(a-b)^2 > 2(a+b)$.
Our random vs.\ semirandom separation becomes small as $k$ grows, so it remains plausible that semidefinite programs can achieve partial recovery when $k (1-2\eps)^2 > 1 + o(1)$ as $k \to \infty$. In fact, SDPs are known to distinguish random block model graphs from Erd\H{o}s--R\'enyi graphs in such a range \cite{ms}, and it would be interesting to determine whether this carries through to partial recovery in the semirandom model.

\subsection{SDPs and Semirandom Models}
We now turn to a semirandom view of the general Gu\'edon--Vershynin framework. In the semirandom block model, a monotone adversary is allowed to make changes aligned with the ground truth $Z_R$; more formally, it can add a matrix $S$ to the observed adjacency $A$, where $S$ is symmetric and has $1$'s in some same-spin entries where $A$ is $0$, and $-1$'s in some opposite-spin entries where $A$ is $1$. It is easily verified that $Z_R$ maximizes $\langle S, - \rangle$ over $\Omega$: every matrix in $\Omega$ has entries in $[-1,1]$, and $Z_R$ has $\pm 1$ entries that match the sign pattern of $S$.

Following this observation, we propose a precise definition of semirandom models for Gu\'edon--Vershynin problems:
\begin{definition}
A matrix $S$ is a {\bf monotone change} if $Z_R$ is a maximizer for $\langle S,- \rangle$ over $\Omega$. In the {\bf semirandom model}, a preliminary objective $B$ is generated from the random model, and then an adversary modifies the objective by any monotone change $S$, yielding an observed matrix $B+S$.
\end{definition}
\noindent Note that the set of such monotone changes $S$ forms a cone, which matches the intuition that semirandom models can make unbounded changes, but only in certain directions aligned with the ground truth.

Any analysis following the framework of Proposition~\ref{prop:three} generalizes in a formal sense to the semirandom model:
\begin{proposition}
\label{prop:S}
Suppose that $(B,R,Z_R)$ satisfy conditions (1--3) of Proposition~\ref{prop:three} for some $\alpha,F(\beta),\|\cdot\|$. Then it remains the case that $\| Z_R - \hat Z_S \|^2 \leq F(2 K_G \alpha)$, where $\hat Z_S$ is any maximizer of $\langle B + S, - \rangle$ over $\Omega$, and $S$ is any monotone change (in the sense above), which is allowed to depend adversarially on $B$.
\end{proposition}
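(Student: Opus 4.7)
The plan is to imitate the proof of Proposition~\ref{prop:three} (i.e., Lemma~3.3 of \cite{GV}) essentially verbatim, using the defining property of a monotone change to cancel the contribution of $S$ in the optimality condition for $\hat Z_S$. The only facts we will invoke are: (i) $\hat Z_S$ maximizes $\langle B+S,-\rangle$ over $\Omega$, (ii) $Z_R$ and $\hat Z_S$ both lie in $\Omega$, and (iii) by the definition of a monotone change, $Z_R$ maximizes $\langle S,-\rangle$ over $\Omega$.

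First I would establish the key inequality $\langle B, Z_R - \hat Z_S\rangle \leq 0$. Optimality of $\hat Z_S$ yields $\langle B+S, Z_R - \hat Z_S\rangle \leq 0$, so
$$\langle B, Z_R - \hat Z_S\rangle \;\leq\; -\langle S, Z_R - \hat Z_S\rangle \;=\; \langle S, \hat Z_S\rangle - \langle S, Z_R\rangle \;\leq\; 0,$$
where the last inequality is precisely the definition of a monotone change. This is the sole place assumption (iii) enters, and it exactly compensates for the extra $S$ term in the empirical objective; without it, $S$ could reward $\hat Z_S$ more than $Z_R$ and spoil the next step.

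From here the argument proceeds as in the random case. I would write
$$\langle R, Z_R - \hat Z_S\rangle \;=\; \langle R-B, Z_R - \hat Z_S\rangle + \langle B, Z_R - \hat Z_S\rangle \;\leq\; \langle R-B, Z_R - \hat Z_S\rangle,$$
and bound the right-hand side via Grothendieck's inequality applied separately to $\langle R-B, Z_R\rangle$ and $\langle R-B, \hat Z_S\rangle$. Since each of $Z_R$ and $\hat Z_S$ lies in $\Omega$, this yields $\langle R-B, Z_R - \hat Z_S\rangle \leq 2 K_G \|B-R\|_{\infty \to 1} \leq 2 K_G \alpha$ by condition (2). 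Finally, condition (3) applied with $\beta = 2K_G\alpha$ yields the desired bound $\|Z_R - \hat Z_S\|^2 \leq F(2K_G\alpha)$.

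There is essentially no obstacle to this approach; the proposition is really a bookkeeping observation. The content of the definition ``$Z_R$ maximizes $\langle S,-\rangle$ over $\Omega$'' is precisely chosen so that adding $S$ to the empirical objective can only push the SDP solution \emph{toward} $Z_R$, leaving all three hypotheses of Proposition~\ref{prop:three} unchanged along with the conclusion. This formalizes why Gu\'edon--Vershynin-style analyses extend automatically from the random to the semirandom setting.
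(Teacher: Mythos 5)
Your proof is correct. The route differs slightly in packaging from the paper's: the paper reduces to Proposition~\ref{prop:three} by showing that its three conditions hold with $(B+S, R+S, Z_R)$ in place of $(B,R,Z_R)$ — condition (1) holds because $Z_R$ maximizes both $\langle R,-\rangle$ and $\langle S,-\rangle$; condition (2) is immediate since $(B+S)-(R+S) = B-R$; and condition (3) uses the monotonicity inequality $\langle S, Z_R - Z\rangle \geq 0$ to pass from $\langle R+S, Z_R - Z\rangle \leq \beta$ to $\langle R, Z_R - Z\rangle \leq \beta$ — after which Proposition~\ref{prop:three} is invoked as a black box. You instead inline the proof of Proposition~\ref{prop:three} (i.e., the Grothendieck step of \cite{GV} Lemma~3.3) and insert the same monotonicity inequality $\langle S, Z_R - \hat Z_S\rangle \geq 0$ one step earlier, in the optimality condition for $\hat Z_S$. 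The core observation — that $S$ can only help $Z_R$ win — is identical; the paper's formulation is slightly more modular (it treats Proposition~\ref{prop:three} as a black box and doesn't need to re-open its proof), while yours makes the mechanism more transparent at the cost of redoing the Grothendieck argument. Both are valid.
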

\begin{proof}
It suffices to establish conditions (1--3) of Proposition~\ref{prop:three} with $R + S$ playing the role of $R$ and $B + S$ playing the role of $B$:
\begin{enumerate}[(1)]
\item As $Z_R$ maximizes both $\langle R, - \rangle$ and $\langle S, - \rangle$ over $\Omega$, it certainly maximizes their sum $\langle R + S, - \rangle$.
\item We have $\| (B + S) - (R + S) \|_{\infty \to 1} = \| B - R \|_{\infty \to 1}$, which by hypothesis is at most $\alpha$ with high probability.
\item Suppose that $Z \in \Omega$ and $\langle R + S, Z_R - Z \rangle \leq \beta$. Then
$$ \langle R, Z_R - Z \rangle \leq \beta - \langle S, Z_R - Z \rangle \leq \beta, $$
since $Z_R$ is a maximizer of $\langle S,- \rangle$ over $\Omega$. It follows from the original condition (3) that $\| Z_R - Z \|^2 \leq F(\beta)$ in the appropriate norm.
\end{enumerate}
\end{proof}

We have shown that any {\em analysis} of an SDP following the Gu\'edon--Vershynin framework transfers to the semirandom model. A recovery guarantee for the same SDP by different means would not automatically yield such a guarantee in the semirandom model. This differs from exact recovery problems, where semirandom guarantees can take the success of the SDP as a black box, and certify that any successful instance remains successful under monotone changes \cite{fk}.

\begin{corollary}
Let $A$ be sampled from the semirandom block model with parameters $a,b,n$. Suppose that we know these parameters, so that we can run SDP~\ref{sdp:gv} with the deterministic choice $\lambda = \frac{a+b}{2n}$. If $a > 20$ and $(a-b)^2 \geq 10^4 (1-\eta)^{-2} (a+b)$ then this SDP achieves $\eta$-partial recovery a.a.s.\ (after taking the sign of the top eigenvector of the SDP output $\hat Z$).
\end{corollary}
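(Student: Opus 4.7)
The plan is to combine Proposition~\ref{prop:S} (robustness of the Gu\'edon--Vershynin framework) with Proposition~\ref{prop:check-three} (verification of conditions (1--3) for the block model) and Proposition~\ref{prop:rounding} (rounding). The first task is to identify the adversary's action with a monotone change in the sense of the generalized framework. In the semirandom block model, the adversary produces an observed adjacency $A' = A + S$, where the symmetric matrix $S$ has a $+1$ in same-spin entries $(i,j)$ where an edge was added and a $-1$ in opposite-spin entries where an edge was deleted. Since $B = A - \lambda J$ in the random model, the SDP is now run on $B + S = A' - \lambda J$, so we just need to check that $S$ is a monotone change relative to $Z_R = \sigma\sigma^{\!\top}$. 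This follows because every $Z \in \Omega$ satisfies $|Z_{ij}| \le 1$ (by the PSD constraint together with $\mathrm{diag}(Z) \le I$, via $Z_{ij}^2 \le Z_{ii} Z_{jj} \le 1$), while $Z_R$ attains $Z_{ij} = +1$ at same-spin entries and $Z_{ij} = -1$ at opposite-spin entries, which is exactly the sign pattern of $S$. Hence $\langle S, Z \rangle \le \langle S, Z_R \rangle$ for all $Z \in \Omega$.

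Next I would invoke Proposition~\ref{prop:check-three} to get that, in the random model underlying the precursor graph, conditions (1--3) hold a.a.s.\ with $\alpha = O(n\sqrt{a+b})$ and $F(\beta) = \beta \cdot O(n/(a-b))$ in the Frobenius norm. Since these properties are properties of the random $B$ alone and do not see $S$, they continue to hold a.a.s.\ in the semirandom setting. Applying Proposition~\ref{prop:S} to the maximizer $\hat Z$ of $\langle B+S, - \rangle$ over $\Omega$, we deduce
\[
\|Z_R - \hat Z\|_2^{\,2} \;\le\; F(2 K_G \alpha) \;=\; O\!\left(\frac{n^2 \sqrt{a+b}}{a-b}\right)
\]
a.a.s., with the constants quantitatively traceable to the $C = 10^4$ figure used by \cite{GV}.

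Finally, I would plug this Frobenius bound into Proposition~\ref{prop:rounding}, which asserts that $\eta$-partial recovery succeeds once $\|Z_R - \hat Z\|_2^{\,2} \le (1-\eta) n^2$. Requiring $n^2 \sqrt{a+b}/(a-b)$ to be at most a constant times $(1-\eta) n^2$ rearranges to $(a-b)^2 \ge 10^4 (1-\eta)^{-2}(a+b)$, matching the hypothesis. The $a > 20$ hypothesis is inherited from Proposition~\ref{prop:check-three}, and the deterministic choice $\lambda = \frac{a+b}{2n}$ (which requires knowing $a,b$) lets us apply \cite{GV}'s analysis verbatim without the empirical-degree estimation step.

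The only genuinely non-routine point is the first step: checking that any move the adversary is allowed to make in the semirandom block model (adding same-spin edges, deleting cross-spin edges) corresponds to a matrix $S$ lying in the monotone-change cone of Proposition~\ref{prop:S}. Once that is in place, robustness reduces to the mechanical application of Propositions~\ref{prop:S},~\ref{prop:check-three},~\ref{prop:rounding} in sequence. I do not expect any obstacle beyond bookkeeping, since the adversary never degrades conditions (2) or (3) (which depend only on $B$ and on $\Omega$) and it only strengthens condition (1) by pushing the reference objective in a direction where $Z_R$ is already extremal.
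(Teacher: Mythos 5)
Your proposal is correct and follows the same route the paper takes: verify that the adversary's action corresponds to a matrix $S$ in the monotone-change cone (the paper states this just before the definition of monotone change, observing that entries of $Z \in \Omega$ lie in $[-1,1]$ and that $Z_R$ matches the sign pattern of $S$), then chain Propositions~\ref{prop:check-three}, \ref{prop:S}, and \ref{prop:rounding}. Your explicit justification that $|Z_{ij}| \le 1$ via the $2\times 2$ PSD minor $Z_{ij}^2 \le Z_{ii}Z_{jj} \le 1$ fills in a detail the paper leaves implicit.
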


This dependence on parameter knowledge is slightly unwieldy; on the other hand, estimating the correct regularization parameter $\lambda$ from $A + S$ is not immediately easy, since model-specific estimators are precisely the sort of techniques that semirandom models aim to penalize. Instead, we show that deterministically taking $\log n / n$ in place of $\lambda$ avoids this dependence. Indeed, the same SDP appears in the exact recovery literature \cite{abh} with deterministic regularization parameter $\frac12$, which is comparatively very large, but comes at the cost of requiring precisely balanced communities. Our approach here is a middle ground that allows for natural $\sqrt{n}$-scale deviations in the balance of the two communities. There is nothing special about the value $\log n/n$ and it can in fact be taken to be anything of order strictly between $1/n$ and $1/\sqrt n$.

\begin{proposition}\label{prop:new-sdp}
The Gu\'edon--Vershynin SDP (SDP~\ref{sdp:gv}) with regularization parameter $\lambda' = \log n / n$ achieves $\eta$-partial recovery against the semi-random model a.a.s.\ so long as $a > 20$ and $(a-b)^2 \geq 10^4 (1-\eta)^{-2} (a+b)$.
\end{proposition}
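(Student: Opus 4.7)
The plan is to retrace the Guédon--Vershynin argument (Propositions~\ref{prop:three} and~\ref{prop:check-three}) together with its semirandom extension (Proposition~\ref{prop:S}), absorbing the switch from $\lambda = (a+b)/(2n)$ to $\lambda' = \log n/n$ as a small extra perturbation rather than redoing the whole analysis. Concretely, write $B = A - \lambda J$ for the ``ideal'' objective from Proposition~\ref{prop:check-three} and $\tilde B = (A+S) - \lambda' J = B + S - (\lambda'-\lambda)J$ for the objective actually observed in the semirandom model, where $S$ is the adversarial monotone change. Keep the reference $R = \frac{a-b}{2n}\sigma\sigma^\top$ and ground truth $Z_R = \sigma\sigma^\top$ unchanged, and let $\hat Z$ be the SDP maximizer of $\langle \tilde B, - \rangle$ over $\Omega$.

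Since $Z_R$ is simultaneously a maximizer of $\langle R, - \rangle$ (condition (1) of Proposition~\ref{prop:check-three}) and of $\langle S, - \rangle$ (definition of a monotone change), while $\hat Z$ maximizes $\langle \tilde B, -\rangle$ over $\Omega$, the standard chain of inequalities gives
\[
0 \leq \langle R + S, Z_R - \hat Z\rangle \leq \langle R + S - \tilde B, Z_R - \hat Z\rangle = \langle R - B, Z_R - \hat Z\rangle + (\lambda'-\lambda)\langle J, Z_R - \hat Z\rangle.
\]
The first term on the right is controlled by Grothendieck and the cut-norm bound of Proposition~\ref{prop:check-three}: $\langle R-B, Z_R - \hat Z\rangle \leq 2 K_G \|R-B\|_{\infty\to 1} = O(n\sqrt{a+b})$ a.a.s. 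For the second term I would avoid Grothendieck entirely (it would cost $\|J\|_{\infty \to 1} = n^2$, which is fatal) and instead exploit that $J = \mathbf{1}\mathbf{1}^\top$ is rank-one and PSD. Since $\hat Z \succeq 0$ and $\lambda' > \lambda$ once $n$ is large, $(\lambda'-\lambda)\langle J, Z_R - \hat Z\rangle \leq (\lambda'-\lambda)(\mathbf{1}^\top\sigma)^2$. Hoeffding's inequality applied to the i.i.d.\ $\pm 1$ spins gives $|\mathbf{1}^\top\sigma| = O(\sqrt{n\log n})$ a.a.s., so the extra contribution is $O((\log n/n)\cdot n\log n) = O(\log^2 n) = o(n\sqrt{a+b})$. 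Using $\langle S, Z_R - \hat Z\rangle \geq 0$, we then conclude $\langle R, Z_R - \hat Z\rangle \leq O(n\sqrt{a+b})$ a.a.s.

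From here, condition (3) of Proposition~\ref{prop:check-three} (which is phrased purely in terms of $R$ and therefore still applies) converts this into $\|Z_R - \hat Z\|_2^2 \leq O(n^2\sqrt{a+b}/(a-b))$, and Proposition~\ref{prop:rounding} promotes this to $\eta$-partial recovery whenever $(a-b)^2 \geq 10^4(1-\eta)^{-2}(a+b)$. The main obstacle in this plan is exactly the second error term: the diameter of $\Omega$ in the $\mathbf{1}\mathbf{1}^\top$ direction is $n^2$, so one must observe that the ground truth $Z_R = \sigma\sigma^\top$ sits near the $\langle J,-\rangle$-minimum of $\Omega$ (with $\langle J, Z_R\rangle = (\mathbf{1}^\top \sigma)^2$ only of order $n\log n$) and that the PSD slack $\mathbf{1}^\top \hat Z \mathbf{1} \geq 0$ lets us drop the $\hat Z$ contribution entirely. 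This is also what pins down the allowed regime: $\lambda' = \omega(1/n)$ so that $\lambda' > \lambda$ and the regularization enters with the right sign for the PSD slack to be discarded, and $\lambda' = o(1/\sqrt n)$ so that $\lambda'(\mathbf{1}^\top\sigma)^2$ stays $o(n)$ and below the Grothendieck term.
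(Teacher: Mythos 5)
Your proof is correct, and it takes a genuinely different route from the paper. The paper's proof redefines the reference matrix to $R' = R + (\lambda - \lambda')J$ and the objective to $B' = B + (\lambda-\lambda')J$, so that condition (2) of Proposition~\ref{prop:three} transfers verbatim; but this forces them to re-establish condition (1) for $R'$ from scratch via an explicit SDP dual certificate (construct $\gamma$, verify $\gamma \geq 0$ a.a.s., check $\Lambda = \diag(\gamma) - R' \succeq 0$), and to re-derive condition (3) for $R'$ by relating $\langle R', Z_R - Z\rangle$ back to $\langle R, Z_R - Z\rangle$. You keep the original $(B,R,Z_R)$ triple --- for which all three conditions are already verified in Proposition~\ref{prop:check-three} --- and fold both the adversarial change $S$ and the regularization shift $(\lambda-\lambda')J$ into the inequality chain as two separate perturbation terms. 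Your key observation, that a Grothendieck bound on the $J$ direction is hopeless (cut norm $n^2$) but the PSD slack $\langle J, \hat Z\rangle \geq 0$ together with the census bound $\langle J, Z_R\rangle = (\mathbf{1}^\top\sigma)^2 = \tilde O(n)$ lets you drop the $\hat Z$ contribution and absorb the $Z_R$ contribution, is exactly the ingredient the paper deploys inside its verification of condition (3) for $R'$ --- so the essential geometric fact is shared --- but you deploy it directly in the main inequality, which lets you bypass the dual certificate argument entirely. The price you pay is that you cannot cite Proposition~\ref{prop:S} as a black box and must expand its one-line proof inline; the gain is a shorter overall proof with no SDP duality. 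The error term you collect ($O(\log^2 n)$ versus the paper's $O(\log^3 n)$, depending on the polylog in the census bound) is in either case $o(n\sqrt{a+b})$ and so does not affect the final constant. One small stylistic remark: since you are not going through Proposition~\ref{prop:S}, it would be cleaner to not refer to "condition (1) of Proposition~\ref{prop:three}" and instead just state directly that $Z_R = \sigma\sigma^\top$ maximizes $\langle R, -\rangle$ over $\Omega$ (which is a deterministic, one-line fact, as $\sigma^\top Z \sigma \leq \sum_{ij}\sqrt{Z_{ii}Z_{jj}} \leq n^2$), so that the only thing you truly import from Proposition~\ref{prop:check-three} is the cut-norm bound (2) and the strong-convexity-type bound (3).
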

\begin{proof}
In light of Proposition~\ref{prop:S}, it is sufficient to show that this SDP works in the random model. We check the three conditions for $(B',R',Z_R)$ where $B' = A - \lambda' J = B + (\lambda - \lambda')J$, $R' = \frac{a-b}{2n} \sigma \sigma^\top + (\lambda - \lambda')J = R + (\lambda - \lambda')J$, and $Z_R = \sigma \sigma^\top$ where $\lambda = \frac{a+b}{2n}$ and $\lambda' = \frac{\log n}{n}$. We will use the fact that the three conditions are satisfied for $(B,R,Z_R)$, and we will achieve nearly the same parameters $\alpha,F(\beta)$ as in that case.

\begin{enumerate}
\item[(1)] We certify that $Z_R$ maximizes $\langle R', - \rangle$ by SDP duality. The dual SDP reads
$$ \text{minimize }\sum_{v} \gamma_v \quad  \text{subject to }\Lambda \defeq \diag(\gamma) - R' \succeq 0, $$
where $v$ runs over all vertices in the graph. 
To certify that $Z_R$ is optimal, it suffices by complementary slackness to find $\gamma$ such that $\Lambda \succeq 0$ and $\langle \Lambda , Z_R \rangle = 0$. Since $\Lambda$ and $Z_R$ are PSD, the second condition is equivalent to $\Lambda Z_R = 0$. As the columns of $Z_R$ are spanned by $\sigma$, this second condition reads $\Lambda \sigma = 0$; expanding,
$$ \gamma_v = \frac{a-b}{2} + \left(\frac{a+b}{n} - \frac{\log n}{n}\right)\left(|C_v| - |\bar C_v|\right) \quad \forall v $$
where $C_v$ is the set of nodes with the same spin as node $v$, and $\bar C_v$ is the complement of $C_v$. By Hoeffding we have $\left||C_v|-|\bar C_v|\right| \le \sqrt{n}\log n$ a.a.s.\ and so
$$ \gamma_v = \frac{a-b}{2} - O(\log^2 n/\sqrt{n}) = \frac{a-b}{2} - o(1). $$
In particular, $\gamma_v \ge 0$ for all $v$ a.a.s. This choice of $\gamma$ guarantees $\Lambda \sigma = 0$ so it remains to show that $\Lambda \succeq 0$. Since $\Lambda \sigma = 0$, it suffices to show that
$v^\top \Lambda v \geq 0$ for all $v \perp \sigma$. But note that
$$ \Lambda = \left( \frac{\log n}{n} - \frac{a+b}{2n} \right) J - \frac{a-b}{2n} \sigma \sigma^\top + \diag(\gamma), $$
$$ v^\top \Lambda v = v^\top \left[ \left( \frac{\log n}{n} - \frac{a+b}{2n} \right) J + \diag(\gamma) \right] v \geq 0, $$
since $\gamma \geq 0$ a.a.s.\ so that both $J$ and $\diag(\gamma)$ are PSD.
This establishes optimality of $Z_R$.

\item[(2)] We have $B' - R' = B - R$ and so this step follows from the corresponding step for $(B,R,Z_R)$.

\item[(3)] Suppose that $Z \in \Omega$ satisfies $\langle R', Z_R - Z \rangle \leq \beta$. Letting $\nu = \lambda' - \lambda$ we have $0 \le \nu \le \frac{\log n}{n}$ and $R' = R - \nu J$. Notice that $\langle Z, J \rangle \geq 0$ since both matrices are PSD, while $\langle Z_R, J \rangle = \frac{1}{2}(|C_+| - |\bar C_-|)^2 \leq n \log^2 n$ a.a.s.\ where $C_+,C_-$ are the community sizes. Thus,
\begin{align*}
\langle R, Z_R - Z \rangle
&= \langle R', Z_R - Z \rangle + \nu \langle J, Z_R - Z \rangle \\
&\le \beta + \nu \cdot n \log^2 n - \nu \langle J, Z \rangle \\
&\le \beta + \log^3 n.
\end{align*}
Applying condition 3 for $(B,R,Z_R)$, we obtain $\|Z_R - Z\|_2^2 \leq (\beta + \log^3 n) \cdot O(n/(a-b))$.
\end{enumerate}

By Proposition~\ref{prop:rounding}, we obtain partial recovery by rounding the leading eigenvector, under the same conditions as in Corollary~\ref{cor:random-recovery}, and with the same constant $C$.
\end{proof}

\noindent This analysis completes the proof of Theorem~\ref{thm:intro-robust-sdp} from the introduction.

Independently, \cite{mmv-robust} proved a result on SDP robustness that matches ours. Additionally, they consider robustness to a different adversary that can only add or remove $o(n)$ edges but is not required to be monotone. Our methods also extend to this case (and match their results) because $o(n)$ changes can only change the cut norm by $o(n)$.

\section{Robustness of Recursive Majority in the Broadcast Tree Model}\label{sec:recmaj}

Here we establish that recursive majority achieves reconstruction in the semirandom broadcast tree model, even with respect to the strong adversary that has total control over entire subtrees. As one would imagine, in recursive majority the root spin is estimated as the majority vote of its children's estimated spins, which in turn are estimated as the majority vote of their children, and so on down to the known leaf spins. In the random model, this algorithm falls short of the Kesten--Stigum bound by a factor of $\sqrt{2/\pi}$ as $k \to \infty$ \cite{mossel-recursive}. Nevertheless, it remains the algorithm of choice in practice for tree reconstruction problems, often by the name of `parsimony' \cite{mossel-survey}. It seems that the advantages of recursive majority over majority only become clear when studying these algorithms through semirandom models!

To keep things simple, we will first consider trees in which each non-leaf node has exactly $k$ children, matching the setting of \cite{mossel-recursive} and much of the tree reconstruction literature. At the end, we show how to adapt our proof in a straightforward manner to the case where the number of children of each non-leaf node is a Poisson random variable.

\begin{proposition}
\label{prop:eps-star}
Let $M_k(p)$ be the probability of a majority `yes' vote of $k$ voters each voting `yes' with probability $p$:
$$ M_k(p) \defeq \problr{\Binom(k,p) > \frac{k}{2}} + \frac12 \problr{\Binom(k,p) = \frac{k}{2}}. $$
Let $(T,\sigma)$ be drawn from the strong semirandom model on the regular tree with braching factor $k$, height $R$, and noise $\eps$. Let $\eps^*_k$ be defined by
$$ \frac{1}{1-\eps^*_k} = \max_{q \in (0,1]} \frac{M_k(q)}{q}, $$
and let $q^*_k$ be the maximizer. So long as $\eps \leq \eps^*_k$, the recursive majority algorithm correctly recovers the root spin from the topology and leaf spins, with probability at least $p^*_k = q^*_k / (1-\eps^*_k) > \frac12$. Conversely, if $\eps > \eps^*_k$, there exists an adversary forcing recursive majority to fail with probability $1 - o(1)$ as $R \to \infty$.
\end{proposition}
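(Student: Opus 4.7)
The plan is to reduce everything to a one-dimensional iteration. By symmetry condition on $\sigma_\rho = +1$, and let $p_R$ be the minimum (over adversary strategies) probability that recursive majority returns $+1$ on a depth-$R$ tree, with base case $p_0 = 1$. Each of the root's $k$ children is independently \emph{friendly} (same spin, undisturbed random subtree of depth $R-1$) with probability $1-\eps$, and \emph{hostile} (opposite spin, subtree replaced by the adversary) with probability $\eps$. On any hostile child the adversary can deterministically force the recursive-majority output to be $-1$ by setting every leaf of the substituted subtree to $-1$; this is optimal because reducing a child's $+1$-probability pointwise only decreases the chance of a $+1$ majority at the root. Hence each child independently votes $+1$ with probability $q_R \defeq (1-\eps)\, p_{R-1}$, yielding the recursion $p_R = M_k(q_R)$.

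Set $\Phi(q) \defeq (1-\eps) M_k(q)$, so the recursion becomes $q_{R+1} = \Phi(q_R)$ started at $q_1 = 1-\eps$. Since $M_k$ is increasing, so is $\Phi$; and $M_k \le 1$ gives $q_2 \le q_1$, so by induction $\{q_R\}$ is nonincreasing and converges to some $q^\infty \ge 0$ with $\Phi(q^\infty) = q^\infty$. The positive fixed points are exactly the solutions of $h(q) \defeq M_k(q)/q = 1/(1-\eps)$, and by the definition of $\eps^*_k$ the function $h$ attains its maximum $1/(1-\eps^*_k)$ at $q^*_k$.

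If $\eps \le \eps^*_k$, then $1/(1-\eps) \le \max h$ and $h(q) = 1/(1-\eps)$ has solutions; let $q_{\mathrm{high}}$ be the largest. From $M_k(q_{\mathrm{high}}) \le 1$ we get $q_{\mathrm{high}} \le 1-\eps = q_1$, and for $q \in (q_{\mathrm{high}}, q_1]$ we have $h(q) < 1/(1-\eps)$, hence $\Phi(q) < q$. The iterates therefore descend monotonically to $q^\infty = q_{\mathrm{high}}$. Since $h(q^*_k) = 1/(1-\eps^*_k) \ge 1/(1-\eps)$, $q^*_k$ sits between the two crossings of $h$ with $y = 1/(1-\eps)$, so $q^*_k \le q_{\mathrm{high}}$, and
\[ \lim_{R \to \infty} p_R \;=\; \frac{q^\infty}{1-\eps} \;\ge\; \frac{q^*_k}{1-\eps} \;\ge\; \frac{q^*_k}{1-\eps^*_k} \;=\; p^*_k. \]
For $p^*_k > 1/2$: $M_k(1/2) = 1/2$ by symmetry of $\Binom(k,1/2)$, while $M_k'(1/2) > 1$ (a short central-binomial-coefficient computation for $k \ge 3$), so $h > 1$ just above $1/2$, forcing $q^*_k > 1/2$ and $p^*_k = M_k(q^*_k) > q^*_k > 1/2$. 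Conversely, if $\eps > \eps^*_k$ then $1/(1-\eps) > \max h$, so $\Phi$ has no positive fixed point and $\Phi(q) < q$ for all $q > 0$; the iterates $q_R$ then decrease all the way to $0$ and $p_R = M_k(q_R) \to 0$, establishing that this adversary forces recursive majority to fail with probability $1 - o(1)$.

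The only delicate step is the reduction itself: one must argue that the greedy ``vote $-1$'' rule on each hostile subtree really is the adversary's worst case, which is where the \emph{strong} adversary assumption (wholesale replacement of any subtree below a spin flip) gets used. Once the reduction is in hand, the proposition follows from the standard unimodal picture of $h(q) = M_k(q)/q$ transitioning from two crossings with the line $y = 1/(1-\eps)$ to none as $\eps$ passes $\eps^*_k$, and the upper and lower halves of the claim come out simultaneously.
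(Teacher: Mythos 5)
Your proof is correct and follows essentially the same route as the paper: reduce to the scalar recursion $p_R \geq M_k((1-\eps)p_{R-1})$ (tight for the adversary that forces every hostile subtree to vote $-1$), then analyze fixed points of $q \mapsto (1-\eps)M_k(q)$, i.e.\ intersections of $M_k(q)/q$ with the level $1/(1-\eps)$. You lean on the unimodality of $q \mapsto M_k(q)/q$ implicitly (``the standard unimodal picture''), whereas the paper spells this out via the strict convexity of $M_k$ on $[0,\tfrac12]$ and concavity on $[\tfrac12,1]$ (citing Mossel); otherwise the arguments are the same, with only cosmetic differences (your adversary paints all leaves $-1$, theirs replaces each hostile subtree with a single path to a $-1$ leaf---both are worst cases by the same stochastic-domination reasoning).
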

\begin{proof}
Let $(T,\sigma)$ be drawn from the strong semirandom model, starting with a $k$-regular tree of height $R$. Let $p_R$ denote the success probability on such a tree. As a base case, we have $p_0 = 1$. For each child of the root, the spin matches that of the root with probability $(1-\eps)$, in which case the child will be correctly labeled by recursive majority with probability $p_{R-1}$, by induction. The other children are under adversary control. Then the number of recovered child labels agreeing with the root label stochastically dominates $\Binom(k,p_{R-1}(1-\eps))$, for any choice of adversary. The recovered root label is the majority vote of the recovered child labels, so we obtain that
$$ p_R \geq M_k(p_{R-1} (1-\eps)). $$
Moreover this inequality is exactly tight for a certain adversary, which replaces each replaceable subtree by a path down to a single leaf of spin opposite to the root.

We are interested in the limiting recovery probability $p_\infty = \lim_{R \to \infty} p_R$; this is the greatest fixed point in $[0,1]$ of the function $p \mapsto M_k(p(1-\eps))$. Equivalently, $q_\infty = p_\infty (1-\eps)$ is the greatest solution in $[0,1-\eps]$ of the fixed-point equation $q/(1-\eps) = M_k(q)$. Geometrically, this is the greatest intersection point of the graph of $M_k$ with the line of slope $1/(1-\eps)$ through the origin.

Before proceeding, it is worth noting the geometry of the graph of $M_k$. From the definition, $M_k$ is monotone on $[0,1]$, with $M_k(0) = 0$, $M_k(\frac12) = \frac12$, and $M_k(1) = 1$. By expanding the probability mass function of the binomial distribution, $M_k$ is a degree $k$ polynomial, and thus smooth. Less obviously, $M_k$ is strictly convex on $[0,\frac12]$ and strictly concave on $[\frac12,1]$ \cite{mossel-recursive}.

The line of slope $2$ ($\eps = \frac12$) does not intersect the graph of $M_k$ nontrivially, since $M_k(q) \leq \frac12$ on $[0,\frac12]$ and $M_k(q) \leq 1$ on $[\frac12,1]$. The line of slope $1$ ($\eps = 0$) certainly intersects the graph of $M_k$ at $\frac12$ and $1$. Hence there exists some maximal $0 < \eps^*_k < \frac12$, at which the line of slope $1/(1-\eps^*_k)$ intersects the graph of $M_k$ tangentially at some $q^*_k$. Equivalently, we can characterize this slope as the maximum slope of the line defined by the origin and any point on the graph of $M_k$:
$$ \frac{1}{1-\eps^*_k} = \max_{q \in (0,1]} \frac{M_k(q)}{q}. $$
By the concavity and convexity properties of $M_k$, the graph of $M_k$ lies below the line of slope $1$ on $(0,\frac12)$, and above on $(\frac12,1)$. It follows that $q^*_k > \frac12$.

As we sweep $q$ from $q^*_k$ to $1$, the slope $R(q)/q$ passes continuously from $1/(1-\eps^*_k)$ to $1$, granting an intersection $q \geq q^*_k$ satisfying $R(q)/q = 1/(1-\eps)$ for every $\eps \in [0,\eps^*_k]$, by the intermediate value theorem. Thus, whenever $0 \leq \eps \leq \eps^*_k$, recursive majority achieves recovery with probability
$$ p_\infty = \frac{q_\infty}{1-\eps} \geq q_\infty \geq q^*_k > \frac12. $$
Whenever $\eps > \eps^*_k$, the two curves do not intersect on $[0,1]$ except at $0$, so the limiting success probability $p_\infty$ is $0$.
\end{proof}

It is interesting that, as the noise $\eps$ varies, we see the success probability jump discontinuously from $p^*_k > \frac12$ to zero. This contrasts with the behavior of recursive majority in the ordinary broadcast tree model, in which the error probability transitions continuously to $\frac12$ at a threshold \cite{mossel-recursive}.

\vspace{1em}\fbox{ \parbox{0.90\textwidth}{\begin{center}
\includegraphics[width=0.45\textwidth]{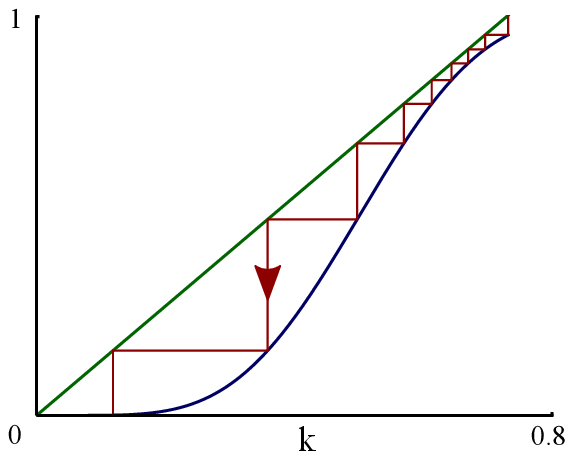}

The curves $R(q)$ and $q/(1-\eps)$ for $k=11$, $\eps = 0.25$. A cobweb plot shows the effect of iterating $q \mapsto R(q)/(1-\eps)$; here there is no nontrivial fixed point, and recursive majority fails. A small decrease in $\eps$ will cause a fixed point to occur near $q=0.683$.
\end{center}}}\vspace{1em}

\vspace{1em}\noindent\fbox{ \parbox{0.97\textwidth}{\begin{center}
\begin{minipage}[b]{0.45\textwidth}
\includegraphics[width=\textwidth]{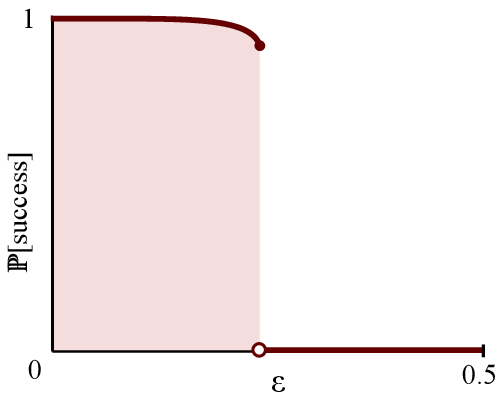}
\end{minipage}
\begin{minipage}[b]{0.45\textwidth}
\includegraphics[width=\textwidth]{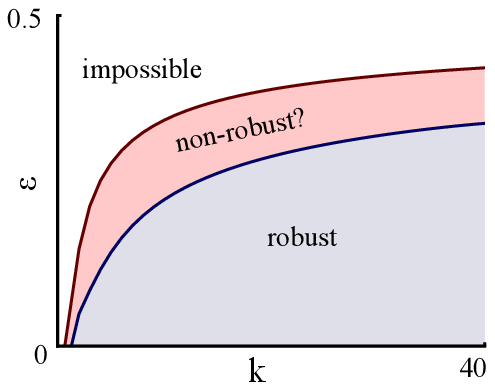}
\end{minipage}\vspace{0.3em}

Left: success probability in the $11$-regular tree. Right: the Kesten--Stigum threshold and the semirandom recursive majority threshold in $\Pois(k)$-birth trees.
\end{center}}}\vspace{1em}

For any fixed $k$, computing the critical value $\eps^*_k$ amounts to maximizing the polynomial $\frac{M_k(q)}{q}$, or equivalently finding the unique root of its derivative in $(0,1]$. For example it is easily computed that $\eps^*_3 = \frac19$. However, we would like some asymptotic understanding of the values $\eps^*_k$ as $k \to \infty$.

\begin{proposition}
The following asymptotic expression holds for $\eps^*_k$ as $k \to \infty$:
$$ \eps^*_k = \frac12 - \left(\frac12 + o(1) \right) \sqrt{\frac{\log k}{k}}. $$
\end{proposition}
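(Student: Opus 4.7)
The plan is to analyze the tangency condition defining $q^*_k$. By Proposition~\ref{prop:eps-star}, $q^*_k$ is the unique point in $(1/2,1)$ satisfying
$$ M_k'(q^*_k) = \frac{M_k(q^*_k)}{q^*_k} = \frac{1}{1-\eps^*_k}. $$
For every fixed $q > 1/2$, $M_k(q) \to 1$ as $k \to \infty$ by the law of large numbers, so $\max_q M_k(q)/q \to 2$. This forces $q^*_k \to 1/2^+$, $M_k(q^*_k) \to 1$, and $\eps^*_k \to 1/2^-$; the remaining task is to determine the exact rate.

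The key step is a sharp asymptotic for $M_k'$ near $q = 1/2$. Starting from the standard identity $\frac{d}{dq}\prob{\Binom(k,q) \geq m} = k\binom{k-1}{m-1} q^{m-1}(1-q)^{k-m}$ (and, for even $k$, adding the derivative of the boundary term $\tfrac12\binom{k}{k/2}q^{k/2}(1-q)^{k/2}$, after which the total collapses algebraically to $(k/4)\binom{k}{k/2}[q(1-q)]^{k/2-1}$) and applying Stirling's formula yields
$$ M_k'(q) = \sqrt{\tfrac{2k}{\pi}}\,\bigl(4q(1-q)\bigr)^{(k-1)/2}\bigl(1+o(1)\bigr) $$
uniformly for $q = 1/2 + o(1)$. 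Writing $q = 1/2 + s$, this becomes $M_k'(q) = \sqrt{2k/\pi}\,e^{-2ks^2}(1+o(1))$ whenever $ks^4 \to 0$. The tangency condition together with $M_k(q^*_k)/q^*_k = 2 + o(1)$ then forces $\sqrt{2k/\pi}\,e^{-2ks^{*2}} = 2+o(1)$, i.e.\ $2ks^{*2} = \tfrac12\log(k/(2\pi)) + o(1)$, and hence
$$ s^* = \tfrac12\sqrt{\tfrac{\log k}{k}}\,\bigl(1+o(1)\bigr). $$

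Rearranging the tangency condition also gives $1-\eps^*_k = q^*_k/M_k(q^*_k)$. Since $s^*\sqrt{k} \sim \tfrac12\sqrt{\log k} \to \infty$, a one-sided Hoeffding (or Gaussian-tail) estimate gives $M_k(q^*_k) = 1 - o(1)$, so
$$ 1 - \eps^*_k = \bigl(\tfrac12 + s^*\bigr)\bigl(1+o(1)\bigr) = \tfrac12 + \tfrac12\sqrt{\tfrac{\log k}{k}}\bigl(1+o(1)\bigr), $$
which rearranges to the stated formula $\eps^*_k = \tfrac12 - \bigl(\tfrac12 + o(1)\bigr)\sqrt{\log k/k}$.

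The main obstacle will be self-consistency: the Gaussian expansion of $M_k'$ is accurate only in the window $s = o(k^{-1/4})$, yet we use it to pin down $q^*_k$. I would address this by first invoking the concavity of $M_k$ on $[1/2,1]$ (noted in the proof of Proposition~\ref{prop:eps-star}) to conclude that $M_k'$ is strictly decreasing on this interval from $\sqrt{2k/\pi}(1+o(1))$ at $q=1/2$ down to $0$ at $q=1$, so the equation $M_k'(q) = 2 + o(1)$ has a unique root. Crude direct evaluations of the binomial formula at $q = 1/2 + c\sqrt{\log k/k}$ for small and large constants $c$ then sandwich $q^*_k$ in an interval of width $O(\sqrt{\log k/k})$, placing it firmly inside the Gaussian regime ($ks^{*4} = O((\log k)^2/k) \to 0$) where the expansion is valid and produces the sharp leading constant.
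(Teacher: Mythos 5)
Your proof is correct and follows essentially the same path as the paper's: extract the tangency condition $M_k'(q^*_k)=M_k(q^*_k)/q^*_k=1/(1-\eps^*_k)$, compute $M_k'$ explicitly via the binomial-tail derivative (the paper cites this as Russo's formula, following \cite{mossel-recursive}), apply the Gaussian approximation $(4q(1-q))^{(k-1)/2}\approx e^{-2ks^2}$ at $q=\frac12+s$, and close with a Hoeffding-type tail estimate showing $M_k(q^*_k)=1-o(1)$. The small differences are expository: the paper pins down $s^*$ by a dichotomy (plugging in $q=\frac12+\frac12\sqrt{\alpha\log k/k}$ and observing $M_k'\to\infty$ for $\alpha<1$, $M_k'\to 0$ for $\alpha>1$, hence $\alpha^*_k=1+o(1)$), whereas you solve the tangency equation $\sqrt{2k/\pi}\,e^{-2ks^{*2}}=2+o(1)$ directly; the two are interchangeable. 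One genuine improvement in your write-up is that you explicitly address the self-consistency issue — that the Gaussian expansion is only valid in the window $ks^4\to 0$ and you must first argue $q^*_k$ lives there — by invoking concavity of $M_k$ on $[\frac12,1]$ (so $M_k'$ is strictly decreasing from $\Theta(\sqrt k)$ to $0$) together with a two-point sandwich; the paper passes over this silently. (Incidentally, the paper's displayed constant $\sqrt{k/\pi}$ after Stirling should be $\sqrt{2k/\pi}$ as you have it, but this does not affect the $o(1)$ conclusion in either version.)
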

\begin{proof}
By the maximality property defining $\eps^*_k$ and $q^*_k$, we must have
$$ 0 = \frac{\mathrm{d}}{\mathrm{d}q} \frac{M_k(q)}{q} \big|_{q=q^*_k} = (q^*_k)^{-2} (q^*_k \cdot M_k'(q^*_k) - M_k(q^*_k)), $$
from which it follows that
$ M_k'(q^*_k) = \frac{M_k(q^*_k)}{q^*_k} = \frac{1}{1-\eps^*_k}.$ Thus, as $\eps^*_k \in [0,\frac12]$, we know that
$$ 1 = \frac{1}{1 - 0} \leq M_k'(q^*_k) = \frac{M_k(q^*_k)}{q^*_k} \leq \frac{1}{1 - \frac12} = 2. $$

In analyzing recursive majority in the random model, Mossel \cite{mossel-recursive} computed this derivative using Russo's formula. Let us assume for the moment that $k$ is odd; the even case is similar. Then
$$ M_k'(q) = (q(1-q))^{(k-1)/2} k \binom{k-1}{(k-1)/2}. $$
If we evaluate this at $q = \frac12 + \frac12 \sqrt{\alpha \log k / k}$, we obtain:
\begin{align*}
M_k'(q) &= \left( \frac14 - \frac14\,\frac{\alpha \log k}{k} \right)^{(k-1)/2} k \binom{k-1}{(k-1)/2} \\
&= (1 + o(1)) \left(1 - \frac{\alpha \log k}{k} \right)^{(k-1)/2} \sqrt{\frac{k}{\pi}} \\
&= (1 + o(1)) \left( 1 - \frac{\frac12 \alpha \log k}{(k-1)/2} \right)^{(k-1)/2} \sqrt{\frac{k}{\pi}} \\
&= (1 + o(1)) \exp(- \frac12 \alpha \log k) \sqrt{\frac{k}{\pi}} \\
&= (1 + o(1)) \sqrt{\frac1\pi} k^{\frac12 - \frac12 \alpha}.
\end{align*}
In the second-to-last step, we have used the asymptotic identity $(1-x/k)^k = (1 + o(1)) \exp(-x)$ as $k \to \infty$, which still holds when $x$ depends on $k$ so long as $x = o(\sqrt{k})$. Now for $\alpha > 1$, the derivative $M_k'(q)$ tends to $0$ as $k \to \infty$, while for $\alpha < 1$, this tends to $\infty$ as $k \to \infty$. Thus, writing $q^*_k = \frac12 + \frac12 \sqrt{\alpha^*_k \log k / k}$, we must have $\alpha^*_k = 1 + o(1)$.
We immediately obtain a lower bound on $\eps^*_k$:
$$ \eps^*_k = 1 - \frac{q^*_k}{M_k(q^*_k)} \geq \frac{q^*_k}{1} = \frac12 + \left(\frac12 + o(1) \right) \sqrt{\frac{\log k}{k}}. $$
For a matching upper bound, apply Hoeffding's inequality for a binomial tail bound, to find
$$ M_k(q) \geq 1 - \exp(-\alpha \log k / 2) = 1 - k^{-\alpha/2}, $$
and use the maximality of $q^*_k$ to find
\begin{align*}
\eps^*_k &= 1 - \frac{q^*_k}{M_k(q^*_k)} \leq 1 - \frac{q}{M_k(q)} \\
&\leq 1 - \frac{\frac12 + \frac12 \sqrt{\alpha \log k / k}}{1 - k^{-\alpha/2}} \\
&= 1 - \left(\frac12 + \frac12 \sqrt{\frac{\alpha \log k}{k}} \right) \sum_{j=0}^\infty (k^{-\alpha/2})^j \\
&= \frac12 - \frac12 \sqrt{\frac{\alpha \log k}{k}} - O(k^{-\alpha/2}),
\end{align*}
so that $\frac12 - (\frac12 + \beta) \sqrt{\frac{\log k}{k}}$ is an asymptotic upper bound for every $\beta > 0$, which completes the proof.
\end{proof}
Note that $p_k^* = q_k^*/(1-\eps)$ is the probability of success at the threshold $\eps^*_k$, so this proof also provides some sense of the critical success probability:
$$ p_k^* = \frac{ \frac12 + (\frac12 + o(1) \sqrt{\log k / k}}{ 1 - \left( \frac12 - (\frac12 + o(1)) \sqrt{\log k / k} \right)} = 1 - o\left(\sqrt{\frac{\log k}{k}}\right). $$
So we observe a very strong threshold: as we vary $\eps$, there is a discrete jump from very likely success of recursive majority to almost-sure failure!

This result does not change if we pass back to the Poisson-birth tree used throughout the rest of this paper, rather than the $k$-regular tree. Here $R_{\Pois(k)}(q) = \EE_{\ell \sim \Pois(k)}[R_\ell(q)]$, which leads to a Poisson-averaged derivative, and the same threshold in $q$ for when the derivative passes from large to small. By Poissonization, we can also write
$$ R_{\Pois(k)}(q) = \problr{ \Pois(kq) > \Pois(k(1-q)) } + \frac12 \problr{ \Pois(kq) = \Pois(k(1-q)) }, $$
and the standard Chernoff bound for ``Poisson races'' yields, for $q = \frac12 + \nu$,
\begin{align*}
1 - R_{\Pois(k)}(q) &\leq \exp\left(-k\left(\sqrt{\frac12 + \nu} - \sqrt{\frac12 - \nu} \right)^2 \right) \\
&= \exp(-k(1-\sqrt{1-4\nu^2})) \\
&= \exp(-2 k (\nu^2 + O(\nu^4))),
\end{align*}
which is as strong as the Hoeffding bound used above.

The results above constitute a proof for Theorem~\ref{thm:intro-recmaj}: recursive majority achieves recovery against the strong semirandom model, so long as
$$ \eps < \frac12 - \left( \frac12 + o(1) \right) \sqrt{\frac{\log k}{k}}, $$
or, rearranging,
$$ \frac{k}{\log k}(1 - 2 \eps)^2 > 1 + o(1), $$
with asymptotics holding as $k \to \infty$.

It is interesting to see how the semirandom model can yield recovery results across an entire family of previously studied distributions. The asymmetric broadcast trees studied in \cite{borgs-asymmetric} may be described as Markov chains on the tree with transition matrix
$$ \mtx{1-\eps + \delta & \eps + \delta \\ \eps - \delta & 1-\eps - \delta}, $$
where the `asymmetry' $\delta$ may be positive or negative. The strong semirandom adversary can simulate these models: starting from the broadcast tree model with noise $\eps + |\delta|$, an adversary searches the tree top-down for either $+1$-to-$-1$ or $-1$-to-$+1$ transitions, depending on the sign of $\delta$, and flips the entire resulting subtree with probability $2|\delta|/(\eps + |\delta|)$, recursing into subtrees. Thus, the semirandom recovery results above guarantee for free that recursive majority is competitive against all of these models simultaneously, as are all algorithms robust to the strong semirandom model. By contrast, these models each admit a simple but brittle ``weighted majority'' reconstruction algorithm, where the weighting must be perfectly tuned to the specific model --- a tiny deviation in the parameters $(\eps,\delta)$ will cause those census algorithms to fail as $R \to \infty$.

\section{Conclusion}

In revisiting the stochastic block model from the perspective of semirandom models, we showed that there is a tension between establishing sharp thresholds and obtaining algorithms with natural robustness guarantees. There are many more classical problems in statistics and machine learning where semirandom models could offer a promising way to move beyond average-case analysis and explore issues related to robustness. In particular, belief propagation is one of the most far-reaching heuristics, but at high `temperature' it can result in algorithms such as majority that we have shown are not robust. At low `temperature' it leads to more robust algorithms, like recursive majority, that are connected to convex optimization. Can exploring `temperature' further lead to interesting, provable tradeoffs between robustness and statistical power that result in a richer understanding of how belief propagation performs in practice?

\section*{Acknowledgements}
The authors would like to thank Philippe Rigollet and the MIT learning theory group for helpful discussion, and Roxane Sayde for reading a draft of this document.

\bibliographystyle{alpha}
\bibliography{}

\appendix

\section{Explicit Computation of the Lower Bound}\label{sec:lower-bound-explicit}

Here we compute an explicit lower bound on the separation $k^6 - k'(k,\eps)^6$ between the average branching factor in the random model and that of the six-level-periodic branching rule described at the end of Section~\ref{sec:treehard}. We will need to assume $k \geq 9$.

Fix $k,\eps,\delta$. We will further modify the six-level-periodic rule by considering a node to be \good if it has three \emph{children} of degree not equal to $2$ and we disregard the contribution of the parent to this rule, to simplify the computation; this only further reduces cutting.

In this one-period tree, the `base' levels are $0$ (the root) and $6$ (the leaves), and the `marking' level is level $3$. In expectation, there are $k^2$ nodes on level $2$. Let us focus on the subtree descending from one of these nodes. Let $p = \prob{\Pois(k) = 1}$; then by Poissonization, before cutting, each node $v$ on level $2$ gives birth to $\Pois(kp)$ degree-$2$ children and $\Pois(k(1-p))$ degree-not-$2$ children. In particular, $v$ is \good with probability $\prob{\Pois(k(1-p)) \geq 3}$. Supposing that this is the case, let us see how many leaves get cut. In expectation, $v$ has $kp$ degree-$2$ children on level $3$, each with exactly one child on level $4$, which is \good with probability $\prob{\Pois(k(1-p)) \geq 3}$. In that case, it gives rise to $kp + \EE[\Pois(k(1-p)) \given\; \geq 3] \, \EE[\Pois(k) \given\; \ne 1]$ leaves, in expectation; but the level-$3$ degree-$2$ node and its subtree gets cut with probability $\delta \eps^2$.

The expected number of leaves cut in the subtree descending from a level-$0$ node is:
\begin{align*}
k^6-(k')^6 &\ge k^2 \cdot \prob{\Pois(k(1-p)) \geq 3}^2 \cdot kp \cdot \delta \eps^2 \cdot (kp + \EE[\Pois(k(1-p)) \given\; \geq 3] \, \EE[\Pois(k) \given\; \ne 1]) \\
& = k^3 p \delta \eps^2 \cdot \prob{\Pois(k(1-p)) \geq 3}^2 \cdot (kp + \EE[\Pois(k(1-p)) \given\; \geq 3] \, \EE[\Pois(k) \given\; \ne 1]).
\end{align*}

\noindent Whenever $k \geq 9$ and $k(1-2 \eps)^2 > 1$, we must have $\eps \geq \frac13$ and $\delta \eps^2 = (1-2\eps)^2 \geq \frac{1}{k}$ (from the definition (\ref{eq:delta}) of $\delta$). In this range we have
$$ k^6 - (k')^6 \geq k^2 p \cdot \prob{\Pois(k(1-p)) \geq 3}^2 \cdot (kp + \EE[\Pois(k(1-p)) \given\; \geq 3] \, \EE[\Pois(k) \given\; \ne 1]) \defeq \mathcal{K}(k). $$
Now, choosing any $\eps$ such that
$$ (k^6 - \mathcal{K}(k))^{1/6} < \frac{1}{(1-2\eps)^2} < k, $$
the semirandom model will be impossible while the random model is possible. This concludes our explicit computation of a separation. We have not made any attempt to optimize it, and that remains an interesting open question for future work.

\section{Proof of Lemma~\ref{lemma:binom-ratio}}\label{sec:binom-ratio-proof}

Recall that $m$ is the number of isolated \marked nodes in $G$, and $g$ is the number of \good nodes. Let $\alpha$ be the number of excess $+1$ spins among the isolated \marked nodes, i.e.\ there are $m/2+\alpha$ \marked isolated nodes of spin $+1$ and $m/2-\alpha$ of spin $-1$. Similarly let $\beta$ be the number of excess $+1$ spins among the \good nodes. Write $\alpha = \alpha_A + \alpha_{BC}$ and $\beta = \beta_A + \beta_{BC}$, where $\alpha_A$ denotes the number of excess $+1$ spins among the isolated \marked nodes of $A$, and likewise for the others. We will need a number of results on the sizes of these values.

\begin{lemma}
\label{lemma:conc}
We have the following results a.a.s.\:
\begin{itemize}
\item $|\alpha_A|,|\beta_A| = O(n^{1/8})$,
\item $|\alpha_{BC}|,|\beta_{BC}| = O(n^{1/2} \log^5 n)$,
\item $m, g = \Theta(n)$.
\end{itemize}
\end{lemma}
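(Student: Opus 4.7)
The lemma splits naturally into three claims of increasing difficulty. The bounds $|\alpha_A|, |\beta_A| = O(n^{1/8})$ are immediate: by Proposition~\ref{prop:coupling} we have $|A| \le |U| = O(n^{1/8})$, and both $\alpha_A$ and $\beta_A$ are signed counts over subsets of $A$, so they are bounded in absolute value by $|A|$.

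For $m, g = \Theta(n)$, the plan is first to compute expectations, then apply concentration. Each vertex is \good with a constant probability depending only on the distribution of its radius-$3$ neighborhood in $\Gpre$; this probability is strictly positive for $k > 1$. Similarly, a \marked node is cuttable (both neighbors of opposite spin) with asymptotic probability $\eps^2$, and then cut with probability $\delta$, so $\mathbb{E}[m] = \Theta(n)$ whenever $\eps \in (0, \tfrac12)$. For concentration I would build a vertex-exposure martingale on the tuple (spins, precursor edges, adversary coins): since \good/\marked/isolated status depends only on an $O(1)$-radius ball in $\Gpre$, revealing the data at one vertex changes $m$ or $g$ only inside such a ball. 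The maximum degree in a sparse SBM is $O(\log n / \log \log n)$ a.a.s., so each martingale increment is polylogarithmically bounded, and Azuma's inequality gives concentration at scale $\sqrt n \cdot \mathrm{polylog}(n) = o(n)$.

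The most delicate bound is $|\alpha_{BC}|, |\beta_{BC}| = O(\sqrt n \log^5 n)$. Focus on $\beta \defeq \sum_v \sigma_v \mathbf{1}[v \text{ is \good}]$. By the global $\pm$-symmetry of the SBM (using $p_1 = p_2 = \tfrac12$), $\Pr[v \text{ \good} \mid \sigma_v = +1] = \Pr[v \text{ \good} \mid \sigma_v = -1]$, so each term has mean zero and $\mathbb{E}[\beta] = 0$. For concentration I would again use a vertex-exposure martingale, at step $i$ revealing $\sigma_i$ together with the edges incident to vertex $i$. Modifying this data alters the \good indicator only for vertices within graph distance $3$ of vertex $i$, a set whose size is controlled by a power of the local maximum degree. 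Since that degree is $O(\log n / \log \log n)$ w.h.p., each martingale increment is $\mathrm{polylog}(n)$, and Azuma gives $|\beta| = O(\sqrt n \,\mathrm{polylog}(n))$ a.a.s.; the exponent $5$ arises from a cubic accounting (one power for the size of the changed neighborhood, another for the number of \good indicators that can flip, and one more from the degree bound compounded with the high-probability window). Subtracting the $O(n^{1/8})$ contribution of $A$ yields the bound on $|\beta_{BC}|$. The argument for $|\alpha_{BC}|$ is analogous, with an additional independent adversary coin per \marked node that is absorbed into the filtration at no cost.

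The main obstacle will be a careful quantification of the martingale increments, since \good status is defined through $\Gpre$, which itself depends on the spins, rather than being a pure function of $\sigma$. One has to also track the good event that local neighborhoods are not much larger than their expectation, and this is the source of the polylogarithmic loss compared to the ideal $\sqrt{n \log n}$ bound one would get from a straight Chernoff argument on independent summands.
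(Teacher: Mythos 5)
Your high-level plan matches the paper's: the first bullet is trivial, the others follow from a vertex-exposure martingale after noting that \good/\marked status is a local property. But you flag the crucial step as ``the main obstacle'' and then do not resolve it, and this is exactly where the work is. Azuma's inequality requires a deterministic bound on the martingale increments; saying ``each increment is polylog$(n)$ w.h.p.'' is not a hypothesis Azuma accepts, and a straightforward vertex-exposure argument on the original function fails because revealing the edges at a high-degree vertex can change \good status at an unbounded number of places. The paper resolves this by truncation: define $\trunc(\Gpre)$ by simultaneously deleting all edges incident to vertices of degree $> \log n$, and define a census function $\mathcal{C}(\sigma,\Gpre)$ in terms of \emph{cuttable} status in $\trunc(\Gpre)$. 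Because radius-$4$ balls in a degree-$\le \log n$ graph have size $O(\log^4 n)$ (radius $4$, not $3$, since changing edges at $u$ can push a neighbor over the degree cutoff), this truncated function has deterministic bounded differences $c = O(\log^4 n)$, so Azuma gives $|\mathcal{C}| = O(\sqrt{n}\,\lambda\, c)$ with failure probability $e^{-\lambda^2/2}$; taking $\lambda = \log n$ produces exactly the $\sqrt{n}\log^5 n$ scale. One then separately argues that $\trunc(\Gpre) = \Gpre$ a.a.s.\ to transfer the bound back. Your ``cubic accounting'' for the exponent $5$ does not match this: it is $\log^4 n$ from the ball size times $\log n$ from the tail-probability window, not three factors. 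Also, the paper handles the adversary's coins by a second, separate Hoeffding step on the cuttable census rather than by absorbing the coins into the vertex-exposure filtration; your alternative may work, but you would need to check that the bounded-differences constant survives, since a single vertex exposure can change which neighboring coins are relevant.
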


\begin{proof}
The first result is easy: $\alpha_A$ and $\beta_A$ are bounded above by $|A|$, which is $O(n^{1/8})$ by assumption. We will establish the remaining concentration results through the bounded differences method; however, this bounded difference property will require controlling the maximum degree of a node. Note that a.a.s.\ every node of $\Gpre$ (and thus $G$) has degree at most $\log n$, by a Chernoff-and-union argument. Let $\trunc(\Gpre)$ denote the graph obtained from $\Gpre$ by simultaneously removing all edges incident to nodes of degree larger than $\log n$. Note now that the radius-$4$ neighborhood of any node in $\trunc(\Gpre)$ has size at most $O(\log^4 n)$.

Let $\mathcal{C}(\sigma,\Gpre)$ denote the census (number of $+1$ spins minus number of $-1$ spins) among those nodes that are `cuttable' in $\trunc(\Gpre)$, i.e.\ nodes $v$ that are degree-$2$ in $\trunc(\Gpre)$, with two opposite-sign (to $v$) neighbors that each have at least $3$ non-degree-$2$ neighbors. This property of $v$ depends only on the radius-$3$ neighborhood in $\trunc(\Gpre)$. Given any vertex $u$ in $\Gpre$, if we add or remove any number of edges incident to $u$ in $\Gpre$, we can only change $\mathcal{C}(\sigma,\Gpre)$ by at most $O(\log^4 n)$, as we can only change the `cuttable' status of the previous and new radius-$4$ neighborhoods of $v$ in $\trunc(\Gpre)$. (Here we need radius $4$ instead of $3$ because by changing edges incident to $v$ we can push the neighbors of $v$ over the $(\log n)$-degree cutoff.) This constitutes a bounded differences property for the function $\mathcal{C}$.

We now apply the following concentration inequality:
\begin{lemma}
Let $F$ be a function of a random graph of size $n$ with independent edges (not necessarily identically distributed). Suppose that, when we add or remove any number of edges incident to any given vertex $v$, the value of $F$ changes by at most $c$. Then
$$ \prob{| F - \EE[F] \given \geq \lambda c \sqrt{n}} \leq 2 \exp(-\lambda^2/2). $$
\end{lemma}
This result is classical, and is based on applying Azuma--Hoeffding to a \emph{vertex exposure martingale}; see for example Chapter~7 of \cite{alon-spencer}.
Note that $\EE[\mathcal{C}] = 0$ due to spin-reversal symmetry. It follows that $|\mathcal{C}| \leq \sqrt{n} \log^5 n$ a.a.s. Moreover, $\trunc(\Gpre) = \Gpre$ a.a.s., so that the true census of cuttable nodes in $\Gpre$ is at most $\sqrt{n} \log^5 n$ a.a.s. Our adversary turns a subset of these cuttable nodes into isolated \marked nodes, choosing each independently with probability $\delta$; applying Hoeffding's inequality separately to the $+1$ and $-1$ cuttable nodes, we see that the census of \marked isolated nodes in $G$ is $O(\sqrt{n} \log^5 n)$ a.a.s. But this census equals $\alpha_A + \alpha_{BC}$, and $|\alpha_A| \leq O(n^{1/8})$, so we have $|\alpha_{BC}| = O(\sqrt{n} \log^5 n)$ a.a.s.
The same argument applies (with a strict subset of the technical mess) to show that $\beta_{BC} = O(\sqrt{n} \log^5 n)$, and that $m$ and $g$ concentrate within $O(\sqrt{n} \log^5 n)$ of their expectations.

It is also straightforward to see that $\EE[m]$ and $\EE[g]$ are $\Theta(n)$; this amounts to showing that the probability of any vertex being \marked isolated in $G$, or \good in $G$, is bounded above zero as $n \to \infty$. But these two properties are local, depending only on a neighborhood in the graph, so this is clear. This completes the proof of the concentration results (Lemma~\ref{lemma:conc}).
\end{proof}

Now we proceed to the proof of Lemma~\ref{lemma:binom-ratio}. We need to show that for a.a.e.\ $(\sigma_{BC},G)$, it holds for all $\sigma_A,\sigma'_A$ that
$$ \frac{ \binom{g/2 + \beta_A + \beta_{BC}}{2}^{m/2 - \alpha_A - \alpha_{BC}} \binom{g/2 - \beta_A - \beta_{BC}}{2}^{m/2 + \alpha_A + \alpha_{BC}}}{ \binom{g/2 + \beta_A' + \beta_{BC}}{2}^{m/2 - \alpha_A' - \alpha_{BC}} \binom{g/2 - \beta_A' - \beta_{BC}}{2}^{m/2 + \alpha_A' + \alpha_{BC}}} = 1 + o(1). $$

We will need sufficiently tight asymptotics for powers of binomials in this regime:
\begin{align*}
& \log \binom{g/2 + \beta_A + \beta_{BC}}{2}^{m/2 - \alpha_A - \alpha_{BC}} \\
&= \left(\frac m2 - \alpha_A - \alpha_{BC}\right) \left(\log\left(\frac g2 + \beta_A + \beta_{BC}\right) + \log\left(\frac g2 + \beta_A + \beta_{BC} - 1\right) - \log 2\right) \\
&= \left(\frac m2 - \alpha_A - \alpha_{BC}\right) \left( 2 \log \frac g2 + \log\left(1 + \frac{\beta_A + \beta_{BC}}{g/2}\right) + \log\left(1 + \frac{\beta_A + \beta_{BC} - 1}{g/2}\right) - \log 2 \right) \\
&= \left(\frac m2 - \alpha_A - \alpha_{BC}\right) \left( 2 \log \frac g2 + \frac{\beta_A + \beta_{BC}}{g/2} + \frac{\beta_A + \beta_{BC} - 1}{g/2} - 4\frac{\beta_{BC}^2}{g^2} - \log 2 \right) + O(n^{-3/8} \log^5 n), \\
\intertext{from the Taylor series for the logarithm,}
&= \left( \frac m2 - \alpha_A - \alpha_{BC} \right) \left( 2 \log \frac g2 - \log 2 \right) + \frac m2 \left( \frac{2\beta_A + 2\beta_{BC} - 1}{g/2} - 4 \frac{\beta_{BC}^2}{g^2}\right) - 2 \frac{\alpha_{BC} \beta_{BC}}{g/2} + O(n^{-3/8} \log^5 n).
\end{align*}
We now apply these asymptotics to the task at hand:
\begin{align*}
& \log \frac{ \binom{g/2 + \beta_A + \beta_{BC}}{2}^{m/2 - \alpha_A - \alpha_{BC}} \binom{g/2 - \beta_A - \beta_{BC}}{2}^{m/2 + \alpha_A + \alpha_{BC}}}{ \binom{g/2 + \beta_A' + \beta_{BC}}{2}^{m/2 - \alpha_A' - \alpha_{BC}} \binom{g/2 - \beta_A' - \beta_{BC}}{2}^{m/2 + \alpha_A' + \alpha_{BC}}} \\
&= \left( \frac m2 - \alpha_A - \alpha_{BC} \right) \left( 2 \log \frac g2 - \log 2 \right) + \frac m2 \left( \frac{2\beta_A + 2\beta_{BC} - 1}{g/2} - 4 \frac{\beta_{BC}^2}{g^2}\right) - 2 \frac{\alpha_{BC} \beta_{BC}}{g/2} \\
&\quad + \left( \frac m2 + \alpha_A + \alpha_{BC} \right) \left( 2 \log \frac g2 - \log 2 \right) + \frac m2 \left( \frac{-2\beta_A - 2\beta_{BC} - 1}{g/2} - 4 \frac{\beta_{BC}^2}{g^2}\right) - 2 \frac{\alpha_{BC} \beta_{BC}}{g/2} \\
&\quad - \left( \frac m2 - \alpha_A' - \alpha_{BC} \right) \left( 2 \log \frac g2 - \log 2 \right) - \frac m2 \left( \frac{2\beta_A' + 2\beta_{BC} - 1}{g/2} - 4 \frac{\beta_{BC}^2}{g^2}\right) + 2 \frac{\alpha_{BC} \beta_{BC}}{g/2} \\
&\quad - \left( \frac m2 + \alpha_A' + \alpha_{BC} \right) \left( 2 \log \frac g2 - \log 2 \right) - \frac m2 \left( \frac{-2\beta_A' - 2\beta_{BC} - 1}{g/2} - 4 \frac{\beta_{BC}^2}{g^2}\right) + 2 \frac{\alpha_{BC} \beta_{BC}}{g/2} \\
&\quad + O(n^{-3/8} \log^5 n) \\
&= O(n^{-3/8} \log^5 n) = o(1),
\end{align*}
as every non-error term cancels. The result now follows: if the logarithm of an expression is $o(1)$ then the expression itself is $1+o(1)$.

\section{Proof of Equation~(\ref{eq:Omega-fact})}\label{sec:Omega-fact-proof}

Enumerate all possible values $E_1, \ldots, E_m$ for $(\sigma_D,G)$. Let $x_i = \prob{E_i}$ and let $y_i = \prob{E_i \text{ and } \sigma\in\Omega}$. We know $\sum_i x_i = 1$ and $\sum_i y_i = \prob{\sigma \in \Omega} = 1 - \epsilon_1$ for some $\epsilon_1 = o(1)$. We want to show that for some $\epsilon_2 = o(1)$ we have $\frac{y_i}{x_i} \ge 1-\epsilon_2$ with probability $p = 1-o(1)$; here the probability is over $i$ drawn proportional to $x_i$. If $I$ is the set of $i$ for which $y_i/x_i \ge 1 - \epsilon_2$, we have $p = \sum_{i \in I} x_i$. This means
$$1-\epsilon_1 = \sum_i y_i = \sum_i x_i \frac{y_i}{x_i} = \sum_{i \in I} x_i \frac{y_i}{x_i} + \sum_{i \notin I} x_i \frac{y_i}{x_i} \le p + (1-p)(1-\epsilon_2)$$
and so $p \ge 1 - \frac{\epsilon_1}{\epsilon_2}$. We need $p = 1-o(1)$ and $\epsilon_2 = o(1)$ so it suffices to take any $\epsilon_2$ such that $\epsilon_2 \to 0$ and $\frac{\epsilon_1}{\epsilon_2} \to 0$, i.e.\ $\epsilon_2$ goes to 0 slower than $\epsilon_1$ does.

\section{Extensions to Dissortative Models}\label{sec:dissortative}

Throughout this paper, we have assumed $a>b$; such block models are often called `assortative'. However, everything does carry through equally in the `dissortative' case $b > a$. Here we briefly sketch the relevant changes.

Our semirandom models are certainly designed for an assortative model, and are entirely too powerful in the dissortative case --- for example, by randomly adding and removing edges appropriately, the current semirandom block model can simulate the Erd\H{o}s--R\'enyi distribution $G(n,k)$ when starting from any dissortative model, which clearly reveals no community structure. Instead, the semirandom model in the dissortative case should add edges between communities, and remove edges within communities, so that these monotone changes are aligned with the latent structure to be recovered. Similarly, the semirandom tree models are able to cut or replace any subtree that follows a same-spin edge.

This leads to many sign changes throughout the paper. We can still couple the resulting graph neighborhoods to a tree distribution; although this tree distribution might look quite different at a glance, it couples perfectly with an assortative tree model, corresponding via $\eps \mapsto 1-\eps$, by flipping all spins at odd levels of the tree. Thus we obtain a random vs.\ semirandom separation for the dissortative tree model. Our recovery results for recursive majority in Section~\ref{sec:recmaj} carry through this coupling also, guaranteeing robust recovery by recursive anti-majority in the dissortative tree model.

Much of the ``no long-range correlations'' argument of Section~\ref{sec:no-lrc} carries through unchanged, but precursors $\Gpre$ of $G$ now reconnect \marked isolated nodes with two \emph{same-spin} \good nodes. Hence the new formula for $|L(\sigma,G)|$ is
$$ |L(\sigma,G)| = \binom{g/2 + \beta}{2}^{m/2 + \alpha} \binom{g/2 - \beta}{2}^{m/2 - \alpha}, $$
which still satisfies Lemma~\ref{lemma:binom-ratio}, by negating each $\beta$ variable everywhere in the proof in Appendix~\ref{sec:binom-ratio-proof}. So we also obtain a random vs.\ semirandom separation in the dissortative block model. 

The SDP upper bounds in Section~\ref{sec:upper-bound} carry through with very few changes of sign. One verifies that the unchanged reference objective $\frac{a-b}{2n} \sigma \sigma^\top$ does maximize $\langle S,- \rangle$ where $S$ is the matrix form of any semirandom change as redefined above. A sign does flip in the proof of Proposition~\ref{prop:new-sdp}: we now have
$$ \gamma_v = \frac{b-a}{2} - O(\log^2 n / \sqrt{n}). $$
Overall we obtain the same guarantees on semirandom partial recovery as in the assortative model, requiring $b > 20$ rather than $a > 20$.

\end{document}